\DeclareMathAlphabet{\pazocal}{OMS}{zplm}{m}{n}
\DeclareMathAlphabet{\mathcal}{OMS}{cmsy}{m}{n}
\newtheorem{definition}{Definition}
\newtheorem{theorem}{Theorem}
\newtheorem{lemma}[theorem]{Lemma}
\newcommand{\X}{\mathbf{X}}
\newcommand{\E}{\mathbb{E}}
\DeclareMathOperator*{\argmin}{arg\,min}
\newcommand{\Var}[1]{\ensuremath{\mathrm{Var}\left[\, #1 \,\right]}}
\newcommand{\EV}[1]{\ensuremath{\mathbb{E}\left[\, #1 \,\right]}}
\renewcommand{\Pr}[1]{\ensuremath{\mathsf{Pr}\left[#1\right]}\xspace}
\newcommand{\A}{\mathcal{A}}
\newcommand{\mypara}[1]{\vspace*{0.06in}\noindent\textbf{#1 }}
\newcommand{\mysubpara}[1]{\vspace*{0.06in}\noindent\underline{#1 }}
\newcommand{\aux}{\ensuremath{\mathtt{aux}}\xspace}
\newcommand{\norm}[1]{\ensuremath{ \left\| {#1}\right\|}\xspace}
\newcolumntype{L}{>{\arraybackslash}m{4cm}}
\newcommand{\members}{\ensuremath{\mathcal{M}}}
\newcommand{\centroids}{\ensuremath{\mathbf{C}}}
\newcommand{\cost}[1]{\ensuremath{\mathtt{cost}_{#1}}}
\newcommand{\support}{\ensuremath{\mathtt{Support}}}
\newcommand{\rel}[2]{\ensuremath{_{#1}^{(#2)}}}
\newcommand{\opt}{\ensuremath{\mathtt{OPT}}}
\newcommand{\range}{\ensuremath{\mathbb{R}}}
\newcommand{\polylog}{\ensuremath{\text{ polylog}\xspace}}
\newcommand{\fm}{\ensuremath{\boldsymbol{\alpha}}}
\newcommand{\FM}{\ensuremath{\boldsymbol{A}}}
\newcommand{\cca}{\ensuremath{\mathbf{w'}}}
\newcommand{\grid}{\ensuremath{\mathbf{G}}}
\newcommand{\kmeans}{$k$-means\xspace}
\newcommand{\geometric}{\ensuremath{\mathtt{Geometric}} \xspace}
\newcommand{\uniform}{\ensuremath{\mathtt{Uniform}} \xspace}
\newcommand{\id}{\ensuremath{\mathtt{id}}}
\newcommand{\harmonic}[1]{\ensuremath{\mathtt{HarmonicMean}\left( #1 \right) }\xspace}
\newcommand{\vzeta}{\ensuremath{\boldsymbol{\zeta}}}
\newcommand{\dpfmps}{\ensuremath{\mathsf{DPFMPS}}}
\newcommand{\dpfmpsgen}{\textsf{DPFMPS-Gen}\xspace}
\newcommand{\dpfmpsest}{\textsf{DPFMPS-BasicEst}\xspace}
\newcommand{\dpfmpsimprove}{\textsf{DPFMPS-2PEst}\xspace}
\newcommand{\dpfm}{\ensuremath{\mathtt{DPFM}} \xspace}
\newcommand{\indca}{\ensuremath{\texttt{IND-LAP}}\xspace}
\newcommand{\localclustering}{\textsf{LocCluster}\xspace}
\newcommand{\membershipencoding}{\textsf{MemEnc}\xspace}
\newcommand{\weightestimate}{\textsf{WeightEst}\xspace}
\newcommand{\grr}{\textsf{GRR}\xspace}
\newcommand{\olh}{\textsf{OLH}\xspace}
\newcommand{\ldpca}{\textsf{LDP-AGG}\xspace}
\newcommand{\ldpimprove}{\textsf{LDP-AGG-2PEst}\xspace}
\newcounter{algsubstate}
\renewcommand{\thealgsubstate}{\alph{algsubstate}}
\newenvironment{algsubstates}
  {\setcounter{algsubstate}{0}%
   \renewcommand{\State}{%
     \stepcounter{algsubstate}%
     \Statex {(\thealgsubstate):}\space}}
  {}
\newcommand{\Lapp}[1]{\mathsf{Lap}\left(#1\right)\xspace}
\newtheorem{fact}{Fact}[section]
\newcommand{\changestart}{\begin{color}{blue}}
\newcommand{\changeend}{ ~\!\!\end{color}}
\begin{document}
\title{Differentially Private Vertical Federated Clustering}

\author{Zitao Li}
\affiliation{%
  \institution{Purdue University}
}
\email{li2490@purdue.edu}
\author{Tianhao Wang}
\affiliation{%
  \institution{University of Virginia}
}
\email{tianhao@virginia.edu}
\author{Ninghui Li}
\affiliation{%
  \institution{Purdue University}
}
\email{ninghui@purdue.edu}

\settopmatter{printfolios=true} 

\begin{abstract}
In many applications, multiple parties have private data regarding the same set of users but on disjoint sets of attributes, and a server wants to leverage the data to train a model.  To enable model learning while protecting the privacy of the data subjects, we need vertical federated learning (VFL) techniques, where the data parties share only information for training the model, instead of the private data.  However, it is challenging to ensure that the shared information maintains privacy while learning accurate models. 
To the best of our knowledge, the algorithm proposed in this paper is the first practical solution for differentially private vertical federated \kmeans clustering, where the server can obtain a set of global centers with a provable differential privacy guarantee. 
Our algorithm assumes an untrusted central server that aggregates differentially private local centers and membership encodings from local data parties. 
It builds a weighted grid as the synopsis of the global dataset based on the received information.
Final centers are generated by running any \kmeans algorithm on the weighted grid.
Our approach for grid weight estimation uses a novel, light-weight, and differentially private set intersection cardinality estimation algorithm based on the Flajolet-Martin sketch.   To improve the estimation accuracy in the setting with more than two data parties, we further propose a refined version of the weights estimation algorithm and a parameter tuning strategy to reduce the final \kmeans loss to be close to that in the central private setting.
We provide theoretical utility analysis and experimental evaluation results for the cluster centers computed by our algorithm and show that our approach performs better both theoretically and empirically than the two baselines based on existing techniques.
\end{abstract}

\maketitle
\thispagestyle{empty}

\section{Introduction}

Data privacy laws and regulations such as GDPR~\cite{gdpr} and California Consumer Privacy Act~\cite{caprivact} bring more restrictions and compliance requirements for the data collectors, including the companies and some government agencies.
However, the demand for larger and more comprehensive datasets is increasing as political and business decisions become more and more reliant on different machine learning models.  
In many applications, data about entities are partitioned among multiple data parties and they cannot bring the data together, due to privacy restrictions. 
\emph{Federated learning} (FL) with the \emph{cross-silo} setting~\cite{kairouz2021advances} is a computation concept that can enable these data parties to use their data to train useful models collaboratively without sharing the data. 
But FL by itself cannot provide any provable privacy guarantee in the sense that adversaries can still infer whether one user's data is in the training set (i.e., membership attack~\cite{dwork2017exposed, shokri2017membership, li2021membership}) or even recover the training data (i.e., reconstruction attack~\cite{dwork2017exposed, carlini2019secret, zhang2020secret}) by examining the shared information from local data parties.
As a result, FL needs to be deployed with other privacy techniques,  such as those for satisfying \emph{differential privacy} (DP)~\cite{DMNS06}, to provide provable privacy guarantees.

This paper focuses on an important federated learning setting, \emph{vertical federated learning} (VFL).
Its difference from horizontal federated learning (HFL) is that all parties have data from the {\it same set of users, but their data attributes are different from each other}, while HFL assumes that all the data parties have data from different sets of users but all local datasets have the same attributes~\cite{mcmahan2018dp-rnn, wu2020value, mcmahan2017dpfedavg, wei2020federated}.
VFL has been an interesting topic in the research area since the early 2000s \cite{vaidya2003privacy, vaidya2005privacy, wu13pivot, yunhong2009privacy, gupta2018distributed}.
The papers are usually motivated by medical or financial use cases, where the users' private data are not allowed to be shared between data parties.
More recently, VFL has been adapted by some fintech companies for more real-world services.
For example, WeBank demonstrates how they do risk-control for car insurance cooperating with car rental companies with VFL techniques~\cite{webank}.
Compared with HFL, VFL tasks usually consider fewer data parties. 

How to perform VFL while not leaking private information has been an interesting topic in the security and privacy community~\cite{DN04,vaidya2003privacy, vaidya2005privacy}.
Many existing VFL approaches are based on secure multiparty computation (SMC), including learning classification tree models~\cite{wu13pivot, liu2020federated-forest, vaidya2005privacy}, regression models ~\cite{gu2020federated} and clustering models~\cite{vaidya2003privacy}.
However, the SMC-based methods' final results cannot provide provable resistance to membership or reconstruction attacks, and they usually have high computation and communication overheads.
Other literature employs DP as the security notion to provide resistance to those attacks.
More recently, researchers have developed VFL algorithms with DP guarantee for matrix factorization~\cite{li2022vldb}, regression~\cite{wang2020hybrid}, and boosting model~\cite{chen2020vafl}.
We employ DP as the privacy notion for VFL clustering problem in this paper.

Many problems are more challenging in the VFL setting than in the central setting and the HFL setting.
One example is the \kmeans clustering problem, in which desired solutions minimize the distances between user data points and their closest cluster centers.  
The \kmeans algorithms developed for the central DP setting~\cite{su2016kmeans, ghazi2020dpkmean} require access to all dimensions of data points to compute distances for updating cluster centers. 
In the HFL setting, each data party also has all dimensions of some data points, and can thus compute these distances.  
In the VFL setting, however, each data party has access to only a subset of features.  As we assume there is only an untrusted server for information aggregation, each data party wants to protect its private user data, and aligning each user's record across different parties under the DP privacy constraint is hard.
Thus, the challenge is to \emph{design a differentially private algorithm in which data parties share messages to convey the necessary information for deriving the final global centers.}
There are two expectations regarding the shared messages.
1) The messages satisfy DP and convey local information precisely, even with a small privacy budget.
Note that a user's information is spread among different parties in VFL and the privacy budget needs to be split among all data parties.
2) The messages not only contain synopses of local information, but also can be ``composed'' by the central server to reconstruct correlations of inter-party features.
The synopses of central DP \kmeans algorithms~\cite{su2016kmeans, ghazi2020dpkmean} do not have such a ``composability'' property and the correlations between the inter-party attributes are lost from the central server's view.
We show that correlation retaining is essential for the VFL \kmeans problem, and the accuracy of the estimated correlations largely affects the final cost of \kmeans problem in our experiments.

\mypara{Our contributions.}
This paper proposes a solution for the differentially private vertical federated \kmeans with multiple data parties and an untrusted central server.
All the information shared by data parties in the process, as well as the final result, satisfy DP.
The key idea is to have each data party generate a differentially private ``data synopsis'', including the {\it partial centers} and encoded {\it membership information} that describes local \kmeans results based on its partial view.
The server then runs a central \kmeans on the Cartesian product of all partial centers considering their weights, where the weight for each joint center is an estimate of the cardinality of intersection among users belonging to each partial center.
Our main contributions are summarized as follows:

\mypara{$\bullet$}
We propose the first (according to our knowledge) differentially private VFL \kmeans algorithm with an untrusted central server. 
We let each data party encode its memberships of the local clusters into Flajolet-Martin (FM) sketches and take advantage of the parallel composition property of DP to reduce the amount of noise (Algorithm~\ref{algo:dpfmps}). 
Because FM sketches support only union operations while we need intersection operations, we design an algorithm (Algorithm~\ref{algo:dpfms-est}) with inclusion-exclusion rules for the server to estimate the intersection cardinalities of memberships.
We also prove a theoretical utility guarantee for the final global $k$ centers derived by the server with limited computation and communication overhead.

\mypara{$\bullet$}
The cardinality estimation errors can grow very fast when the number of data parties in VFL increases.
To improve the estimation accuracy when more than two data parties are involved, we propose a heuristic estimation algorithm (Algorithm~\ref{algo:multiparty}).
It estimates the intersection cardinalities of memberships from all parties based on pair-wise intersection cardinalities and reduces estimation errors significantly.
In addition, we propose a heuristic method to choose the local clustering parameter for smaller final losses.

\mypara{$\bullet$}
Our experiments show that our proposed methods can outperform the other baseline methods and even approach the non-private VFL \kmeans algorithm when sufficient users are in the dataset.
We also conduct ablation studies to empirically demonstrate the impact and effectiveness of each component of our algorithm. 

\mypara{Roadmap.}
We revisit the necessary background information in Section~\ref{sec:background}; we give an overview of the VFL clustering problem and our solution in Section~\ref{sec:algo}, and provide more details of the key components in Section~\ref{sec:membership} and \ref{sec:improve}; experimental results are shown in Section~\ref{sec:experiments}; Section~\ref{sec:related} discusses the related work from different perspectives, followed by a conclusion in Section~\ref{sec:conclustion}.
Because of the space limitation, the proofs and additional experimental results are provided in the appendix of the full version~\cite{manuscript}.
\vspace{-0.2cm}
\section{Background}
\label{sec:background}

\subsection{Differential Privacy}

\begin{definition}[Differential privacy \cite{DMNS06}]
A randomized algorithm $\A$ is $(\epsilon, \delta)$-differentially private if for any pair of datasets $\X$, $\X'$ that differ in one record and for all possible subset $O$ of possible outputs of algorithm $\A$,
$ \Pr{\A(\X) \in O} \leq e^\epsilon \Pr{\A(\X') \in O} + \delta .$
\end{definition}

Three properties of DP are frequently used to build complicated algorithms.  
Assume that there are two subroutines $\A_1(\cdot)$ and $\A_2(\cdot)$ that can provides $(\epsilon_1, \delta_1), (\epsilon_2, \delta_2)$-DP protection.
\emph{Sequential composition} states that $\A_1(\X, \A_2(\X))$ satisfies $(\epsilon_1 + \epsilon_2, \delta_1+\delta_2)$-DP.
On the other hand, \emph{parallel composition} states that combining two subroutines each only accessing a non-overlapping sub-dataset $\X_1$ or $\X_2$ satisfies $(\max\{\epsilon_1, \epsilon_2\}, \max\{\delta_1, \delta_2\})$-DP.
A third property, \emph{post-processing} property states that, any data independent operation on an $(\epsilon, \delta)$-DP algorithm's result still satisfies $(\epsilon, \delta)$-DP. 

\mypara{Laplace mechanism.}
One of the most classic DP mechanisms, Laplace mechanism, adds Laplace noise to the return of a function $f$ to ensure the result is differentially private.
The variance of the noise depends on $\mathsf{GS}_f$, the \emph{global sensitivity} or the $L_1$ sensitivity of $f$, defined with a pair of neighboring datasets as,
$
\mathsf{GS}_f = \max\limits_{\X \simeq \X'} ||f(\X) - f(\X')||_1.
$
The Laplace mechanism mechanism $\A$ is formalized as $ \A_f(\X) =f(\X) + \Lapp{\frac{\mathsf{GS}_f}{\epsilon}}$, 
where $\Lapp{b}$ denotes a random variable sampled from the zero-mean Laplace distribution with scale $b$.
When $f$ outputs a vector, $\A$ adds independent samples of $\Lapp{\frac{\mathsf{GS}_f}{\epsilon}}$ to each element of the vector.

\mypara{Tighter DP sequential composition.}
The notion of R\'enyi Differential Privacy (RDP)~\cite{mironov2017renyi} provides a succinct way to track the privacy loss from a composition of multiple mechanisms by representing privacy guarantees through moments of privacy loss.

\begin{definition}[R\'enyi Differential Privacy~\cite{mironov2017renyi}]
A mechanism $\mathcal{M}: \mathcal{X} \rightarrow \mathcal{Y}$ is said to satisfy $(\nu, \tau)$-RDP if the following holds for any two neighboring datasets $\X, \X^\prime$ 
\begin{align*}
    \frac{1}{\nu-1}\log \E_{o\sim \A(\X)} \left[ \left( \frac{\Pr{\A(\X) = o}}{\Pr{\A(\X') = o}}\right)^{\nu} \right] \leq \tau.
\end{align*}
\end{definition}
\begin{fact}[RDP Sequential Composition~\cite{mironov2017renyi}]
    If $\A_1$ and $\A_2$ are $(\nu, \tau_1)$-RDP and $(\nu, \tau_2)$-RDP respectively then the mechanism combining the two $g(\A_1(\X), \A_2(\X))$ is $(\nu, \tau_1 + \tau_2)$-RDP.
\end{fact}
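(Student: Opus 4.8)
The plan is to recast the $(\nu, \tau)$-RDP condition as a bound on a \renyi-type divergence between output distributions, and then show that this divergence is additive under composition. First I would observe that the left-hand side of the RDP definition is exactly the scaled log-moment defining the order-$\nu$ \renyi divergence between the two output distributions, which we denote $\Div{\nu}{\A(\X)}{\A(\X')}$; thus $\A$ being $(\nu,\tau)$-RDP is equivalent to $\Div{\nu}{\A(\X)}{\A(\X')} \le \tau$ for every pair of neighbors $\X \simeq \X'$. The goal then becomes showing that the joint mechanism $\M(\X) \eqdef (\A_1(\X), \A_2(\X))$ satisfies $\Div{\nu}{\M(\X)}{\M(\X')} \le \tau_1 + \tau_2$, after which the claim for $g(\A_1(\X), \A_2(\X))$ follows immediately from the data-processing (post-processing) inequality for \renyi divergence, since $g$ is a map applied to the joint output that does not access $\X$ again.

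The core step is to bound the exponentiated divergence $\myexp{(\nu-1)\Div{\nu}{\M(\X)}{\M(\X')}}$, which is the log-free moment appearing in the definition evaluated on the joint distribution. I would write the joint density of $\M(\X)$ at an outcome $(o_1, o_2)$ via the chain rule as the product of the marginal density of $o_1$ under $\A_1(\X)$ and the conditional density of $o_2$ given $o_1$, and likewise for $\M(\X')$. Substituting this factorization into the moment and applying the chain rule of expectation expresses it as an outer expectation over $o_1$ of the first-mechanism likelihood ratio raised to $\nu$, multiplied by an inner expectation over $o_2$ of the second-mechanism (conditional) likelihood ratio raised to $\nu$. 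Because $\A_2$ is $(\nu, \tau_2)$-RDP for every neighboring pair, this inner expectation is bounded by $\myexp{(\nu-1)\tau_2}$ uniformly in $o_1$; pulling this constant factor out, the remaining outer expectation is bounded by $\myexp{(\nu-1)\tau_1}$ using the RDP guarantee of $\A_1$. Multiplying the two bounds yields $\myexp{(\nu-1)(\tau_1+\tau_2)}$, and taking logarithms and dividing by $\nu-1 > 0$ gives the desired bound $\tau_1 + \tau_2$.

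The main obstacle is the factorization of the inner expectation in the adaptive setting, where $\A_2$ may read the output of $\A_1$. The subtlety is that the inner bound must hold pointwise for every value of the conditioning variable $o_1$, not merely on average, so that it can be extracted as a constant multiplier from the outer expectation; this is precisely what the RDP definition provides, since it quantifies over all neighboring pairs regardless of any fixed auxiliary input, so the conditional distributions of $\A_2$ on $\X$ and on $\X'$ still differ by at most $\tau_2$ in order-$\nu$ \renyi divergence for each fixed $o_1$. In the simpler non-adaptive case, where $\A_1$ and $\A_2$ use independent randomness, this degenerates to the multiplicativity of expectation over independent factors and no uniform-bound argument is needed. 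A minor but necessary point to state explicitly is that $\nu > 1$ keeps $\nu - 1$ positive, which preserves the direction of the inequality when exponentiating and when taking logarithms at the end.
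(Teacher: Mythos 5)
Your proof is correct: the chain-rule factorization of the joint density, the pointwise (uniform in $o_1$) bound on the inner conditional moment by $\myexp{(\nu-1)\tau_2}$, and the final application of the data-processing inequality to handle $g$ together constitute the standard argument for RDP composition, and it goes through verbatim with the paper's stated moment definition. The paper itself states this as a Fact cited from Mironov's RDP paper without reproducing a proof, and your argument is essentially the proof given in that cited reference (including the correct handling of the adaptive case, which is strictly more general than the non-adaptive composition $g(\A_1(\X), \A_2(\X))$ actually claimed here), so there is nothing to flag.
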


\begin{fact}[RDP to $(\epsilon, \delta)$-DP~\cite{mironov2017renyi}]
    If a mechanism is $(\nu, \tau)$-RDP, then it also satisfies $(\tau + \frac{\log 1/\delta}{\nu - 1}, \delta)$-DP.
\end{fact}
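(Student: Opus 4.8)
The plan is to follow the standard privacy-loss random variable argument: translate the moment bound supplied by the RDP hypothesis into a high-probability tail bound via Markov's inequality, and then handle an arbitrary output event by splitting it according to whether the privacy loss is large. Write $P = \A(\X)$ and $Q = \A(\X')$ for the output distributions on two neighboring datasets, and set the target parameter $\epsilon = \tau + \frac{\log 1/\delta}{\nu-1}$. Define the privacy loss random variable $Z = \log\frac{\Pr{\A(\X)=o}}{\Pr{\A(\X')=o}}$, where $o$ is drawn from $P$. The goal is to establish $\Pr{\A(\X)\in O} \le e^{\epsilon}\Pr{\A(\X')\in O} + \delta$ for every measurable event $O$, which is exactly the defining inequality of $(\epsilon,\delta)$-DP.

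First I would recast the RDP hypothesis as a control on an exponential moment of $Z$. Rewriting the Rényi divergence in its expectation-over-$\A(\X)$ form, the assumption $\Div{\nu}{\A(\X)}{\A(\X')}\le\tau$ is equivalent to $\E_{o\sim P}\left[e^{(\nu-1)Z}\right] \le e^{(\nu-1)\tau}$. Applying Markov's inequality to the nonnegative variable $e^{(\nu-1)Z}$ at the threshold $e^{(\nu-1)\epsilon}$ then yields
\[
\mathsf{Pr}_{o\sim P}\left[Z > \epsilon\right] \le \frac{\E_{o\sim P}\left[e^{(\nu-1)Z}\right]}{e^{(\nu-1)\epsilon}} \le e^{(\nu-1)(\tau-\epsilon)} = \delta .
\]
The crucial observation is that $\epsilon$ was chosen precisely so that $(\nu-1)(\tau-\epsilon) = -\log 1/\delta$, forcing the right-hand side to be exactly $\delta$; this calibration is the source of the additive $\frac{\log 1/\delta}{\nu-1}$ term appearing in the statement.

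Next I would decompose an arbitrary event $O$ into a controlled part and a tail part, $\Pr{\A(\X)\in O} = \mathsf{Pr}_{o\sim P}\left[o\in O,\ Z\le\epsilon\right] + \mathsf{Pr}_{o\sim P}\left[o\in O,\ Z>\epsilon\right]$. The second summand is at most $\mathsf{Pr}_{o\sim P}\left[Z>\epsilon\right]\le\delta$ by the previous step. For the first summand, on the event $\{Z\le\epsilon\}$ the pointwise density inequality $\Pr{\A(\X)=o}\le e^{\epsilon}\Pr{\A(\X')=o}$ holds; integrating it over the region $\{o\in O,\ Z\le\epsilon\}\subseteq O$ bounds that summand by $e^{\epsilon}\Pr{\A(\X')\in O}$. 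Adding the two bounds gives $\Pr{\A(\X)\in O}\le e^{\epsilon}\Pr{\A(\X')\in O}+\delta$, and since $O$ and the ordered pair of neighbors were arbitrary, swapping $\X$ and $\X'$ delivers the symmetric direction as well.

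I do not expect a deep obstacle, since the argument is elementary once the moment-to-tail translation is in place; the real work is bookkeeping. The one point demanding care is matching the exponent of the exponential moment to the Rényi order in the RDP definition, so that the Markov threshold produces a tail probability of exactly $\delta$ rather than some other power of $\delta$ scaled by a function of $\nu$. A secondary caveat is that the density-ratio manipulations implicitly require the support of $P$ to be contained in that of $Q$; otherwise $Z=+\infty$ on a set of positive $P$-measure and the RDP moment diverges, so the conversion is meaningful only when the assumed $\tau$ is finite, which the hypothesis guarantees.
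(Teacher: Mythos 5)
Your proof is correct and follows exactly the standard argument---Markov's inequality applied to the exponential moment $e^{(\nu-1)Z}$ of the privacy-loss variable at the calibrated threshold $\epsilon=\tau+\frac{\log 1/\delta}{\nu-1}$, followed by splitting an arbitrary event on $\{Z\le\epsilon\}$---which is precisely the proof in Mironov's paper that this Fact cites; the paper itself states the Fact without reproducing a proof. The one caveat is that your moment identity $\E_{o\sim \A(\X)}\bigl[e^{(\nu-1)Z}\bigr]\le e^{(\nu-1)\tau}$ presumes the standard R\'enyi-divergence definition (exponent $\nu-1$ when the expectation is taken over $\A(\X)$), whereas the paper's displayed definition carries exponent $\nu$ there---a typo in the paper rather than a gap in your argument, and the conversion in fact goes through under either reading.
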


With the sequential composition of RDP and the conversion to $(\epsilon, \delta)$-DP, the privacy loss of $M$ sequential mechanism can be improved from $O(M\epsilon)$ to the order of $O(\sqrt{M} \epsilon)$.

\subsection{\kmeans Clustering}

The \kmeans problem~\cite{macqueen1967kmeans} is one of the most well-known clustering problems. 
With a parameter $k$ and a dataset $\X \in \mathbb{R}^{n\times m}$, the goal of the problem is to output a set of $k$ centers $\centroids$ that can minimize the distance of data points to the nearest centers.
The cost (or loss function) is formalized as
$\cost{\X}(\centroids) \coloneqq \sum_{x \in \X} (\min_{c\in \centroids}\norm{x - c}_2^2)$.

The cost function can be extended to weighted data sets, where each data point $x$ has a weight $w(x)$ associated with it.
It is equivalent to the scenario having $w(x)$ copies of the same data point $x$ in $\X$.
The cost becomes $\cost{\X}(\centroids) \coloneqq \sum_{x \in \X} w(x) \cdot (\min_{c\in \centroids}\norm{x - c}^2)$.

Theoretically, there is always a set of optimal $k$ centers and the cost is denoted as $\opt_{\X}^k = \min_{|\centroids| = k} \cost{\X}(\centroids)$.
However, finding the optimal set of centers is NP-hard~\cite{aloise2009nphard}.  
Research interests usually fall on approximate algorithms with polynomial running time. 
For example, the most well-known algorithm, Lloyd’s algorithm~\cite{hartigan1979lloyd} has time complexity $O(nmk)$.
A notation, $(\beta, \lambda)$-approximate, is used to describe the utility guarantee of an approximate algorithm, such that $\cost{\X}(\centroids) \leq \beta \cdot \opt^{k}_{\X} + \lambda$ with any $\X$ and $k$, where $\beta$ is called \emph{approximate ratio}.
The best known non-private algorithm has $\lambda = 0$ and $\beta=1+\eta$ for any fixed $\eta > 0$ when $k$ is a constant \cite{matouvsek2000approximate}; 
but it is unavoidable for DP \kmeans to have $\lambda > 0$~\cite{ghazi2020dpkmean}.

\subsection{Cardinality Estimation Sketches}
Sketches usually refer to a family of succinct data structures that can store some basic information about a large amount of data with very low space and time complexity. 
One of the most well-known sketches is the Flajolet-Martin (FM) sketch~\cite{flajolet1985}, which is designed to estimate the cardinality (i.e., the number of distinct elements) of a (multi)set $\members$.
In FM sketch, all the elements in $\members$ are hashed with $H_\zeta(\cdot)$, an ideal geometric-value hash function.
The estimate of the cardinality is $(1+\gamma)^{\alpha}$, where $\alpha = \max\{H_\zeta(x) | x \in \members \}$ and $\gamma$ is the parameter of hash function.
Typically, multiple (e.g., $1000$) hash functions ($H$ with different hash keys $\zeta$) are used, and we take the harmonic/geometric average of those maximums as the final $\alpha$.
One appealing advantage of FM sketch is that it is mergeable. 
With the same hash key, sketches from different (multi)sets can be merged by taking the maximum, and we can derive the estimate of the cardinality of the union of those (multi)sets.
With this property, we can estimate the cardinalities of the union/intersection of the set in the federated setting without leaking private information.

A recent series of research results show that if the cardinality is large enough, a family of hash-based, order-invariant sketches, including FM sketch, can satisfy DP without adding any additional noise~\cite{smith2020fmsketch, hu2021ca, dickens2022allsketch}.  We will introduce more details in Section~\ref{subsec:psi}.

\section{Overview of Problem and Approach} \label{sec:algo}
In this section, we define the problem of differentially private $k$-means under vertical federated leaning (VFL), provide an overview of our four-phase approach, and discuss the first phase solution.  
The problem of VFL \kmeans (without DP) has been studied before by Ding et al.~\cite{ding2016k}.  
We thus describe the approach in~\cite{ding2016k}, the new challenges when we need to satisfy DP, and our framework.

\subsection{Problem Formulation}

We formalize the VFL $k$-means clustering as the following. 

\mypara{Vertical federated learning (VFL).}
Federated learning~\cite{kairouz2021advances} focuses on learning tasks among multiple data parties without directly sharing their local data. 
VFL assumes that each data party's data are with different features of the same set of users. 
Consider a global view of dataset $\X$, where each row corresponds to a user, and each column corresponds to a feature.  
The setting of VFL is that $\X$ is vertically split into $\X=[\X\rel{}{1} | \ldots, | \X\rel{}{S}]$, so that each data party $\ell\in [S]$ has a local dataset $\X\rel{}{\ell}$ with $m\rel{}{\ell}$ features. 
We assume each user is labeled with a unique $\id$ (e.g., MAC address) and is consistent across all the data parties.

\mypara{Security model.}
We assume that an untrusted central server orchestrates the process and derives the results.  
Different from the local setting of DP, we assume that the data parties have common interests in protecting their data privacy, and none of them collude with the server.
We want to ensure that all the information shared by data parties is differentially private to the central server.
Thus, the server cannot learn any private information, even if it is malicious.
However, to ensure the usefulness of the final output, we need to assume that the server does not deviate from the algorithm.

\mypara{Goal.}
All the $S$ parties want to cooperatively generate differentially private $k$ centers $\centroids$ in the full domain (with all attributes) that can approximately minimize the \kmeans loss $\cost{\X}(\centroids)$.
The challenge is that each party only has its local view (a few attributes), where the final centers are computed with a full view (all attributes).

\subsection{A Non-Private Baseline} \label{subsec:nonprivate}
Ding et al.~\cite{ding2016k} consider VFL \kmeans, and aim to avoid the data communication cost of sending all data to a server.  
A natural approach is thus first to construct a global approximation of the data points and then perform \kmeans clustering on the approximation.  In the approach taken in~\cite{ding2016k}, each data party first finds local cluster centers, then reports to the server these local cluster centers together with which local cluster each data point belongs to.  
The central server can assemble the local centers and local clustering memberships to create a set of weighted pseudo data points of the full dataset. 
We provide more details below.

Each party $\ell$ performs clustering to find $k'$ local cluster centers $\centroids\rel{}{\ell} = \left\{c\rel{1}{\ell}, \ldots, c\rel{k'}{\ell} \right\}$; and then sends the $k'$ centers together with membership information $\mathbf{I}\rel{}{\ell} = \left\{\members\rel{1}{\ell}, \ldots, \members\rel{k'}{\ell} \right\}$ to the central server, where $\members\rel{a}{\ell} = \left\{ \id \mid a = \argmin \norm{x\rel{\id}{\ell} - c\rel{a}{\ell}}^2_2 \right\}$.  
The server constructs $(k')^S$ pseudo data points as a grid from the Cartesian product of the local centers received from $S$ parties, i.e., $\grid =\{(c\rel{a_1}{1}, \ldots, c\rel{a_S}{S})\mid \forall( a_1, \ldots, a_S) \in [k']^S \}$;
and assigns the cardinality of the intersection of the corresponding clusters as  weights to them such that  $w(\grid_{(a_1, \ldots, a_S)}) = \left|\members\rel{a_1}{1}\cap \ldots \cap \members\rel{a_S}{S} \right|$.

This algorithm with only one round of communication can perform well because the grid built by the central server actually maintains most of the necessary information about the local datasets: each local center $c\rel{a_\ell}{\ell}$ is the exact average of the user data in $\members\rel{a_\ell}{\ell}$; moreover, data points in the intersection $\members\rel{a_1}{1}\cap \ldots \cap \members\rel{a_S}{S}$ are expected to distribute around the pseudo point $(c\rel{a_1}{1}, \ldots, c\rel{a_S}{S})$.  If the intersection has a small cardinality or even is an empty set, we can know that the  pseudo point can be ignored.
The weighted grid is similar to a useful data synopsis in the \kmeans cost analysis, called \emph{coreset} \cite{har2004coresets}, which approximates the original dataset information.
As long as the weighted grid nodes are representative enough for a subset of points, the central server can find final centers without accessing the distributed datasets.

\subsection{Challenge in the Privacy-preserving Setting}
\label{subsec:challenge}

The approach described in Section~\ref{subsec:nonprivate} does not consider the privacy leakage problem, as the local cluster centers and membership information sent to the server contain sensitive information.  
To protect users' private information, all the information sent to the central server, including (1) local cluster centers $\centroids\rel{}{\ell}$ and (2) local membership information, should be differentially private.  

For the clustering centers, there already exists comprehensive research of the \kmeans algorithm in the central DP setting~\cite{stemmer2018differentially, nissim2018clustering, huang2018optimal, blum2005practical, nissim2007smooth, feldman2009private, wang2015differentially, nissim2016locating}. 
Thus we can choose a method that works well.

Sending local membership information while satisfying DP is, however, very challenging.
To the best of our knowledge, there is no effective DP algorithm for sharing membership information, especially in the scenario with more than two parties.  Fortunately, the reason that we need to share the membership information is to estimate the weights of each pseudo data point.  
Thus, we do not need to share precise membership information, and just need a private way to {\it estimate the cardinality of the intersection among multiple parties}.  The main technical contribution of this paper is a solution to this problem, which we will described in Section~\ref{sec:membership}.  Our proposed approach leverages DP FM sketch and is extended to support the intersection operation among parties.

\subsection{The Overall Framework}
Figure~\ref{fig:vfl-kmeans} is a visualized workflow of Algorithm~\ref{algo:vfc} with two data parties (note that our algorithm/analysis work with the general case of multiple parties). 
Algorithm~\ref{algo:vfc} consists of four phases: 

\mypara{Phase 1:} Each party clusters local data and generates differentially private local centers (sub-procedure $\localclustering$).

\mypara{Phase 2: } Each party encodes the differentially private ``membership information'' of each local cluster with the private centers and user data points (sub-procedure \membershipencoding).

\mypara{Phase 3:}
    The central server first randomly queries a party for an estimate of the total number of users with the Laplace mechanism and a small privacy budget $\epsilon_0$\footnote{We set $\epsilon_0=0.02\epsilon$ unless we specify in the following text.}. 
    Then the central server receives the private local clustering centers and local membership information of the local clusters.
    It builds a weighted grid, where grid nodes are the Cartesian product of different parties' local centers, and have the estimate of intersection cardinality of the corresponding clusters as their weights (Line 3(c) and sub-procedure \textsf{WeightEstimate}).
    
\mypara{Phase 4:} The central server runs a known central \kmeans algorithm on the weighted grid to generate the final $k$ centers. 

\begin{algorithm}
\caption{Private Vertical Federated Clustering}\label{algo:vfc}
\begin{algorithmic}[1]
\Require Local datasets $\{\X\rel{}{\ell} \in \range^{n \times m^{(\ell)}} \mid \ell \in [S] \}$, total privacy budget $(\epsilon, \delta)$ is divided as $\epsilon_0 =(1-b)\epsilon$, $\epsilon_1 = \frac{b\epsilon}{2S}$,  $\epsilon_2 = \frac{b\epsilon}{2S}, \delta_2 = \frac{\delta}{S}$ for each data party, $k$ and $k'$ for clustering, and auxiliary membership encoding parameters $\aux$.

\Ensure A set of $k$ centers $\{c_1, \ldots, c_k\} \in \range^{k \times m}$

\State Each data party $\ell \in [S]$: 
    \begin{algsubstates}
    \State $\{c\rel{1}{\ell}, \ldots, c\rel{k'}{\ell}\} \leftarrow \localclustering(\X\rel{}{\ell}, \epsilon_1, k')$
    \end{algsubstates}

\State Each data party $\ell \in [S]$: 
    \begin{algsubstates}
    \State $\mathbf{I}\rel{}{\ell}\leftarrow \membershipencoding(\X\rel{}{\ell}, \{c\rel{1}{\ell}, \ldots, c\rel{k'}{\ell}\}, \epsilon_2, \delta_2, \aux)$
    \State sends $\centroids\rel{}{\ell} = \{c\rel{1}{\ell}, \ldots, c\rel{k'}{\ell}\}$ and $\mathbf{I}\rel{}{\ell}$ to server
    \end{algsubstates}

\State Central server:
    \begin{algsubstates}
    \State uses $\epsilon_0$ to estimate the total number of user $\hat{n}$
    \State receives $\{ (\centroids\rel{}{\ell}, \mathbf{I}\rel{}{\ell} ) | \ell \in [S] \}$ from all parties
    \State computes grid by Cartesian product $\grid \leftarrow \centroids\rel{}{1} \times \ldots \times \centroids\rel{}{S}$
    \State computes $w(\grid) \leftarrow \weightestimate(\hat{n}, \epsilon_2, \delta_2, \{\mathbf{I}\rel{}{\ell}| \ell \in [S] \})$
    \end{algsubstates}
\State Central server: 
    \begin{algsubstates}
    \State computes and outputs $\{c_1, \ldots, c_k\} \leftarrow \text{\kmeans}(G, w(G), k)$
    \end{algsubstates}
\end{algorithmic}
\end{algorithm}


In what follows, Section~\ref{sec:local-clustering} describes our approach for Phase 1, and Section~\ref{sec:membership} describes our approach for Phase 2 and 3.


\subsection{Private Local Clustering}
 \label{sec:local-clustering}
We review some existing solutions to generate private centers in the central setting and then explain our adaptation to our VFL setting.

\mypara{DPLloyd.} 
A straight-forward differentially private central \kmeans is the \emph{DPLloyd}~\cite{blum2005practical, su2016kmeans}.
In each iteration, the assignment step is the same as the non-private Lloyd algorithm, where each data point is assigned to the closest center produced from the previous iteration.
The updating step ensures DP by 1) using the Laplace mechanism with sensitivity 1 to get the noisy count of data points assigned to the center, 2) using the Laplace mechanism with sensitivity $r$ (it requires that all attributes are bounded in $[-r, r]$) and a split privacy budget for each dimension to get the noisy sums of the data points assigned to the same center.
The centers are updated as the averages of all data points in the same cluster with the noisy count and noisy sum.
Every iteration consumes privacy budget for computing noisy sums and noisy counts.

\begin{figure*}
    \centering
    \hfill
    \begin{minipage}{0.58\textwidth}
        \centering
        \includegraphics[width=0.99\textwidth]{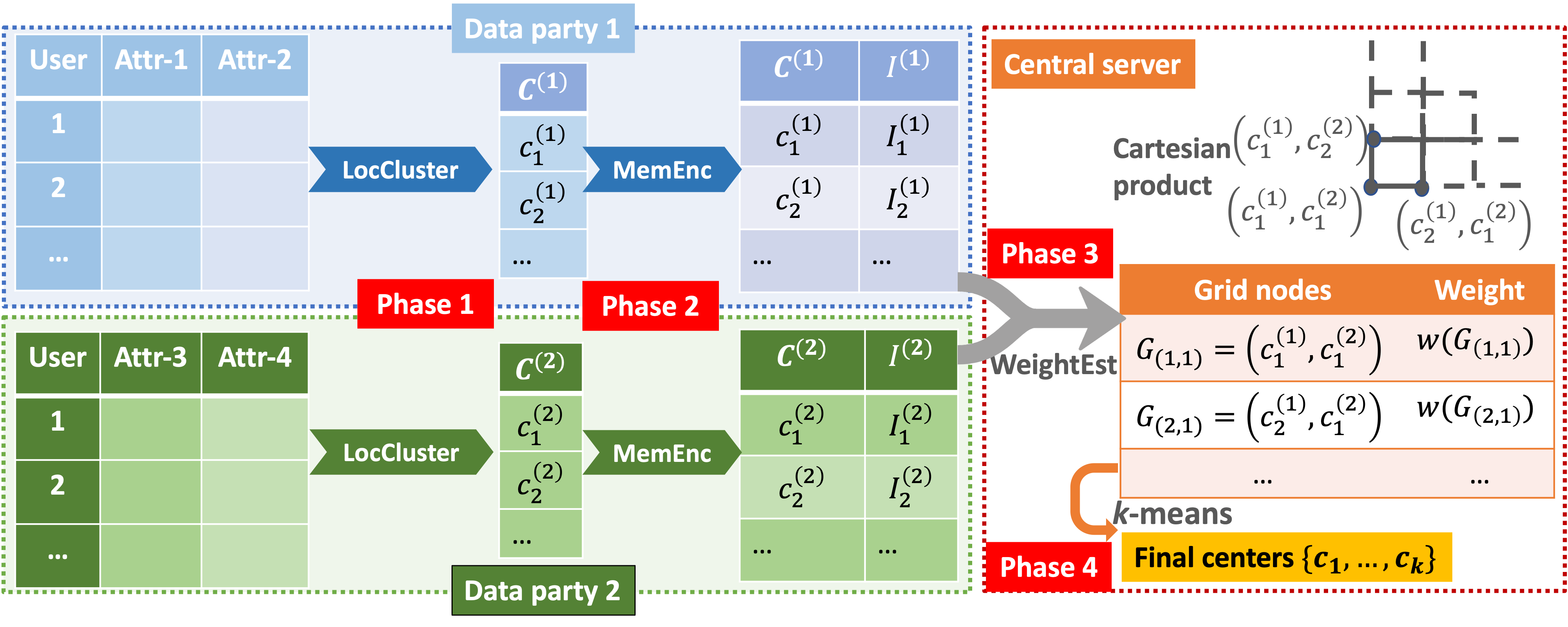}
        \vspace{-0.4cm}
        \caption{General framework for VFL \kmeans clustering.}
        \label{fig:vfl-kmeans}
    \end{minipage}
    \hfill
    \begin{minipage}{0.35\textwidth}
        \centering
        \includegraphics[width=\textwidth]{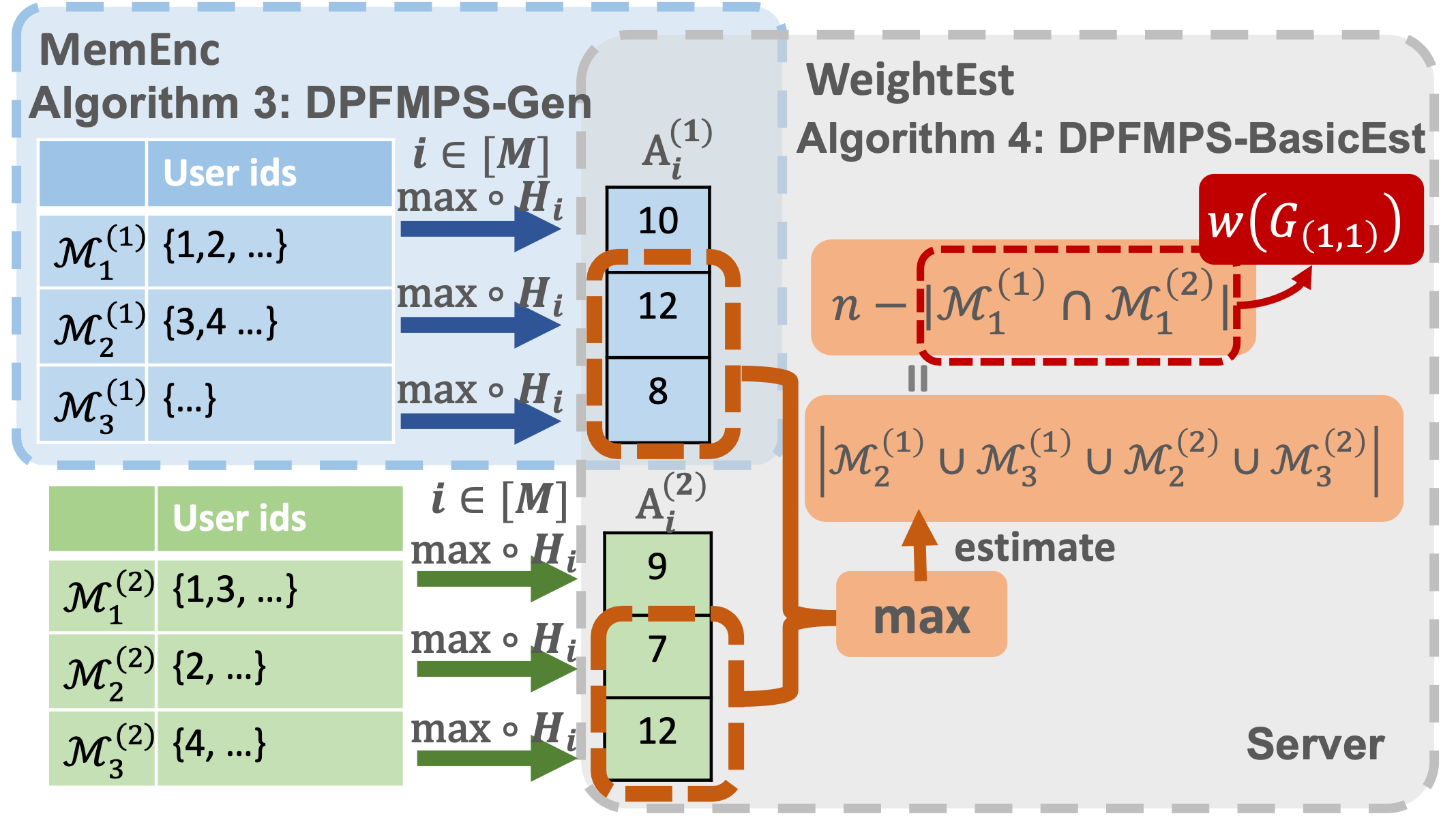}
        \vspace{-0.4cm}
        \caption{Example of Algorithm~\ref{algo:dpfmps} and \ref{algo:dpfms-est}.}
        \label{fig:algo34}
    \end{minipage}
    \hfill
\end{figure*}

\mypara{DPLSF.}
There are two recently proposed algorithms for differentially private \kmeans with theoretical performance guarantees, one for the central setting \cite{ghazi2020dpkmean} and one for the local setting \cite{chang2021locally}.  
Both algorithms are built on a theoretical concept called efficiently decodable net.  But how to implement the efficiently decodable net in practice is still unclear.
Therefore, the authors also propose a DP \kmeans algorithm based on \emph{locality sensitive hashing (LSH) forest}~\cite{charikar2002simhash} to approximate the effect of the efficiently decodable net.
The central DP implementation is open-sourced~\cite{googledpclustering}.
The high-level idea is to partition the data points based on their LSH outputs, generate differentially private means and counts for these partitions, and finally run a (non-private) \kmeans algorithm on the means with counts as weights.
We call this method DPLSF.
We choose DPLSF as the instantiation of \localclustering in this paper because it is shown to outperform other existing methods in experiments~\cite{googledpclustering}.

\mypara{Adapting DPLSF to VFL setting.} 
The implementation in \cite{googledpclustering} requires a known $L_2$ norm upper bound for the data points because of the usage of the Gaussian mechanism.
However, assuming the $L_2$ norm upper bound for each user's data may be unreasonable in the VFL setting because a user's data are spread in different data parties' datasets.
Thus, we normalize each attribute to some ranges to restrict the sensitivity of the data points averaging operations of DPLSF, and apply the Laplace mechanism to provide DP guarantees. 
Note that normalizing different attributes to different ranges is essentially adjusting the weights of different attributes when computing the distances. 
To simplify the discussion and experiment settings, we let each data party normalize its attributes to $[-1, 1]$.
However, our technique can be easily extended when different attributes are normalized to different target ranges, so long as these target ranges are public information, e.g., general domain knowledge;
otherwise, normalization ranges can be inferred using other DP algorithms with reserved private budgets.

\section{Private Membership Encoding and Weight Estimate }
\label{sec:membership}
To avoid the privacy leakage when sharing the membership information $\mathbf{I}\rel{}{\ell}$,
we introduce our private instantiations of \membershipencoding and \weightestimate together in this section because how the central server can estimate the weights with \weightestimate depends on how data parties encode the membership information with \membershipencoding.
With our instantiations, the data parties generate differentially private membership information $\mathbf{I}\rel{}{\ell}$ and share them with the central server, and the central server estimates the cardinalities of the intersections $
\left|\members\rel{a_1}{1} \bigcap \ldots \bigcap \members\rel{a_S}{S}\right|$ for all $(a_1, \ldots, a_S) \in [k']^S$ as weights.

\subsection{Baselines }
\label{subsec:baselines}

\mypara{Baseline 1: Estimate weights assuming independence among attributes.}
In this approach, we assume that the distributions of attributes from one party are independent of those from all other parties. 
Under this assumption, we can compute the intersection cardinality using $\left|\members\rel{a_1}{1} \bigcap \ldots \bigcap \members\rel{a_S}{S}\right| \approx \hat{n}\prod_{\ell \in [S]}\frac{|\members\rel{a_\ell}{\ell}|}{\hat{n}}$.

Following this idea, the private \membershipencoding only needs to generate a histogram of $\left[\left|\members\rel{1}{\ell}\right|, \ldots, \left|\members\rel{k'}{\ell}\right| \right]$ with Laplace mechanism and privacy budget $\epsilon_2$.
Denote the randomized histogram vector as $\tilde{\mathbf{f}}\rel{}{\ell}$.
The central server's sub-procedure \weightestimate is $w(\grid_{(a_1, \ldots, a_S)}) =  \hat{n}\prod_{\ell \in [S]}\frac{\tilde{\mathbf{f}}\rel{a_\ell}{\ell}}{\hat{n}}$.
We call this baseline as \indca because it makes the independence assumption and uses the Laplace mechanism.

However, when the assumption of inter-party attributes independence fails, this cardinality estimation can be far from the ground truth and make the final centers far from optimal, because the correlation information between the inter-party attributes is completely lost.
Thus, maintaining the inter-party attribute correlations is the main focus of improving the utility in general scenarios.

\mypara{Baseline 2: Estimate weights based on local differential privacy protocols.}
Another choice for aggregating the cardinality information is to let the parties report each user's membership information separately, instead of aggregating the membership information first and then reporting.  
When such reporting satisfies local differential privacy (LDP) for each user, it also satisfies DP for the whole local dataset. 
In this paper, we apply either the optimized local hashing (OLH) or the general random response (GRR) protocol in \cite{WangBLJ17} (which is used depends on the privacy parameter $\epsilon_2$ and the domain size), and name this approach as \ldpca.
We set $\epsilon_0=0$ because \ldpca does not need to estimate the number of users.

Local memberships of a user in $S$ different data parties can be seen as an $S$-dimension record.
Thus, it is equivalent to randomizing each ``dimension'' of a user record independently with LDP protocols.
After receiving all the local memberships of a user, the server first computes the probability vector of this user in all possible intersections $\members\rel{a_1}{1}\cap\ldots\cap\members\rel{a_S}{S}$.
Then the server sums the probability vectors of all users to get the desired weights.
The correlation of each user's attributes is preserved because the server first aggregates all local memberships of each user.  
We defer more details of \ldpca in the appendix of our full version~\cite{manuscript}.

However, the reported memberships are very noisy. 
Based on the known LDP protocol error analysis~\cite[Proposition 10]{wang2019answering}, the variance of an estimated cardinality with \ldpca is in the order $O\left(n/ \epsilon_2^{2S}\right)$ for each intersection.
The noise can easily overwhelm the true counts when $\epsilon_2$ is small or $S$ is large.


\subsection{Prerequisite: DP FM Sketch}
\label{subsec:psi}
As is shown, neither baseline is satisfactory.  
An effective approach for privacy-preserved \membershipencoding and \weightestimate should accurately maintain most of the inter-party correlation information.
We propose a new approach based on the Flajolet-Martin (FM) sketch because it can satisfy DP with a little additional overhead and support the \emph{set union operation}.

\mypara{FM sketch achieves DP.}
As mentioned in Section~\ref{sec:background}, FM sketches are used to estimate cardinality.  Recently, some research results show that a family of sketches, including FM sketch, satisfy DP as long as the cardinality is large enough and the hash keys are unknown to the adversary~\cite{smith2020fmsketch, hu2021ca, choi2020dpsketch, dickens2022allsketch}.
The DP version of the FM sketch algorithm is described as Algorithm~\ref{alg:fmsketch} following the approach in Smith et al.~\cite{smith2020fmsketch}, where the FM sketch is implemented using
an ideal geometric-value hash function $H: \mathcal{X}\times \mathbb{Z} \rightarrow \mathbb{N}_+$ with parameter $\frac{\gamma}{1 + \gamma}$. 
That is, given any finite set of distinct inputs $x_1,\ldots, x_\ell \in \mathcal{X}$, 
with a hash key $\zeta \sim \text{Uniform}(\mathbb{Z})$, the hashed values $H_\zeta(x_1), \ldots, H_\zeta(x_\ell)$ follow i.i.d. Geometric$\left(\frac{\gamma}{1+\gamma}\right)$ distribution.

\begin{algorithm}
\caption{DP FM Sketch Generation $\dpfm$ \cite{smith2020fmsketch}}
\label{alg:fmsketch}
\begin{algorithmic}[1]
\Require a (multi)set $\members$, privacy parameter $\epsilon'$, and Geometric distribution parameter $\gamma$, an ideal random hash function $H_{\zeta}(x) \sim \geometric(\frac{\gamma}{1 + \gamma})$ when $\zeta \sim \uniform(\mathbb{Z})$
\Ensure Sketch $\alpha$ for cardinality of $\members$
\State $n_p = \lceil\frac{1}{e^{\epsilon'} - 1}\rceil$, $\alpha_{\min} = \lceil \log_{1+\gamma}\frac{1}{1 - e^{-\epsilon'}} \rceil$
\State $\alpha_p = \max\{Y_1, \ldots, Y_{n_p}\}$ where $Y_i \sim \geometric(\frac{\gamma}{1+\gamma})$ 

\State $\alpha_{real} = \max\{H_\zeta(\id) | \id \in \members\}$ 
\State Return $ \alpha = \max\{\alpha_p, \alpha_{real}, \alpha_{\min}\}$
\end{algorithmic}
\end{algorithm}

Algorithm~\ref{alg:fmsketch} generates a DP FM sketch. 
The intuition of the non-private FM sketch is that when elements in $\members$ are encoded as a set of geometric random variables and $\alpha_{real} = \max\{H_\zeta(\id) | \id \in \members\}$, we can expect $(1+\gamma)^{\alpha_{real}} \in \left[\frac{|\members|}{1+\gamma}, (1+\gamma)\cdot|\members| \right]$ with reasonable probability.
Compared with the non-private version, the DP FM sketch needs two additional steps to ensure privacy: adding phantom elements, and lower-bounding the output by $\alpha_{\min}$.
The phantom elements are used to ensure that the cardinality estimated by the final output is at least $n_p$; the $\alpha_{\min}$, which is at the $e^{-\epsilon'}-$quantile of the Geometric distribution, is used to ensure a probability that none of the items affects the output.
It has been shown that the harmonic/geometric mean of $M$ runs of Algorithm~\ref{alg:fmsketch} satisfies DP:

\begin{lemma}[Privacy guarantee of $\dpfm$ \cite{smith2020fmsketch}]
\label{lemma:fmpriv}
    Given an ideal geometric-value hash function $H$, Algorithm~\ref{alg:fmsketch} is $\epsilon'$-DP.
    Besides, repeating it $M$ times with different hash keys and $\epsilon'=\frac{\epsilon}{4\sqrt{M \log(1/\delta)}}$ satisfies $(\epsilon, \delta)$-DP as long as $\epsilon\leq 2\log(1/\delta)$.
\end{lemma}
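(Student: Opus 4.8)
The plan is to establish the two claims in turn: that one execution of Algorithm~\ref{alg:fmsketch} is $\epsilon'$-DP, and that its $M$-fold repetition with the prescribed $\epsilon'$ yields $(\epsilon,\delta)$-DP. The first is a direct computation on the output law of a thresholded maximum of geometric variables under a one-record change; the second chains the RDP facts already stated in the excerpt.

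For the single-run guarantee, note the returned value is $\alpha=\max\{\alpha_p,\alpha_{real},\alpha_{\min}\}$, i.e.\ the maximum of $m=n_p+|\members|$ i.i.d.\ $\geometric(\frac{\gamma}{1+\gamma})$ variables floored at the constant $\alpha_{\min}$. Writing $F(t)=1-(1+\gamma)^{-t}$ for the CDF of one geometric variable, the law of $\alpha$ has CDF $F(t)^m$ on $\{t\ge \alpha_{\min}\}$, so its pmf is $F(t)^m-F(t-1)^m$ for $t>\alpha_{\min}$ and $F(\alpha_{\min})^m$ at the floor. Two neighboring multisets differing in one record may be taken to be $\members$ and $\members'=\members\cup\{x\}$, which only replaces $m$ by $m+1$; since the output is discrete it suffices to bound the pointwise pmf ratio $R(t)=\Pr{\alpha_{\members'}=t}/\Pr{\alpha_\members=t}$ inside $[e^{-\epsilon'},e^{\epsilon'}]$. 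At the floor $t=\alpha_{\min}$ this ratio is $F(\alpha_{\min})$ (and its reciprocal in the other direction), and the definition $\alpha_{\min}=\lceil\log_{1+\gamma}\frac{1}{1-e^{-\epsilon'}}\rceil$ is exactly what forces $F(\alpha_{\min})=1-(1+\gamma)^{-\alpha_{\min}}\ge e^{-\epsilon'}$, settling that case.

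For $t>\alpha_{\min}$ I would decompose $R(t)=F(t)+\dfrac{F(t-1)^m\,(F(t)-F(t-1))}{F(t)^m-F(t-1)^m}$. The lower bound is immediate, as $R(t)\ge F(t)>F(\alpha_{\min})\ge e^{-\epsilon'}$. For the upper bound, write the extra term as $\big((F(t)/F(t-1))^m-1\big)^{-1}(F(t)-F(t-1))$ and apply Bernoulli's inequality $(1+x)^m\ge 1+mx$ with $x=(F(t)-F(t-1))/F(t-1)$; this bounds the extra term by $F(t-1)/m\le 1/m\le 1/n_p$, whence $R(t)\le F(t)+1/n_p\le 1+1/n_p$. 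The phantom count $n_p=\lceil\frac{1}{e^{\epsilon'}-1}\rceil\ge \frac{1}{e^{\epsilon'}-1}$ is calibrated precisely so that $1+1/n_p\le e^{\epsilon'}$, giving $R(t)\le e^{\epsilon'}$. This exhibits the complementary roles of the two defensive constructs: $\alpha_{\min}$ controls the ratio at the bottom of the support (and keeps $F(t)\ge e^{-\epsilon'}$), while $n_p$ guarantees enough baseline variables that one extra insertion perturbs the upper ratio by at most $e^{\epsilon'}$; in particular $n_p\ge1$ keeps the support the full range $\{\alpha_{\min},\alpha_{\min}+1,\dots\}$ for both neighbors so no ratio is infinite.

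For the repetition, a pure $\epsilon'$-DP mechanism has privacy loss bounded in $[-\epsilon',\epsilon']$ and hence is $(\nu,\tfrac12\nu{\epsilon'}^2)$-RDP for every order $\nu>1$. RDP sequential composition over the $M$ runs gives $(\nu,\tfrac12 M\nu{\epsilon'}^2)$-RDP, and the RDP-to-$(\epsilon,\delta)$ conversion yields $(\tfrac12 M\nu{\epsilon'}^2+\tfrac{\log(1/\delta)}{\nu-1},\delta)$-DP. Choosing $\nu-1=\tfrac1{\epsilon'}\sqrt{2\log(1/\delta)/M}$ balances the two terms and gives total parameter $\epsilon'\sqrt{2M\log(1/\delta)}+\tfrac12 M{\epsilon'}^2$; substituting $\epsilon'=\tfrac{\epsilon}{4\sqrt{M\log(1/\delta)}}$ turns this into $\tfrac{\epsilon}{2\sqrt2}+\tfrac{\epsilon^2}{32\log(1/\delta)}$, and the hypothesis $\epsilon\le 2\log(1/\delta)$ makes the second term at most $\tfrac{\epsilon}{16}$, so the total is below $\epsilon$ and the mechanism is $(\epsilon,\delta)$-DP. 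The main obstacle is the single-run analysis---namely getting the two-sided pmf-ratio bound for a maximum of geometrics under a one-element change and seeing that $n_p$ and $\alpha_{\min}$ are tuned exactly for the upper and lower sides respectively; the composition half is mechanical once the stated RDP facts and the pure-DP-to-RDP conversion are in hand, the factor $4$ in $\epsilon'$ merely providing slack.
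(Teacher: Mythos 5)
Your proof is correct, but it is worth noting that the paper itself does not prove this lemma at all: Lemma~\ref{lemma:fmpriv} is imported verbatim from Smith et al.~\cite{smith2020fmsketch}, and the paper's appendix only proves the partition-level Lemma~\ref{lemma:single} by a parallel-composition argument layered on top of it. What you have done is reconstruct, essentially faithfully, the argument of the cited source. Your single-run analysis is the right one and checks out: with $m=n_p+|\members|$ i.i.d.\ $\geometric(\frac{\gamma}{1+\gamma})$ values floored at $\alpha_{\min}$, the pmf-ratio decomposition $R(t)=F(t)+\frac{F(t-1)^m(F(t)-F(t-1))}{F(t)^m-F(t-1)^m}$ together with Bernoulli's inequality gives $R(t)\le 1+1/n_p\le e^{\epsilon'}$, while $F(\alpha_{\min})\ge e^{-\epsilon'}$ handles the floor and the lower side; this correctly isolates the complementary roles of $n_p$ and $\alpha_{\min}$. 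The composition half also checks out arithmetically: optimizing $\nu$ yields $\epsilon'\sqrt{2M\log(1/\delta)}+\frac{1}{2}M{\epsilon'}^2=\frac{\epsilon}{2\sqrt{2}}+\frac{\epsilon^2}{32\log(1/\delta)}\le\frac{\epsilon}{2\sqrt{2}}+\frac{\epsilon}{16}<\epsilon$ under $\epsilon\le 2\log(1/\delta)$. The one step you should not present as immediate is the assertion that pure $\epsilon'$-DP implies $(\nu,\frac{1}{2}\nu{\epsilon'}^2)$-RDP for every $\nu$: this is the Bun--Steinke conversion of pure DP to $\frac{1}{2}{\epsilon'}^2$-zCDP, and it does not follow from boundedness of the privacy loss alone --- Hoeffding's lemma applied to a loss $L\in[-\epsilon',\epsilon']$ only gives $D_\nu\le \mathbb{E}_P[L]+\frac{1}{2}(\nu-1){\epsilon'}^2$, and one still needs the nontrivial fact that $\mathbb{E}_P[L]\le\frac{1}{2}{\epsilon'}^2$ (which uses $\mathbb{E}_Q[e^L]=1$). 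Cite that conversion as a known result and the proof is complete; with it, your route is exactly the pmf-ratio-plus-concentrated-DP argument that \cite{smith2020fmsketch} uses, so nothing is gained or lost relative to the source other than self-containedness.
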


\subsection{Sketch-based \membershipencoding and \weightestimate}
Unlike the DP FM sketch~\cite{smith2020fmsketch} which aims to estimate the cardinality of a single set, our task is to encode the partitions, namely multiple local clusters consisting of user ids assigned to different local centers.
Thus, we extend the FM sketch to encode the partition memberships, called \underline{D}ifferentially \underline{P}rivate \underline{F}lajolet-\underline{M}artin \underline{P}artition \underline{S}ketch (\dpfmps).
We call an FM sketch vector for $\left\{\members\rel{1}{\ell},\ldots, \members\rel{k'}{\ell}\right\}$ as a set of FM sketches.
The sketch generation function, \dpfmpsgen (Algorithm~\ref{algo:dpfmps}), is an instantiation of \membershipencoding. 
The data parties need to share a set of auxiliary parameters, $\aux=\left\{\gamma, M, \vzeta=\left\{\zeta_1, \ldots, \zeta_M \right\} \right\}$, where $\gamma$ is the Geometric distribution parameter, $M$ is the number of sets of sketches and each $\zeta_i\sim\uniform(\mathbb{Z})$ is the hash key to an ideal geometric-value hash function for the $i$-th set of sketch.
Notice that all data parties need to share the same set of hash keys.
Since the hash keys must be unknown to the central server to achieve DP with the FM sketches, each data party can generate a random number and share it with all other parties via some secure peer-to-peer channels (e.g., key-exchange protocol~\cite{diffie1976new}); then, each data party can use the sum/XOR/concatenation of those $S$ random numbers as a hash key.
Algorithm~\ref{algo:dpfmps} generates $M$ sets of FM partition sketches and has the following privacy guarantee.

\begin{theorem}\label{thm:fmpriv}
    Given an ideal geometric-value hash function $H$, \dpfmpsgen (Algorithm~\ref{algo:dpfmps}) generating $M$ sets of partition sketches satisfies $(\epsilon_2, \delta_2)$-differential privacy.
\end{theorem}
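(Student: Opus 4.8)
The plan is to reduce the privacy analysis of \dpfmpsgen to the single-set guarantee of Lemma~\ref{lemma:fmpriv} by exploiting the fact that the local clusters $\members\rel{1}{\ell}, \ldots, \members\rel{k'}{\ell}$ form a \emph{partition} of the user ids held by party $\ell$. Since the local centers $\{c\rel{1}{\ell}, \ldots, c\rel{k'}{\ell}\}$ are fixed inputs to \membershipencoding (they were produced in Phase~1 under the separate budget $\epsilon_1$), the assignment of each user id to a cluster is deterministic, so every id lands in exactly one $\members\rel{a}{\ell}$. The whole argument then rests on two applications of DP composition: parallel composition across the $k'$ clusters within a single hash key, and the $M$-fold composition already packaged inside Lemma~\ref{lemma:fmpriv}.

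First I would fix one hash key $\zeta_i$ and consider the vector of $k'$ sketches $\left( \dpfm(\members\rel{1}{\ell}, \epsilon', \gamma), \ldots, \dpfm(\members\rel{k'}{\ell}, \epsilon', \gamma) \right)$ computed with $\zeta_i$. For any two neighboring local datasets differing by the addition or removal of one user $u$, that user belongs to exactly one cluster under the fixed centers; hence exactly one of the $k'$ membership sets changes by one element, while the inputs to the remaining $k'-1$ sketches are identical in both datasets. By the single-run part of Lemma~\ref{lemma:fmpriv} each individual sketch is $\epsilon'$-DP, and because the clusters are disjoint, parallel composition yields that the whole set of $k'$ sketches for a fixed $\zeta_i$ is again $\epsilon'$-DP -- crucially, the per-record privacy cost does not scale with $k'$.

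Next I would treat the $M$ sets of sketches, one per hash key $\zeta_1, \ldots, \zeta_M$, as $M$ repetitions of an $\epsilon'$-DP mechanism on the same (partitioned) data. This is structurally identical to running Algorithm~\ref{alg:fmsketch} $M$ times, so the composition bound of Lemma~\ref{lemma:fmpriv} applies verbatim: choosing $\epsilon' = \frac{\epsilon_2}{4\sqrt{M\log(1/\delta_2)}}$ (as \dpfmpsgen sets internally) gives the claimed $(\epsilon_2, \delta_2)$-DP, provided the side condition $\epsilon_2 \leq 2\log(1/\delta_2)$ holds.

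I expect the main obstacle to be the careful justification that a single-record change perturbs only one cluster's sketch, since this is exactly what licenses parallel composition rather than the much weaker sequential composition over the $k'$ branches. This requires being explicit that the centers are fixed and public during this phase and that the nearest-center assignment rule induces a genuine partition; any id contributing to two sketches would break disjointness and force a $k'$-fold blow-up in the budget. A secondary point to verify is that converting the raw sketches into cardinality estimates via the harmonic/geometric averaging is a data-independent post-processing step, so it consumes no additional budget.
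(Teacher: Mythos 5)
Your proposal is correct and follows essentially the same route as the paper: the paper's Lemma~\ref{lemma:single} is exactly your first step (a neighboring-dataset change perturbs exactly one cluster of the partition, so parallel composition gives $\epsilon'$-DP per hash key), and the paper then composes the $M$ rows via RDP sequential composition and conversion back to $(\epsilon_2,\delta_2)$-DP, which is precisely the composition bound you invoke through Lemma~\ref{lemma:fmpriv}. Your explicit mention of the side condition $\epsilon_2 \leq 2\log(1/\delta_2)$ and of post-processing being free are accurate refinements, not deviations.
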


The above theorem is based on the following lemma about the privacy guarantee for each set of sketch, namely $\FM_i$ in Algorithm~\ref{algo:dpfmps}.

\begin{lemma}
\label{lemma:single}
    Given an ideal geometric-value hash function $H$, each set of the partition sketch (i.e. a row $\FM_i$) is $\epsilon'$-DP.
\end{lemma}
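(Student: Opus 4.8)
The plan is to prove the single-row guarantee by recognizing $\FM_i$ as a collection of $k'$ \emph{independent} DP FM sketches applied to a \emph{partition} of the local user ids, and then invoking the parallel-composition property from Section~\ref{sec:background}. Concretely, a row $\FM_i = (\alpha_1,\ldots,\alpha_{k'})$ consists of one sketch $\alpha_a$ per local cluster $\members\rel{a}{\ell}$, each produced by Algorithm~\ref{alg:fmsketch} with the \emph{common} hash key $\zeta_i$ but with its own fresh phantom elements. The structural fact I would exploit is that $\{\members\rel{1}{\ell},\ldots,\members\rel{k'}{\ell}\}$ partitions the ids: every user is assigned to exactly one nearest (already fixed, private) center, so under the add/remove-one-user neighboring relation, adding or removing a single user changes the id set of exactly one cluster, say $\members\rel{a^*}{\ell}$, by one element and leaves all others untouched.

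First I would make the independence of the $k'$ sketches precise. Under the ideal geometric-value hash model, on distinct inputs the values $H_{\zeta_i}(\id)$ are i.i.d.\ $\geometric(\tfrac{\gamma}{1+\gamma})$, so even though the sketches share $\zeta_i$, the hash values they consume come from \emph{disjoint} id sets and are therefore mutually independent; combined with the independent phantom draws, this makes $\alpha_1,\ldots,\alpha_{k'}$ mutually independent, with $\alpha_a$ a function of cluster $a$'s data (and its own randomness) alone. I would then write the likelihood ratio for a neighboring pair $\X,\X'$ differing in one user lying in cluster $a^*$: by independence the joint pmf factorizes, and because every cluster except $a^*$ has an identical id set under $\X$ and $\X'$, all factors with $a\neq a^*$ are equal and cancel,
\begin{equation*}
\frac{\Pr{\FM_i(\X) = (o_1,\ldots,o_{k'})}}{\Pr{\FM_i(\X') = (o_1,\ldots,o_{k'})}}
= \frac{\Pr{\alpha_{a^*}(\X) = o_{a^*}}}{\Pr{\alpha_{a^*}(\X') = o_{a^*}}}.
\end{equation*}

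Finally, the right-hand ratio is exactly the privacy-loss ratio of a single DP FM sketch on inputs differing by one element, which is bounded by $e^{\epsilon'}$ because Algorithm~\ref{alg:fmsketch} is $\epsilon'$-DP (Lemma~\ref{lemma:fmpriv}). Hence the row satisfies the $\epsilon'$-DP inequality with $\delta=0$. This is precisely parallel composition: the $k'$ sketches read disjoint sub-datasets, so their combined budget is $\max_a \epsilon' = \epsilon'$ rather than $k'\epsilon'$.

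The step I expect to be the main obstacle is justifying that sharing the single hash key $\zeta_i$ across all $k'$ sketches does not invalidate parallel composition, since the naive statement of parallel composition assumes independent per-mechanism randomness whereas here the key is common. I would resolve this by leaning on the ideal-hash assumption, under which the shared key still yields independent values on the distinct ids of different clusters, so the sketches remain independent and the cancellation above goes through. A secondary point requiring care is the neighboring model: the $\epsilon'$ (rather than $2\epsilon'$) bound relies on add/remove-one-user, under which a single user perturbs only one cluster's membership; I would state this model explicitly at the outset so that the ``exactly one $\alpha_{a^*}$ changes'' claim is unambiguous.
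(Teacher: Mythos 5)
Your proof is correct and takes essentially the same route as the paper's: observe that under add/remove-one-user neighboring the partition $\{\members\rel{1}{\ell},\ldots,\members\rel{k'}{\ell}\}$ changes in exactly one cell, then combine the single-sketch guarantee of Algorithm~\ref{alg:fmsketch} (Lemma~\ref{lemma:fmpriv}) with parallel composition over the disjoint clusters. If anything, you spell out two points the paper's terse proof leaves implicit---the factorization of the joint law of the row despite the shared hash key $\zeta_i$ (justified via the ideal-hash assumption), and the dependence of the $\epsilon'$ (rather than $2\epsilon'$) bound on the add/remove neighboring model---so your writeup is a more complete version of the same argument.
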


With Lemma~\ref{lemma:single}, the privacy guarantee claimed in Theorem~\ref{thm:fmpriv} can be derived with the sequential composition of RDP~\cite{mironov2017renyi} and converted back to $(\epsilon, \delta)$-DP  following the same proof as in~\cite{smith2020fmsketch}.

\begin{algorithm}
\caption{\dpfmpsgen} \label{algo:dpfmps}
\begin{algorithmic}[1]
\Require  A set of $k'$ centers $\centroids\rel{}{\ell}$, dataset $\X\rel{}{\ell}$, privacy parameter $\epsilon_2$ and $\delta_2$,
Geometric distribution parameter $\gamma$, number of sketches $M$ and the corresponding hash keys $\vzeta=\{\zeta_1, \ldots, \zeta_M\}$
\Ensure $M$ sets of sketches $\FM$
\State $\epsilon' = \frac{\epsilon_2}{4\sqrt{M \log (1/\delta_2)}}$
\State Generate $\left\{\members\rel{1}{\ell},\ldots, \members\rel{k'}{\ell} \right\}$ where $\id \in \members\rel{j}{\ell}$ based on $\centroids\rel{}{\ell}$
\State $\FM \leftarrow \mathbf{0}^{M\times k'}$
\For{$i \in [M], a \in [k']$ }
    \State $\FM_{i, a} = \dpfm(\members\rel{a}{\ell}, \epsilon', \gamma, \zeta_i)$
\EndFor
\State Return $\FM$ 
\end{algorithmic}
\end{algorithm}

\begin{algorithm}
\caption{\dpfmpsest }
\label{algo:dpfms-est}
\begin{algorithmic}[1]
\Require Estimate of the total number of users $\hat{n}$, Privacy parameter $\epsilon_2$ and $\delta_2$, the FM partition sketches $\{\FM\rel{}{1}, \ldots, \FM\rel{}{s}\}$ 

\Ensure Estimate of the weights $w(\grid)$
\State $\mathbf{U} \leftarrow \mathbf{0}^{M \times (k')^s}$, $\cca \leftarrow \mathbf{0}^{(k')^s} $ \label{algstep:init}
\State $\epsilon' = \frac{ \epsilon_2}{4\sqrt{M \log (1/\delta_2)}}$,  $n_p = \lceil\frac{1}{e^{\epsilon'} - 1}\rceil$ 
\For{$i \in [M]$, $(a_1, \ldots, a_s) \in [k']^{s}$}
    \State $ \mathbf{U}_{i, (a_1, \ldots, a_s)} = \max\{\FM\rel{i, a}{\ell} \mid \ell \in [s], a \neq a_\ell \}$ \label{algstep:unionfm}
\EndFor
\For{$(a_1, \ldots, a_s) \in [k']^s$}
    \State $\fm_{(a_1, \ldots, a_s)} = \harmonic{ U_{:, (a_1, \ldots, a_s)} }$ \label{algstep:union_est}
    \State $\cca_{(a_1, \ldots, a_s)} = (1+\gamma) ^ {\fm_{(a_1, \ldots, a_s)}} - s (k'-1) n_p$ \label{algstep:union_debias}
    \State $w(\grid_{(a_1, \ldots, a_s)}) = \hat{n} - \cca_{(a_1, \ldots, a_s)}$  \label{algstep:intersect_est}
\EndFor
\State Ensure $\sum_{(a_1, \ldots, a_s)}w(\grid_{(a_1, \ldots, a_s)}) = \hat{n} $ and $w(\grid)_{(a_1, \ldots, a_s)} \geq 0$ \label{algstep:consistent}
\State Return $w(\grid)$
\end{algorithmic}
\end{algorithm}

\mypara{Estimate intersection cardinality.}
One key observation of the local partition is that an $\id$ can be clustered to one and only one $\members\rel{a}{\ell}$ by each data party $\ell$.
Thus, the intersection problem can be transformed into the union problem by an extension of the inclusion-exclusion principle:
\begin{align}
\small
    \bigcap_{\ell=1}^{S} \members\rel{a_\ell}{\ell} = \overline{\bigcup_{\ell=1}^{S} \overline{\members\rel{a_\ell}{\ell}} } = \overline{\bigcup_{\ell=1}^{S} \left(\bigcup_{a \neq a_\ell} \members\rel{a}{\ell} \right) } . \label{eq:inclusion_exclusion}
\end{align}

Algorithm~\ref{algo:dpfms-est} gives the detailed procedure.  There are $(k')^S$ possible intersections that need to be estimated, and there are $M$ sets of FM sketches.
In Line~\ref{algstep:init}, $\mathbf{U}$ is initialized to store the FM sketches for the cardinalities of the complementary set of intersections; $\cca$ is an intermediate vector representing the cardinalities of union of complements, i.e. 
$\left| \bigcup_{\ell=1}^{s}\overline{\members\rel{a_\ell}{\ell}} \right|$.

To estimate the cardinalities of the intersection with Equation~\eqref{eq:inclusion_exclusion}, we first need to calculate the sketch of their complementary set, i.e., $\overline{\members\rel{a_\ell}{\ell}}$. 
The corresponding FM sketch can be obtained by taking the max of all other elements in the same sketch set, $\max\left\{A\rel{i, a}{\ell} | a\neq a_\ell \right\}$, according to the union property of FM sketch.
Next, we need to derive the sketch for the union of the complementary partition of all data parties, i.e. $\bigcup_{\ell=1}^{s}\overline{\members\rel{a_\ell}{\ell}}$.
This union's FM sketches can be obtain by $ \max\left\{\max\left\{A\rel{i, a}{\ell} | a\neq a_\ell\right\} | \ell\in [s] \right\}$.
Merging the two max operations give us the operation in Line~\ref{algstep:unionfm}.

We use the Harmonic mean over the $M$ FM sketches to estimate the cardinality of $\bigcup_{\ell=1}^{s} \overline{\members\rel{a_\ell}{\ell}}$ in Line~\ref{algstep:union_est},
because it is shown~\cite{smith2020fmsketch} that the harmonic mean estimate is more stable and accurate.
As this sketch is obtained after $s(k'-1)$ union operations, totally $s(k'-1)n_p$ phantom elements are taken into account.
Therefore, we need to subtract $s(k'-1)n_p$ elements from the final estimation in Line~\ref{algstep:union_debias}.
Finally, the intersection cardinality can be calculated by subtracting the cardinalities of $\bigcup_{\ell=1}^{s} \overline{\members\rel{a_\ell}{\ell}}$ from the total number of users $\hat{n}$.
Before finally returning the weights, we need to make sure the output is valid by enforcing non-negativity and total-sum equal to $\hat{n}$ on the weights (Line~\ref{algstep:consistent}).
An example of running Algorithm~\ref{algo:dpfmps} and \ref{algo:dpfms-est} is shown in Figure~\ref{fig:algo34} with two data parties and $k'=3$.

\subsection{Privacy, Utility and Communication Cost}
\label{subsec:utility}
The proofs of the theorems in this subsection are provided in the appendix of the full version~\cite{manuscript} because of the space limitation.

\mypara{Privacy guarantee.}
According to the privacy spitting strategy in Algorithm~\ref{algo:vfc}, Theorem~\ref{thm:fmpriv} and the sequential composition of DP, the following privacy statement can directly derived to describe the privacy loss from all data parties to the central server, and equivalently, the final output of the central server.
\begin{theorem}
\label{thm:final-privacy}
    Algorithm~\ref{algo:vfc} is $(\epsilon, \delta)$-DP with \dpfmpsgen and \dpfmpsest as the implementation of \membershipencoding and \weightestimate.
\end{theorem}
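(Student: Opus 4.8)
The plan is to decompose Algorithm~\ref{algo:vfc} into its differentially private building blocks, bound the privacy loss of each, and then recombine them with the composition and post-processing properties recalled in Section~\ref{sec:background}. The observation that makes this tractable is that the server's entire view — and hence, by post-processing, its final output — consists of only three kinds of information: the noisy user count $\hat{n}$ from Line 3(a), and, from each party $\ell \in [S]$, the local centers $\centroids\rel{}{\ell}$ returned by \localclustering and the partition sketches $\mathbf{I}\rel{}{\ell}$ returned by \dpfmpsgen. First I would record the per-component guarantees: $\hat{n}$ is produced by the Laplace mechanism with budget $\epsilon_0$ and is therefore $\epsilon_0$-DP; each \localclustering run is $\epsilon_1$-DP by its Laplace-based design; and each \dpfmpsgen run is $(\epsilon_2, \delta_2)$-DP directly by Theorem~\ref{thm:fmpriv}.

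The key conceptual step, and the one I expect to be the main obstacle, is identifying the correct composition mode \emph{across the $S$ parties}. Because the setting is vertical rather than horizontal, all parties hold data about the \emph{same} set of users, so two neighboring datasets $\X \simeq \X'$ differing in a single user's record differ in one row of \emph{every} local dataset $\X\rel{}{\ell}$ at once. Consequently parallel composition does \emph{not} apply — the parties' inputs are not disjoint with respect to the neighboring relation — and the $S$ parties' releases must instead be combined by \emph{sequential} composition. This is precisely why the budget is split as $\epsilon_1 = \epsilon_2 = \frac{b\epsilon}{2S}$ and $\delta_2 = \frac{\delta}{S}$, so that the $S$-fold sequential sum collapses back to a constant fraction of $\epsilon$ and $\delta$.

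Carrying out the composition, within a single party $\ell$ the two releases $\centroids\rel{}{\ell}$ and $\mathbf{I}\rel{}{\ell}$ compose sequentially to $(\epsilon_1 + \epsilon_2, \delta_2)$-DP. Summing over all $S$ parties together with the $\hat{n}$ query, sequential composition yields a total privacy loss of
\begin{align*}
    \epsilon_0 + S(\epsilon_1 + \epsilon_2) = (1-b)\epsilon + S\cdot\tfrac{b\epsilon}{S} = \epsilon, \qquad 0 + S\delta_2 = \delta,
\end{align*}
so everything the server receives is $(\epsilon, \delta)$-DP. Finally, the grid construction and the \weightestimate call (instantiated by \dpfmpsest) in Line 3 read only the already-private sketches and the already-private count $\hat{n}$, and the \kmeans step in Phase 4 touches only the weighted grid; since neither accesses any raw data, the post-processing property guarantees they incur no additional privacy loss, which completes the argument.
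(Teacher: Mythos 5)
Your proposal is correct and follows essentially the same route as the paper: the paper likewise obtains Theorem~\ref{thm:final-privacy} directly from the budget-splitting strategy in Algorithm~\ref{algo:vfc}, the $(\epsilon_2,\delta_2)$-DP guarantee of \dpfmpsgen (Theorem~\ref{thm:fmpriv}), sequential composition across the Laplace count, the $S$ local-clustering releases, and the $S$ sketch releases, and post-processing for the server-side grid construction, \dpfmpsest, and final \kmeans. Your explicit observation that the VFL neighboring relation touches every party's local dataset at once — so sequential rather than parallel composition is the right mode across parties — is exactly the (unstated) reason the paper divides the budget by $S$, and your arithmetic $\epsilon_0 + S(\epsilon_1+\epsilon_2) = \epsilon$, $S\delta_2 = \delta$ matches the paper's splitting.
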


\mypara{Error analysis of the weights.}
The cardinality estimate with Algorithm~\ref{alg:fmsketch} approximates the real cardinality within a factor of $1 \pm \gamma$ and an additive error of $O\left(\sqrt{\ln (1/\delta)}/\epsilon \right)$~\cite{smith2020fmsketch}.
The additive error is because of the phantom elements, and $\alpha_{\min}$ also slightly increases the expectation of the sketch. 
We provide a refined result to show that the utility of the private FM sketches generated from Algorithm~\ref{alg:fmsketch} is similar to the non-private FM sketch when the real cardinality is large enough, in order to use the more advanced results from non-private FM sketch research.

\begin{lemma}
\label{lemma:expectation+variance}
    Let $\hat{\alpha} = \max\{\alpha_{real}, \alpha_p\}$ and $\alpha$ be the return of Algorithm~\ref{alg:fmsketch}.
    If $H$ is an ideal geometric-value hash function and $\epsilon'$ is fixed, we have $\EV{\alpha} / \EV{\hat{\alpha}} \rightarrow 1$ and $\Var{\alpha} /  \Var{\hat{\alpha}}\rightarrow 1$ when $|\members|$ is sufficiently large.
\end{lemma}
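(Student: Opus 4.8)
The plan is to exploit that $\alpha$ and $\hat\alpha$ differ only through the fixed floor $\alpha_{\min}$, which becomes irrelevant once the true cardinality is large. Since $\alpha=\max\{\hat\alpha,\alpha_{\min}\}$, I would write
\begin{align*}
\alpha = \hat\alpha + D, \qquad D \coloneqq (\alpha_{\min}-\hat\alpha)\,\mathbf{1}\{\hat\alpha<\alpha_{\min}\},
\end{align*}
so that $0\le D\le\alpha_{\min}$ is a bounded nonnegative variable supported on the event $B\coloneqq\{\hat\alpha<\alpha_{\min}\}$. The first step is to show $\Pr{B}$ is exponentially small in $|\members|$: because $\hat\alpha\ge\alpha_{real}=\max_{\id\in\members}H_\zeta(\id)$ and the hashes are i.i.d.\ $\geometric(\tfrac{\gamma}{1+\gamma})$, we have $\Pr{B}\le\Pr{\alpha_{real}<\alpha_{\min}}=\bigl(1-(1+\gamma)^{-\alpha_{\min}}\bigr)^{|\members|}$, which tends to $0$ geometrically because $\alpha_{\min}$ and $\gamma$ are fixed.

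For the expectation ratio, I would note $\EV{\alpha}-\EV{\hat\alpha}=\EV{D}\le\alpha_{\min}\Pr{B}\to0$, whereas $\EV{\hat\alpha}\ge\EV{\alpha_{real}}$ grows like $\log_{1+\gamma}|\members|\to\infty$. Hence $\EV{\alpha}/\EV{\hat\alpha}=1+\EV{D}/\EV{\hat\alpha}\to1$. For the variance ratio, the plan is to expand
\begin{align*}
\Var{\alpha}=\Var{\hat\alpha}+2\,\mathrm{Cov}(\hat\alpha,D)+\Var{D}
\end{align*}
and control the two error terms. Since $0\le D\le\alpha_{\min}$ and $D$ vanishes off $B$, I get $\Var{D}\le\EV{D^2}\le\alpha_{\min}^2\Pr{B}\to0$, and by Cauchy--Schwarz $|\mathrm{Cov}(\hat\alpha,D)|\le\sqrt{\Var{\hat\alpha}\,\Var{D}}\to0$ provided $\Var{\hat\alpha}$ stays bounded above. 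Thus $\Var{\alpha}-\Var{\hat\alpha}\to0$, and dividing by $\Var{\hat\alpha}$---which I will argue is bounded away from $0$---yields $\Var{\alpha}/\Var{\hat\alpha}\to1$.

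The hard part is the last ingredient: showing $\Var{\hat\alpha}$ is trapped in a fixed interval $[c,C]$ with $c>0$ for all large $|\members|$, i.e.\ the extreme-value behaviour of the maximum of many i.i.d.\ geometrics. The upper bound $C$ follows from standard tail estimates around $t^\star\coloneqq\log_{1+\gamma}|\members|$: a union bound gives a geometric upper tail $\Pr{\alpha_{real}\ge t^\star+s}\le(1+\gamma)^{-s}$, and the lower tail $\Pr{\alpha_{real}\le t^\star-s}\approx\exp(-(1+\gamma)^{s})$ decays doubly exponentially, so the maximum concentrates in an $O(1)$ window and $\Var{\hat\alpha}=O(1)$. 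The delicate lower bound $c>0$ is where I expect to spend the most care, since I must rule out the maximum concentrating on a single integer. Because consecutive attainable values are a fixed factor $(1+\gamma)$ apart, the probability mass of $\alpha_{real}$ provably straddles at least two neighbouring integers with non-negligible weight---the single-cell mass is at most $\max_{u>0}\bigl(e^{-u/(1+\gamma)}-e^{-u}\bigr)<1$ for fixed $\gamma$---so $\Var{\hat\alpha}$ cannot vanish. This non-degeneracy of the discrete extreme-value fluctuation, uniform over large $|\members|$, is the crux; the expectation and variance conclusions then follow at once from the bounded-difference estimates above.
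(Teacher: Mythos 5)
Your proposal is correct, and on the expectation half it is essentially the paper's own argument: the paper likewise works with $Z=\hat\alpha=\max\{Y_1,\ldots,Y_{\tilde n}\}$ for $\tilde n = |\members|+n_p$ i.i.d.\ $\geometric\!\left(\frac{\gamma}{1+\gamma}\right)$ variables, uses the exact CDF to get $\Pr{\hat\alpha \leq \alpha_{\min}} = e^{-\epsilon' \tilde n}$ (this closed form, coming from $\alpha_{\min}$ being an $e^{-\epsilon'}$-quantile, is the one thing the paper's route buys over your union bound), bounds the deficit caused by the floor by $\alpha_{\min}$ times this exponentially small probability, and divides by the diverging $\EV{\hat\alpha}$. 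Incidentally, your signed decomposition $\alpha=\hat\alpha+D$ with $D\geq 0$ is cleaner here: the paper's display writes the correction term with a minus sign even though $\alpha\geq\hat\alpha$ pointwise forces the ratio to be at least $1$; the slip is harmless since only $|{\EV{\alpha}-\EV{\hat\alpha}}|\to 0$ matters. Where you genuinely depart from the paper is the variance: the paper disposes of it with the single sentence ``Similar analysis can also be applied to the variance,'' whereas you supply the argument and, in doing so, expose why that sentence glosses over a real issue. The expectation ratio converges for a cheap reason --- exponentially small numerator over a denominator $\EV{\hat\alpha}\approx\log_{1+\gamma}\tilde n$ that diverges --- but $\Var{\hat\alpha}$ does \emph{not} diverge; it stays $\Theta(1)$. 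So the variance case needs exactly the two extra ingredients you isolate and which have no analogue in the paper's computation: an $O(1)$ upper bound on $\Var{\hat\alpha}$ (your geometric upper tail plus doubly exponential lower tail around $t^\star=\log_{1+\gamma}\tilde n$, which makes the Cauchy--Schwarz bound on the covariance term vanish), and a lower bound $\Var{\hat\alpha}\geq c>0$ uniform in $|\members|$ (your anti-concentration argument that no single atom of the max of geometrics carries mass close to $1$, so that the additive $o(1)$ perturbation becomes a multiplicative $1+o(1)$). Both ingredients are standard facts about maxima of i.i.d.\ geometrics, and your sketches of them are sound. In short: same skeleton as the paper for $\EV{\cdot}$, but a strictly more complete treatment of the harder half of the lemma, which the paper leaves as an assertion.
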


With Lemma~\ref{lemma:expectation+variance}, we can claim that the lower bound $\alpha_{\min}$ does not affect the mean and variance of the sketch $\hat{\alpha}$ too much.
So we can treat the $\alpha$ returned by Algorithm~\ref{alg:fmsketch} approximately the same as the vanilla non-private FM sketch,
and we can use the standard deviation results of the non-private FM sketch to analyze the error of \dpfm in a more fine-grained way.

The standard deviation of a non-private FM sketch estimation can be represented as $\rho N / \sqrt{M}$, where $\rho$ is a constant, $N$ is the cardinality and $M$ is the repetition~\cite{flajolet1985,lang2017back,flajolet2007hyperloglog}.
Our algorithm has $s(k'-1)$ union operations to derive an element of $\mathbf{U}$, and each set has $n_p$ phantom elements.
So the cardinality estimated by $\mathbf{U}_{:, (a_1, \ldots, a_S)}$ becomes $N_{(a_1, \ldots, a_S)}=\left|\bigcup_{\ell \in [S]}\cup_{j\neq a_\ell} \members\rel{j}{\ell}\right| + s(k'-1)n_p = \hat{n} - w^*(\grid_{a_1, \ldots, a_S}) + s(k'-1)n_p$, where $ w^*(\grid_{a_1, \ldots, a_S})$ represent the true intersection cardinality $|\members\rel{a_1}{1} \cap \ldots \cap\members\rel{a_S}{S}|$.
Based on the property of the non-private FM sketch and the value of $M$ and $n_p$, we can state the following lemma.

\begin{theorem}
\label{thm:sketch-std}
    Given a constant $\rho$ such that the non-private FM sketch's standard deviation is $\rho N / \sqrt{M}$ where $N$ is the cardinality, and $M$ is number of repetitions. 
    With $n_p$ and $\epsilon'$ set as in Algorithm~\ref{algo:dpfms-est}, each intersection cardinality estimate generated has standard deviation \begin{align*}
        \sigma_{(a_1, \ldots, a_s)} = \frac{\rho (n - w^*(\grid_{(a_1, \ldots, a_s)}))}{\sqrt{M}} +  \frac{4\rho s(k'-1)\sqrt{\log (1/\delta)}}{\epsilon_2}
    \end{align*} 
\end{theorem}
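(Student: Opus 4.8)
The plan is to track how the error of a single non-private FM sketch estimate propagates through the affine operations of Algorithm~\ref{algo:dpfms-est}, since all of the randomness in the final weight $w(\grid_{(a_1,\ldots,a_s)})$ originates from one cardinality estimate. First I would isolate that estimate: by the mergeability of FM sketches under $\max$, the quantity $\fm_{(a_1,\ldots,a_s)} = \harmonic{\mathbf{U}_{:,(a_1,\ldots,a_s)}}$ computed in Line~\ref{algstep:union_est} is the harmonic-mean FM estimate of the set $\bigcup_{\ell\in[s]}\overline{\members\rel{a_\ell}{\ell}}$ together with the accumulated phantom elements, so $\hat N \coloneqq (1+\gamma)^{\fm_{(a_1,\ldots,a_s)}}$ estimates a set of true cardinality $N_{(a_1,\ldots,a_s)} = n - w^*(\grid_{(a_1,\ldots,a_s)}) + s(k'-1)n_p$. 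Invoking Lemma~\ref{lemma:expectation+variance}, the DP sketch behaves like the non-private one in the large-cardinality regime, so by the stated property its standard deviation is $\rho\, N_{(a_1,\ldots,a_s)}/\sqrt{M}$.

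Next I would trace $\hat N$ through the remaining lines. Line~\ref{algstep:union_debias} sets $\cca_{(a_1,\ldots,a_s)} = \hat N - s(k'-1)n_p$, and Line~\ref{algstep:intersect_est} sets $w(\grid_{(a_1,\ldots,a_s)}) = \hat n - \cca_{(a_1,\ldots,a_s)} = \hat n - \hat N + s(k'-1)n_p$. Treating $\hat n$ and the phantom-count correction $s(k'-1)n_p$ as deterministic, these are affine maps of $\hat N$ with slope $\pm 1$, hence they leave the standard deviation unchanged, giving $\sigma_{(a_1,\ldots,a_s)} = \rho\, N_{(a_1,\ldots,a_s)}/\sqrt{M}$. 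Substituting the expression for $N_{(a_1,\ldots,a_s)}$ and splitting into the data-dependent and phantom parts yields
\begin{align*}
\sigma_{(a_1,\ldots,a_s)} = \frac{\rho\,(n - w^*(\grid_{(a_1,\ldots,a_s)}))}{\sqrt{M}} + \frac{\rho\, s(k'-1)\, n_p}{\sqrt{M}}.
\end{align*}

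To finish, I would substitute the algorithm's settings $\epsilon' = \epsilon_2/(4\sqrt{M\log(1/\delta_2)})$ and $n_p = \lceil 1/(e^{\epsilon'}-1)\rceil$. Using $e^{\epsilon'}-1 \approx \epsilon'$ for the small $\epsilon'$ in the composition regime (and dropping the ceiling, which is negligible once the cardinality is large), $n_p \approx 1/\epsilon' = 4\sqrt{M\log(1/\delta_2)}/\epsilon_2$, so the second term collapses to $4\rho\, s(k'-1)\sqrt{\log(1/\delta_2)}/\epsilon_2$, which is the claimed expression (the statement writes $\delta$ for $\delta_2$).

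The main obstacle is justifying the first step rigorously: that after the two nested $\max$ operations merging the complement sketches of all parties, the harmonic mean, and the exponentiation $(1+\gamma)^{(\cdot)}$, the resulting estimator genuinely inherits the non-private relative standard deviation $\rho N/\sqrt{M}$. This rests on (i) the exactness of FM mergeability under $\max$, so that the merged object is a true FM sketch of the complement-union, and (ii) Lemma~\ref{lemma:expectation+variance}, which is only asymptotic and valid when the cardinality is large; for intersections whose complement-union $N_{(a_1,\ldots,a_s)}$ is small the approximation degrades. A secondary point I would flag is that $\hat n$ is itself a noisy Laplace estimate from the $\epsilon_0$ query, so strictly its $O(1/\epsilon_0)$ standard deviation adds in quadrature; the stated formula suppresses this by treating $\hat n$ as effectively exact, which is reasonable because that noise is independent of $n$ and of the per-intersection error.
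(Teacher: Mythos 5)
Your proposal is correct and follows essentially the same route as the paper: identify the true cardinality estimated by $\mathbf{U}_{:,(a_1,\ldots,a_s)}$ as $n - w^*(\grid_{(a_1,\ldots,a_s)}) + s(k'-1)n_p$, apply the non-private standard deviation $\rho N/\sqrt{M}$ (justified via Lemma~\ref{lemma:expectation+variance}), note that the subsequent debiasing and subtraction from $\hat{n}$ are affine and preserve the standard deviation, and then substitute $n_p \approx 1/\epsilon' = 4\sqrt{M\log(1/\delta_2)}/\epsilon_2$ using $e^{y}\approx 1+y$. Your closing remarks --- that the argument is only valid in the large-cardinality regime where Lemma~\ref{lemma:expectation+variance} applies, and that the Laplace noise in $\hat{n}$ is silently suppressed --- are caveats the paper glosses over, but they do not change the result.
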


The result is directly derived after plugging in the value of $N_{(a_1, \ldots, a_S)}$ and $n_p$, and use the approximation $e^{y}\approx y+1$ when $y$ is a small positive number.

\mypara{Final utility cost.}
When we set $k'=k$ as indicated in the non-private setting~\cite{ding2016k}, we can show that Algorithm~\ref{algo:vfc} is a $(\beta, \lambda)$-approximation algorithm with assumption of accessibility to some guaranteed central (private) \kmeans algorithm.

\begin{theorem}
\label{thm:final-utility}
    Assume the data parties have access to a differentially private ($\beta_{\mathit{priv}}, \lambda_{\mathit{priv}}$)-approximate \kmeans algorithm, and the central server has access to a (non-private) ($\beta_{o}, 0$)-approximate \kmeans algorithm. 
    Algorithm~\ref{algo:vfc} with \dpfmpsgen and \dpfmpsest is a $(\beta, \lambda)$-approximation algorithm with probability $1-\omega$, where
    \begin{align*}
        & \beta =  2\beta_{priv} + 4\beta_0 + 4 \beta_0 \beta_{\mathit{priv}} , \\
        & \lambda =  2(\beta_0 + 1)S\lambda_{\mathit{priv}} 
        + O\left( \frac{\beta_0 m^2 k^{1.5S}}{\sqrt{\omega}} \left(\frac{n}{\sqrt{ M}} +  \frac{S(k-1)\sqrt{\log (1/\delta)}}{\epsilon_2}  \right) \right)
    \end{align*}
\end{theorem}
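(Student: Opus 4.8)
The plan is to treat the weighted grid $(\grid, w^*)$ with the \emph{true} intersection cardinalities $w^*$ as an approximate coreset~\cite{har2004coresets} of $\X$, and to bound $\cost{\X}(\centroids)$ against $\opt^{k}_{\X}$ by separately accounting for the quantization error of replacing each user by its grid node, the local clustering error, and the weight-estimation error. First I would let $g(x)=(c\rel{a_1}{1},\ldots,c\rel{a_S}{S})$ be the grid node of the intersection cell containing user $x$. Because the feature blocks are disjoint, the squared distance splits across parties, $\norm{x-g(x)}^2=\sum_{\ell=1}^{S}\norm{x\rel{}{\ell}-c\rel{a_\ell}{\ell}}^2$, so the total quantization error is exactly the sum of the local costs:
\begin{align*}
\Delta \;\eqdef\; \sum_{x\in\X}\norm{x-g(x)}^2 \;=\; \sum_{\ell=1}^{S}\cost{\X\rel{}{\ell}}\!\left(\centroids\rel{}{\ell}\right).
\end{align*}

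Next I would relate the true cost to the grid cost via the approximate triangle inequality $\norm{a-c}^2\le 2\norm{a-b}^2+2\norm{b-c}^2$ applied in both directions: assigning each $x$ to the center nearest $g(x)$ gives $\cost{\X}(\centroids)\le 2\Delta+2\,\cost{\grid,w^*}(\centroids)$, and evaluating the grid under the global optimum gives $\opt^{k}_{\grid,w^*}\le \cost{\grid,w^*}(\centroids^*)\le 2\Delta+2\,\opt^{k}_{\X}$. To control $\Delta$ I would use that, since $k'=k$, the block-$\ell$ coordinates $\Pi_\ell(\centroids^*)$ of the global optimum are a feasible $k$-center solution for $\X\rel{}{\ell}$; the assumed local $(\beta_{priv},\lambda_{priv})$ guarantee then yields $\cost{\X\rel{}{\ell}}(\centroids\rel{}{\ell})\le \beta_{priv}\,\opt^{k}_{\X\rel{}{\ell}}+\lambda_{priv}\le \beta_{priv}\,\cost{\X\rel{}{\ell}}(\Pi_\ell(\centroids^*))+\lambda_{priv}$, and summing over $\ell$ with $\sum_\ell \cost{\X\rel{}{\ell}}(\Pi_\ell(\centroids^*))\le \opt^{k}_{\X}$ gives $\Delta\le\beta_{priv}\opt^{k}_{\X}+S\lambda_{priv}$.

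Then I would pass through the server's $(\beta_0,0)$-approximate solver, which guarantees $\cost{\grid,\hat{w}}(\centroids)\le \beta_0\,\opt^{k}_{\grid,\hat{w}}$ for the \emph{estimated} weights $\hat{w}$. Switching between $\hat{w}$ and $w^*$ for any fixed center set changes the weighted cost by at most $D^2\sum_{g}\lvert w^*(g)-\hat{w}(g)\rvert$, where $D^2=O(m)$ is the squared diameter of the normalized domain $[-1,1]^m$. The total weight deviation is controlled with Theorem~\ref{thm:sketch-std}: each of the $(k')^S=k^S$ cells has estimate standard deviation at most $\sigma_{\max}=\rho n/\sqrt{M}+4\rho S(k-1)\sqrt{\log(1/\delta)}/\epsilon_2$, so Chebyshev's inequality per cell at failure level $\omega/k^S$ together with a union bound makes $\lvert w^*(g)-\hat{w}(g)\rvert\le k^{S/2}\sigma_{\max}/\sqrt{\omega}$ hold for all $k^S$ cells simultaneously with probability $1-\omega$; summing over the cells then produces the $k^{1.5S}/\sqrt{\omega}$ factor seen in the additive term.

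Finally I would chain the pieces: substitute the reverse bound into the server guarantee, convert $\hat{w}\to w^*$ on both the achieved and the optimal grid cost (the weight error incurred inside $\opt^{k}_{\grid,\hat{w}}$ is what carries the extra $\beta_0$ prefactor), and feed the result into the forward bound, collecting the coefficient on $\opt^{k}_{\X}$ to $2\beta_{priv}+4\beta_0+4\beta_0\beta_{priv}$ and the remaining $S\lambda_{priv}$ and weight-error pieces into $\lambda$. To reach the stated $2(\beta_0+1)S\lambda_{priv}$ rather than a looser constant, I would replace the naive factor-$2$ triangle steps where possible by the bias--variance (Pythagorean) identity $\sum_{x\in\members}\norm{x-c}^2=\sum_{x\in\members}\norm{x-\bar x}^2+\lvert\members\rvert\,\norm{\bar x-c}^2$, exploiting that each local center is (approximately) the mean of its cell. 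I expect the weight-error accounting to be the main obstacle: one must use Chebyshev rather than Markov to get the $1/\sqrt{\omega}$ dependence, note that the inclusion--exclusion estimator's variance scales with the \emph{union} cardinality $n-w^*$ rather than the small intersection itself, and keep the $D^2$ and $k^{1.5S}$ factors from inflating the constants while the (approximate) triangle inequality is applied several times over.
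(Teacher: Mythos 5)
Your proposal is correct and follows essentially the same route as the paper's proof: decompose the cost via the relaxed triangle inequality into the local quantization error (bounded by the per-party $(\beta_{priv},\lambda_{priv})$ guarantees together with the projection argument $\sum_{\ell}\opt^{k}_{\X\rel{}{\ell}}\leq\opt^{k}_{\X}$) and the grid cost (bounded through the server's $\beta_0$ guarantee with a diameter-times-$\kappa$ correction for the weight error, $\kappa=\sum_{g}\lvert w(g)-|\members_g|\rvert$), then control $\kappa$ by Chebyshev's inequality at per-cell failure level $\omega/k^{S}$ and a union bound over the $k^{S}$ cells, which is exactly how the paper obtains the $k^{1.5S}\sigma/\sqrt{\omega}$ term. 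The one caveat you flag---that naive chaining yields a coefficient $(2+4\beta_0)S\lambda_{priv}$ rather than the stated $2(\beta_0+1)S\lambda_{priv}$---is equally present in the paper's own proof, whose intermediate steps likewise produce $(2+4\beta_0)S\lambda_{priv}$ before asserting the stated constant in the final display, so your derivation is no weaker (and your $O(m)$ squared-diameter bound for $[-1,1]^m$ is in fact tighter than the $4m^2$ the paper uses).
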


The multiplicative error $\beta$ is composed of $\beta_{\mathit{priv}}$ from the \localclustering and $\beta_0$ from the final non-private clustering on central server.
By the latest theoretical result~\cite{ghazi2020dpkmean}, $\beta_{priv}$ and $\beta_0$ are close to 1 when $k$ is a constant. 
So our approximate ratio ($\beta \approx 10$) is slightly larger than the bound in the non-private algorithm ($\beta\approx 9$)~\cite{ding2016k}, because of the randomness when estimating the cardinality satisfying DP.
Besides, DP \kmeans algorithms are unavoidable to have an additive error besides the multiplicative error~\cite{ghazi2020dpkmean}. 
The first term of our additive error $\lambda$ can be understood as the cumulative error of $S$ local DP \kmeans algorithm results\footnote{According to the best theoretical results of DP \kmeans in central setting \cite{ghazi2020dpkmean}, $\lambda_{\mathit{priv}}= O_{\beta, \eta}\left(\epsilon_1^{-1}(km + k^{O_{\eta}(1)})\polylog n \right)$ with a small positive constant $\eta$.}, and the second term comes from the cumulative cardinality estimation error of $k^S$ nodes.
The second term can dominate the error when the number of centers $k$ or the number of data parties $S$ is not small.
If both $n$ and $M$ are large enough, then the averaged additive error (divided by $n$) will still vanish.
Besides, we also empirically show that the losses can be small and even close to the central private \kmeans losses on some datasets when the privacy budget is large enough.

\mypara{Communication and computation cost.}
There is only one round of communication between the data parties and the central server.
Compared with the non-private baseline~\cite{ding2016k}, the additional computation cost for \dpfmpsgen on each data party is $O(nM)$ hashing operations for $M$ sets of sketches.
However, this process can be easily accelerated by parallel computation because each \dpfm can run independently.
The communication cost is $O\left((m\rel{}{\ell}+M)k' \right)$. 
Our algorithm's communication cost is independent of $n$ and it can even smaller than the non-private solution $O\left(m\rel{}{\ell}k' + n\right)$~\cite{ding2016k} when $n > Mk'$.
Compared with the non-private algorithm requiring $O(nS)$ operations for the intersection cardinality, our private algorithm needs $O(MSk'^S)$ for the server to estimate the weights.

\section{Improving Utility of the Algorithm}
\label{sec:improve}
We introduce two heuristic methods in this section to improve the empirical performance of our methods when $S$ is large.

\subsection{Improving Post-processing Estimation Algorithm for More than Two Data Parties}
\label{subsec:2way}

As the number of data parties increases, the accuracy of the weight estimation will decrease dramatically.  
From the error analysis of the cardinality estimation (Theorem~\ref{thm:sketch-std}), one can see that the second term in standard deviation $\sigma_{(a_1, \ldots, a_S)}$ increases proportionally with the number of data parties $S$.  
Furthermore, the total possible intersection combinations grow exponentially as $(k')^S$, which means fewer expected number of data points fall in the intersection, i.e., the expected $w^*(\grid_{(a_1, \ldots, a_S)})$.
The combination of the two factors means that the relative error of the intersection estimation explodes as the number of parties grows.  

Two observations give us hope to lessen the negative effect.
The first observation is that estimation of two-party intersection cardinalities is relatively accurate.
The second observation is based on the distributive property of set intersection:
\begin{align}
\small
    \left|\members\rel{a'_{\ell_1}}{\ell_1} \bigcap \members\rel{a'_{\ell_2}}{\ell_2} \right| = \sum_{\substack{(a_1, \ldots, a_S)\in [k']^S \\ a_{\ell_1} = a'_{\ell_1}, a_{\ell_2} = a'_{\ell_2}}} \left| \bigcap \members\rel{a_\ell}{\ell} \right| ,
    \label{eq:sum_intersection}
\end{align}
which give the connection between the all-party intersection cardinality (i.e., $w(\grid)$) and the two-party version (i.e., $w(\grid\rel{}{\ell_1, \ell_2})$). 
Thus, we propose to deal with this challenge by 1) {\it computing only all pair-wise intersection cardinalities}, and 2) using these two-party intersection cardinalities with Equation~\eqref{eq:sum_intersection} as constraints, and iteratively update the $w(\grid)$ to fulfill all these constraints.

The improved estimation is described in Algorithm~\ref{algo:multiparty} \dpfmpsimprove.
The central server first estimates the single-party cardinality of all local clusters with only $\FM\rel{}{\ell}$ (Line~\ref{algstep:estimate_one}).
Namely, $w\rel{a}{\ell}$ is an estimate for $|\members\rel{a}{\ell}|$.
Then the server initializes the full grid weights $w(\grid)$ with only the single-party cardinality information assuming no correlation between the inter-party attributes  (Line~\ref{algstep:init_uniform}).
Next, the server estimates all two-party weights with \dpfmpsest as a sub-procedure (Line~\ref{algstep:pair_weights}). 
To be more detailed, the grid weights $w(\grid\rel{}{\ell_1,\ell_2})$ returned by \dpfmpsest with $\FM\rel{}{\ell_1}$ and $\FM\rel{}{\ell_2}$ are the estimates for  $\left\{\left|\members\rel{a_{\ell_1}}{\ell_1}\bigcap \members\rel{a_{\ell_2}}{\ell_2} \right| \ \mid \ a_{\ell_1} , a_{\ell_2} \in [k']\right\}$.
With all pairs of $w(\grid\rel{}{\ell_1, \ell_2})$, the server iteratively updates the grid weights $w(\grid)$ to make them consistent with all the two-party intersection cardinalities $w(\grid\rel{}{\ell_1, \ell_2})$.
In each iteration, the server first randomly selects a pair of data parties (Line~\ref{algstep:pair_party}), groups the current full grid weights $w(\grid)$ by the cluster indices of the chosen two parties, and sums the weights in the same group to generate a two-party grid weights $\tilde{w}(\grid\rel{}{\ell_1, \ell_2})$ according to Equation~\eqref{eq:sum_intersection}:
{\small
$$
    \tilde{w}(\grid\rel{(a'_{\ell_1}, a'_{\ell_2})}{\ell_1, \ell_2}) =  \sum_{\substack{(a_1, \ldots, a_S)\in [k']^S \\ a_{\ell_1} = a'_{\ell_1}, a_{\ell_2} = a'_{\ell_2}}}w(\grid_{(a_1, \ldots, a_S)})
$$
}
We use the differences between $w(\grid\rel{}{\ell_1, \ell_2})$ and $\tilde{w}(\grid\rel{}{\ell_1, \ell_2})$ to update the weight evenly with a step size $\eta_t$ (Line~\ref{algstep:update}).
After sufficient update iterations, the full-party grid weight $w(\grid)$ is expected to  approximately satisfy the constraints (Equation~\eqref{eq:sum_intersection}) with all two-party weights $w(\grid\rel{}{\ell_1, \ell_2})$.
Because both \dpfmpsimprove and \dpfmpsest are post-processing components in the DP definition, the privacy guarantee in Theorem~\ref{thm:final-privacy} still holds for \dpfmpsimprove.

\begin{algorithm}
\caption{\dpfmpsimprove} \label{algo:multiparty}
\begin{algorithmic}[1]
\Require Estimated total number of users $\hat{n}$, privacy parameter $\epsilon_2$ and $\delta_2$, the FM partition sketches $\{\FM\rel{}{1}, \ldots, \FM\rel{}{s}\}$ 
\Ensure  Estimate of the weights $w(\grid)$
\State  $\epsilon' = \frac{\epsilon_2}{4\sqrt{M \log (1/\delta_2)}}$, $n_p = \lceil\frac{1}{e^{\epsilon'} - 1}\rceil$
\For{$\ell \in [S], a\in [k']$}
\State $w\rel{a}{\ell}\leftarrow (1+\gamma)^{\harmonic{\FM\rel{:, a}{\ell}}} - n_p$. \label{algstep:estimate_one} 
\EndFor
\State  $\forall \{a_1, \ldots, a_S\}\in[k']^S, w(\grid_{(a_1, \ldots, a_S)}) \leftarrow \hat{n} \times \prod_{\ell\in [S]}\frac{w\rel{a_\ell}{\ell}}{\hat{n}}$ \label{algstep:init_uniform} 

\For{$(\ell_1, \ell_2) \in [S]$ and $\ell_1 \neq \ell_2$}
    \State $w(\grid^{(\ell_1, \ell_2)})\leftarrow \dpfmpsest(\hat{n}, \epsilon_2, \delta_2, \{\FM\rel{}{\ell_1}, \FM\rel{}{\ell_2}\} )$\label{algstep:pair_weights}
\EndFor
\For{$t \in [T]$}  
    \State Randomly select a pair $(\ell_1, \ell_2) \in [S] \times [S]$ \label{algstep:pair_party}
    \State $\Tilde{w}(\grid^{(\ell_1, \ell_2)}) \leftarrow \texttt{Proj}(w(\grid), \ell_1, \ell_2)$ \label{algstep:proj}
    \State $\Delta\rel{}{\ell_1, \ell_2}\leftarrow \Tilde{w}(\grid\rel{}{\ell_1, \ell_2}) - w(\grid\rel{}{\ell_1, \ell_2})$ 
    \For{$(a'_{\ell_1}, a'_{\ell_2}) \in [k']^2$}
    \State $\forall (a_1, \ldots, a_S)\in [k']^S, a_{\ell_1} = a'_{\ell_1}, a_{\ell_2} = a'_{\ell_2}$
    \State $w(\grid_{(a_1, \ldots, a_S)}) \leftarrow w(\grid_{(a_1, \ldots, a_S)}) - \frac{\eta_t}{(k')^{S-2}}\Delta^{(\ell_1, \ell_2)}_{(a'_{\ell_1}, a'_{\ell_2})}$ \label{algstep:update} 
    \EndFor
\EndFor
\State Return $w(\grid)$
\end{algorithmic}
\end{algorithm}

\subsection{Auto-adjusted $k'$}
\label{subsec:vary-local-k}
The utility result of Theorem~\ref{thm:final-utility} is given by assuming the number of local and central clusters is the same, i.e., $k'=k$.
However, we can see a trade-off on the value of $k'$.
In the non-private setting, the larger the $k'$ is, the more local dataset information is preserved for the final central clustering.
On the other hand, the larger the $k'$ is, the more fine-grained the grid becomes, and the fewer records (or smaller $w^*(\grid_{a_1, \ldots, a_S})$) are expected to be assigned to the grid node.
According to Theorem~\ref{thm:sketch-std}, a larger $k'$ can make the error of $w(\grid_{a_1, \ldots, a_S})$ larger and increase the final cost.

Giving a closed form solution for the local $k'$ to minimize the final loss is difficult.
However, as we can see from the error bound of Theorem~\ref{thm:final-utility}, the grid quality reflected by the second term of $\lambda$ plays an important role.
We propose an empirical rule to set $k' = \max\{k_0, k^{1/S} \}$ for \dpfmpsimprove to prevent the true cardinalities from being overwhelmed by the noise, where $k_0$ is the smallest integer that satisfies $2\sigma_{(a_1, \ldots, a_s)} \geq \frac{\hat{n}}{k_0^2}$.
Because $w^*(\grid_{(a_1, \ldots, a_S)})$ in Theorem~\ref{thm:sketch-std} is unknown, we approximate it as $\frac{\hat{n}}{k_0^2}$; we set $s=2$ in the standard deviation because \dpfmpsimprove only decodes pairs of intersection cardinalities; and we set $\rho=0.649$ according to \cite{lang2017back}.
We experimentally demonstrate that such a choice of $k'$ can be good choices for the datasets in our experiments in Section~\ref{subsec:exp_vary_local_k}.

\section{Experiments}
\label{sec:experiments}

\begin{figure*}
    \centering
    \begin{subfigure}[b]{0.9\textwidth}
        \centering
        \includegraphics[width=\textwidth]{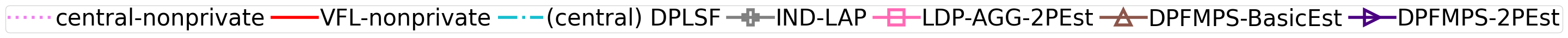}
    \end{subfigure}
    \\
     \begin{subfigure}[b]{0.24\textwidth}
         \centering
         \includegraphics[width=\textwidth]{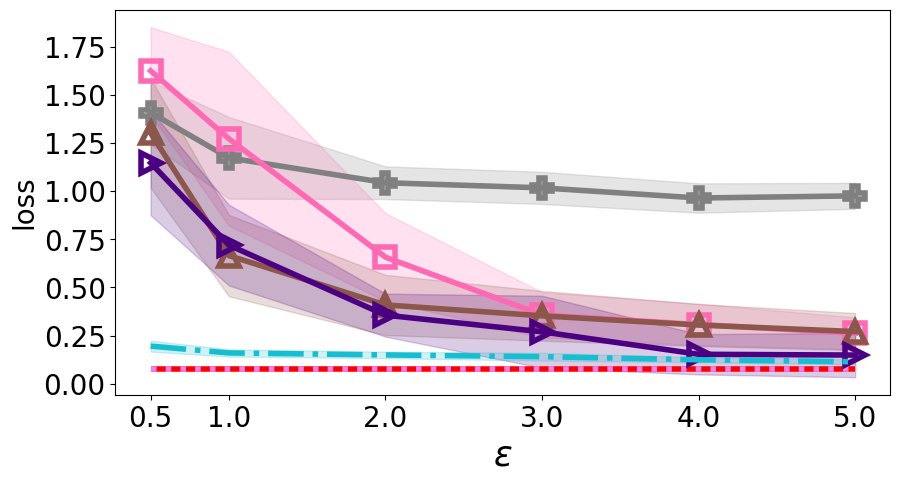}
         \vspace{-0.4cm}
         \caption{Mixed Guassian $S=2$}
         \label{fig:e2e_mg_2}
     \end{subfigure}
     \begin{subfigure}[b]{0.24\textwidth}
         \centering
         \includegraphics[width=\textwidth]{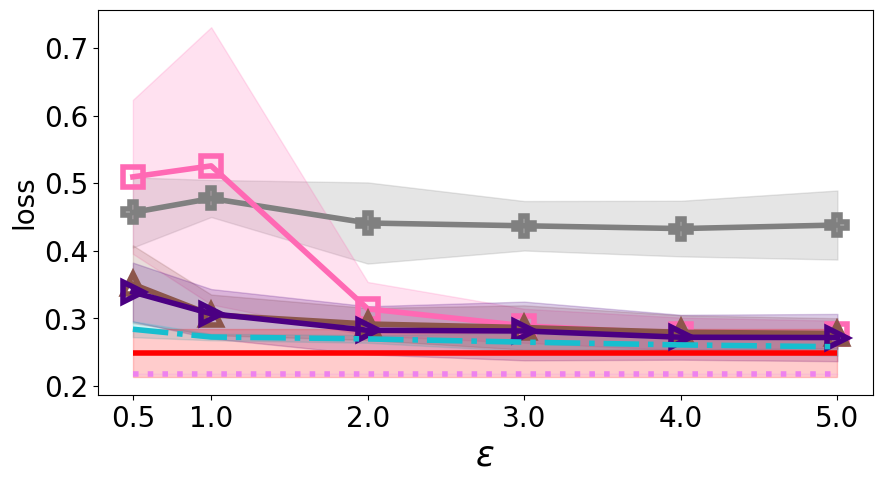}
         \vspace{-0.4cm}
         \caption{Taxi $S=2$}
         \label{fig:e2e_taxi_2}
     \end{subfigure}
     \begin{subfigure}[b]{0.24\textwidth}
         \centering
         \includegraphics[width=\textwidth]{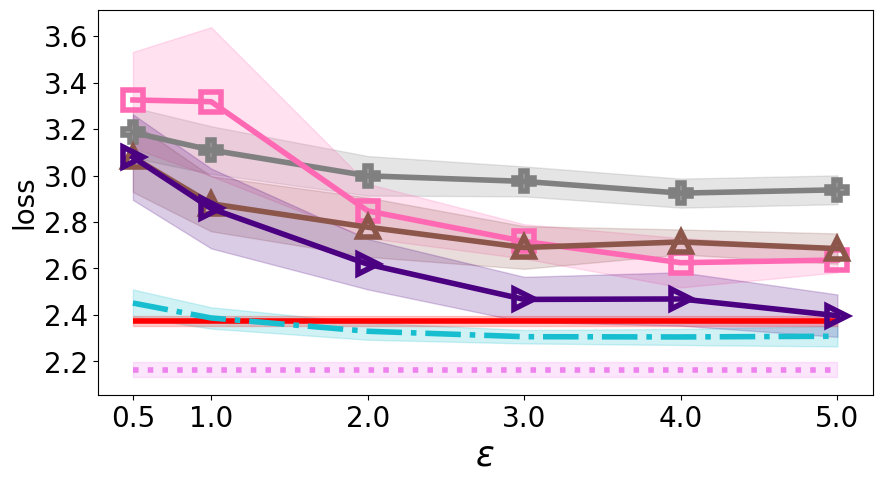}
         \vspace{-0.4cm}
         \caption{Loan $S=2$}
         \label{fig:e2e_loan_2}
     \end{subfigure}
     \begin{subfigure}[b]{0.24\textwidth}
         \centering
         \includegraphics[width=\textwidth]{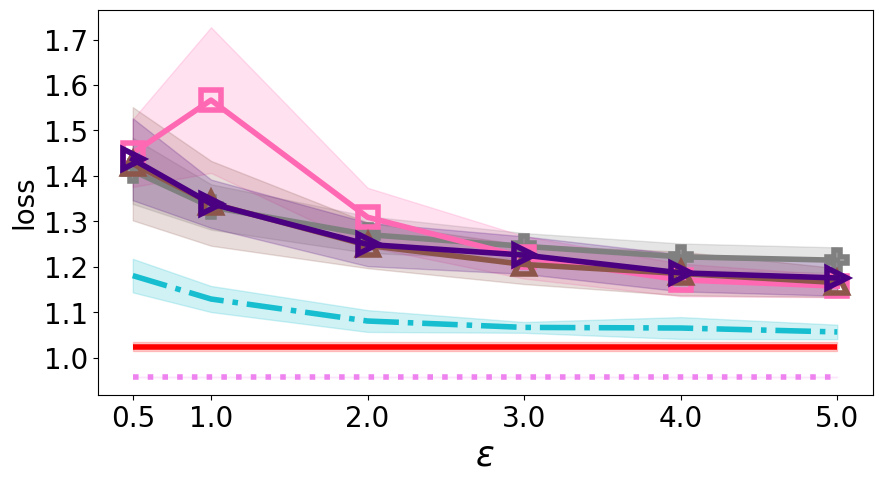}
         \vspace{-0.4cm}
         \caption{Letter $S=2$}
         \label{fig:e2e_letter_2}
     \end{subfigure}
     \\
     \begin{subfigure}[b]{0.24\textwidth}
         \centering
         \includegraphics[width=\textwidth]{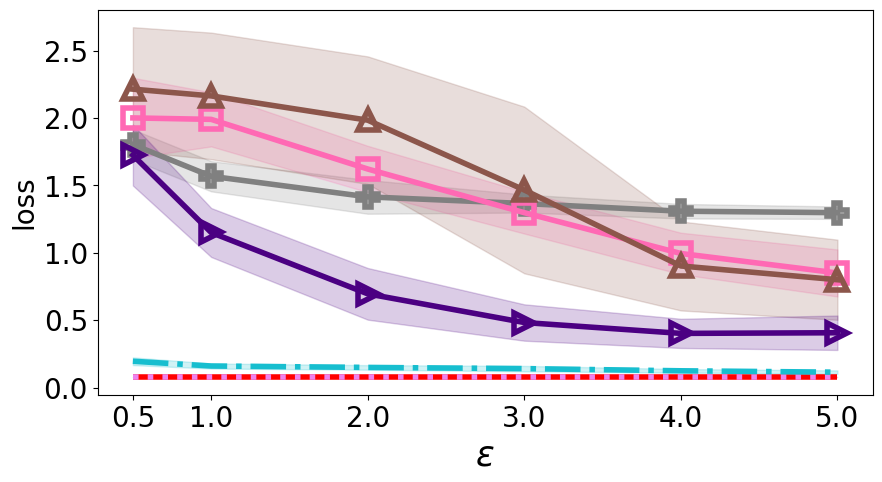}
         \vspace{-0.4cm}
         \caption{Mixed Guassian $S=4$}
     \end{subfigure}
     \begin{subfigure}[b]{0.24\textwidth}
         \centering
         \includegraphics[width=\textwidth]{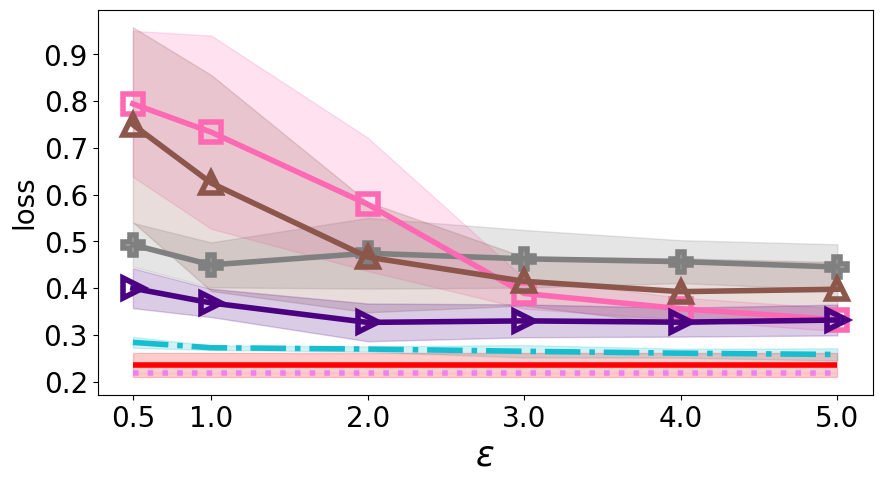}
         \vspace{-0.4cm}
         \caption{Taxi $S=4$}
         \label{fig:e2e_taxi_4}
     \end{subfigure}
     \begin{subfigure}[b]{0.24\textwidth}
         \centering
         \includegraphics[width=\textwidth]{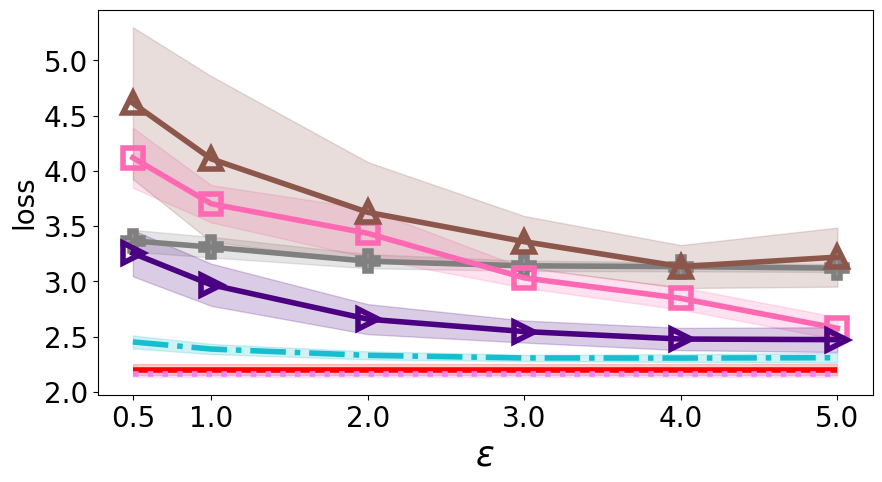}
         \vspace{-0.4cm}
         \caption{Loan $S=4$}
         \label{fig:e2e_loan_4}
     \end{subfigure}
     \begin{subfigure}[b]{0.24\textwidth}
         \centering
         \includegraphics[width=\textwidth]{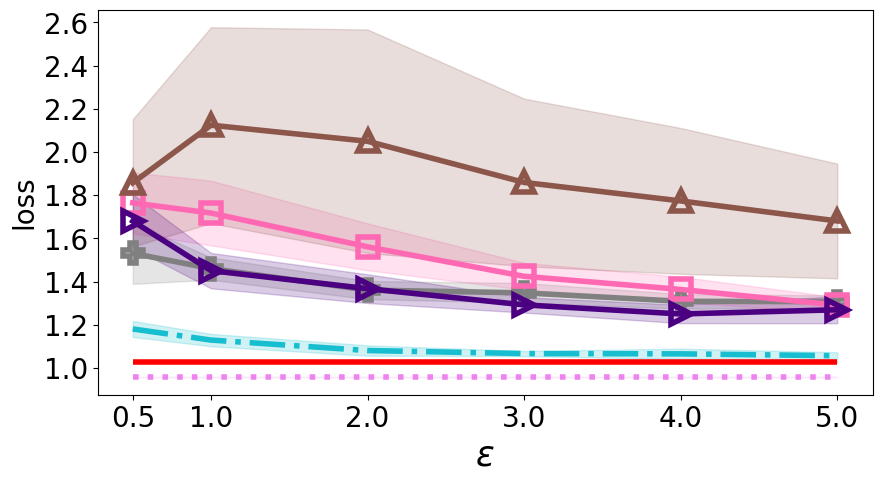}
         \vspace{-0.4cm}
         \caption{Letter $S=4$}
         \label{fig:e2e_letter_4}
     \end{subfigure}
    \vspace{-0.1cm}
    \caption{\kmeans loss with final $k$ centers. }
    \vspace{-0.1cm}
    \label{fig:e2e_loss}
\end{figure*}

\begin{figure*}
    \centering
    \begin{subfigure}[b]{0.3\textwidth}
        \centering
        \includegraphics[width=\textwidth]{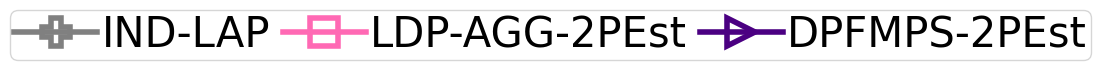}
    \end{subfigure}
    \\
     \begin{subfigure}[b]{0.24\textwidth}
         \centering
         \includegraphics[width=\textwidth]{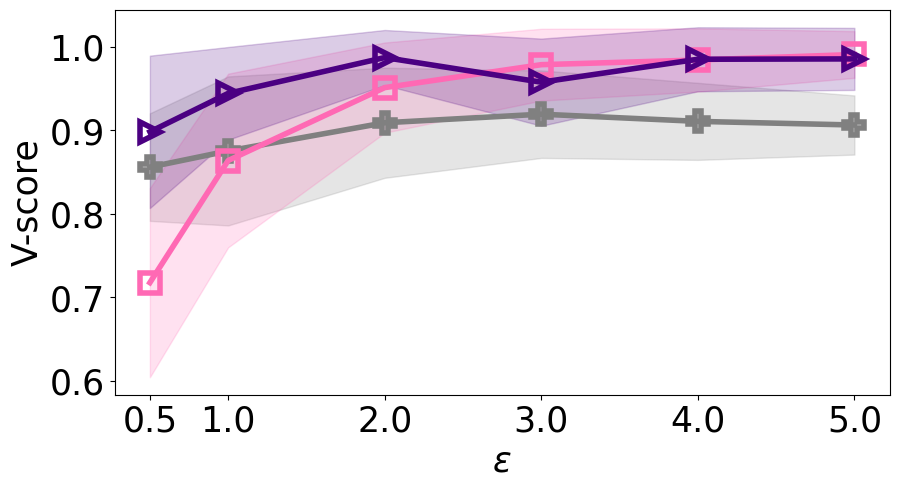}
         \vspace{-0.4cm}
         \caption{Mixed Guassian $S=2$}
         \label{fig:label_mg_2}
     \end{subfigure}
     \begin{subfigure}[b]{0.24\textwidth}
         \centering
         \includegraphics[width=\textwidth]{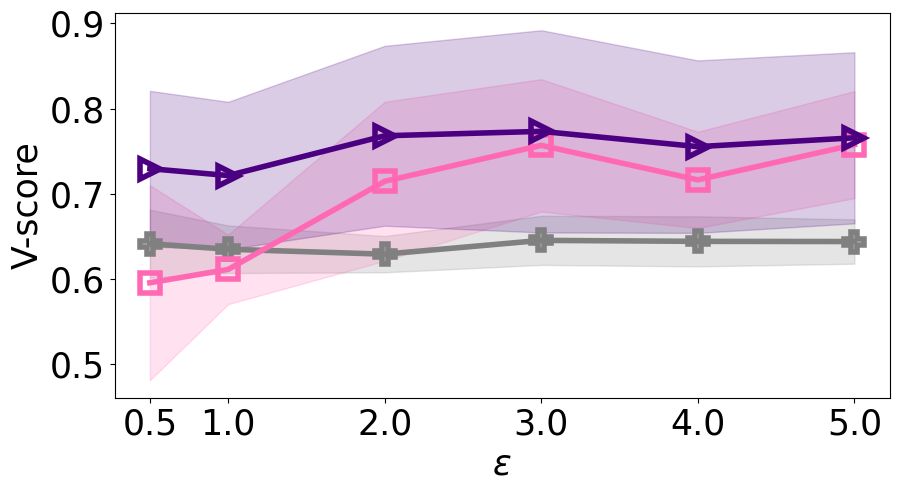}
         \vspace{-0.4cm}
         \caption{Taxi $S=2$}
         \label{fig:label_taxi_2}
     \end{subfigure}
     \begin{subfigure}[b]{0.24\textwidth}
         \centering
         \includegraphics[width=\textwidth]{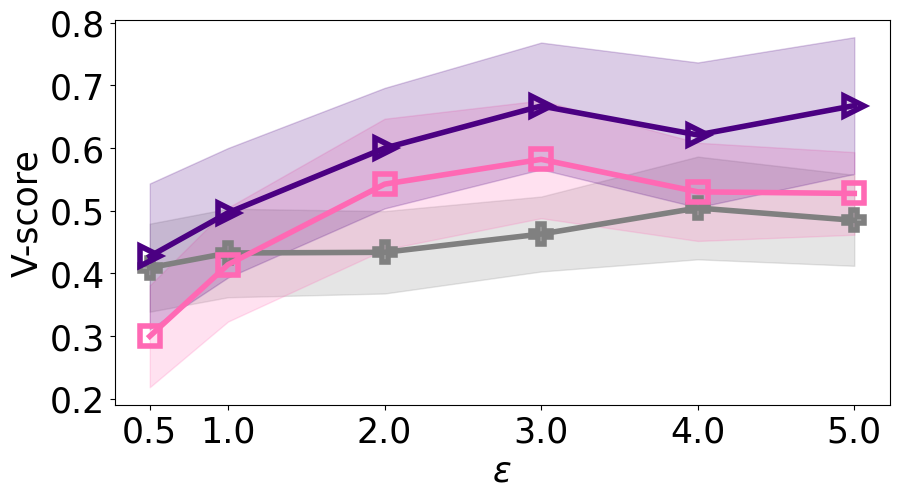}
         \vspace{-0.4cm}
         \caption{Loan $S=2$}
         \label{fig:label_loan_2}
     \end{subfigure}
     \begin{subfigure}[b]{0.24\textwidth}
         \centering
         \includegraphics[width=\textwidth]{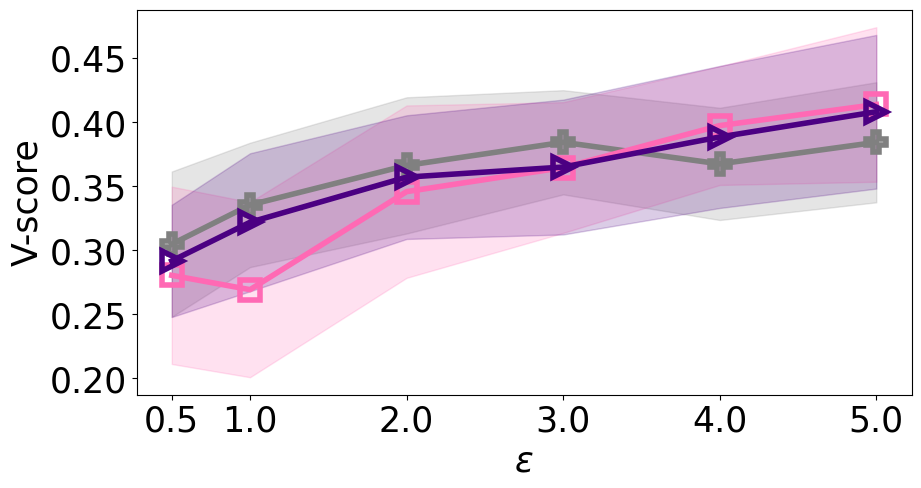}
         \vspace{-0.4cm}
         \caption{Letter $S=2$}
         \label{fig:label_letter_2}
     \end{subfigure}
     \\
     \begin{subfigure}[b]{0.24\textwidth}
         \centering
         \includegraphics[width=\textwidth]{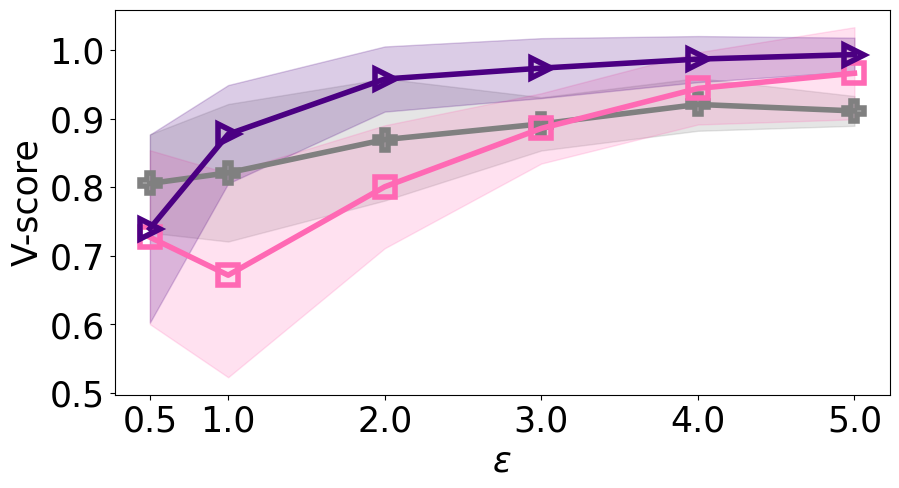}
         \vspace{-0.4cm}
         \caption{Mixed Guassian $S=4$}
         \label{fig:label_mg_4}
     \end{subfigure}
     \begin{subfigure}[b]{0.24\textwidth}
         \centering
         \includegraphics[width=\textwidth]{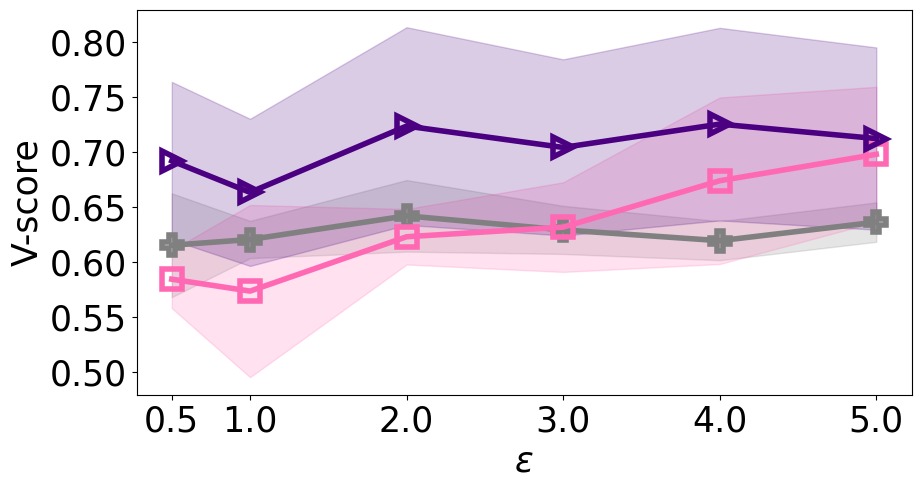}
         \vspace{-0.4cm}
         \caption{Taxi $S=4$}
         \label{fig:label_taxi_4}
     \end{subfigure}
     \begin{subfigure}[b]{0.24\textwidth}
         \centering
         \includegraphics[width=\textwidth]{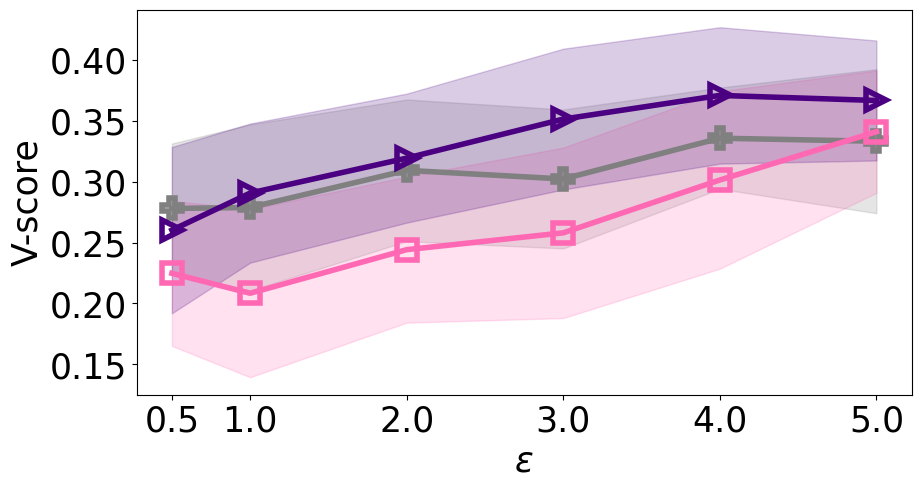}
         \vspace{-0.4cm}
         \caption{Loan $S=4$}
         \label{fig:label_loan_4}
     \end{subfigure}
     \begin{subfigure}[b]{0.24\textwidth}
         \centering
         \includegraphics[width=\textwidth]{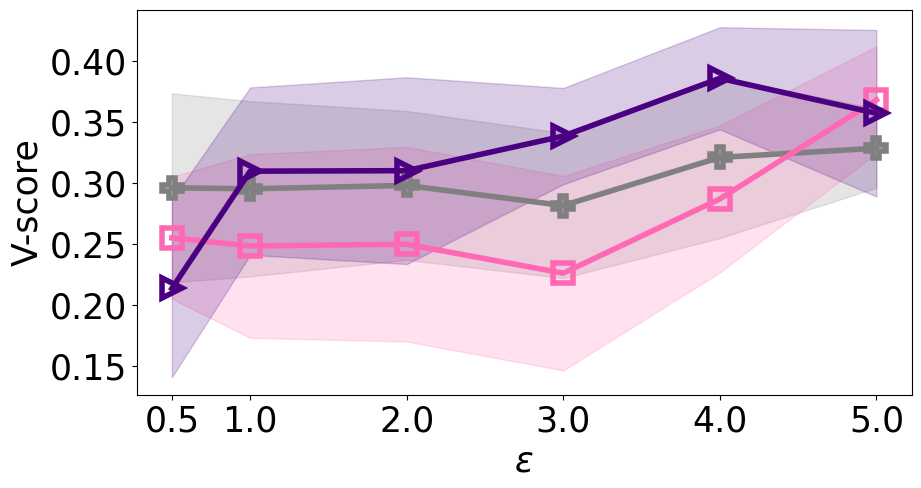}
         \vspace{-0.4cm}
         \caption{Letter $S=4$}
         \label{fig:label_letter_4}
     \end{subfigure}
    \vspace{-0.1cm}
    \caption{V-scores with final $k$ centers. }
    \vspace{-0.1cm}
    \label{fig:label_score}
\end{figure*}

\mypara{Datasets.}
We use four different datasets in our experiments, preprocess the data so that all attributes are normalize to $[-1, 1]$.

\mysubpara{Synthetic mixed Gaussian dataset with $k$ centers.}
To echo the implicit assumption of \kmeans problem, we first synthesize a mixed Gaussian dataset with $m=8$ for evaluation.
We generated this data in a similar way as \cite{chang2021locally} but enforced each dimension's range to be $[-1, 1]$ instead of in a $L_2$ ball.
We first randomly sample $k=5$ centers in the domain, then randomly sample $n=20,000$ data points from the Gaussian distribution with the means at those $k$ centers.

\mysubpara{New York Taxi dataset~\cite{taxi}.}
This taxi dataset contains 8 attributes and 100,000 records of taxi trips information, including pick-up/drop-off times and locations.
We preprocess the pick-up/drop-off times to make them a number indicating the time in a week, ranging from 0 to $60 \times 60 \times 24 \times 7$ before normalizing them. 

\mysubpara{Loan dataset~\cite{loan}.}
The original Loan dataset has 120 attributes. 
We extract the first 60k records and 16 numerical attributes of the applicant's credit information and apartment information.
We clip these attributes' values and make them upper-bounded by their original 95\% quantile to eliminate the outliers.

\mysubpara{Letter dataset~\cite{letter}.}
This dataset consists of information about black-and-white rectangular pixels displayed as the letters in the English alphabet.
There are 16 attributes and 20,000 records in the datasets.

We randomly partition the attributes of the mixed Gaussian dataset and the Loan datasets into $S$ parties; we split the attributes of the Taxi and Letter dataset with high correlation into different parties to simulate the worst case of information loss in the VFL.

\mypara{Parameter settings.}
There are different methods to decide the best $k$ value for the real-world datasets.
The Silhouette method~\cite{rousseeuw1987silhouettes} is one of the most commonly used.
Silhouette measures how closely data points are matched to data within its cluster and how loosely they are matched to data of the neighboring cluster.
The datasets we use in this paper all have relatively high Silhouette scores when $k$ is smaller.
Thus, we fix $k=5$ for experiments in this paper to eliminate the effect of $k$ unless we explicitly mention it.

According to the experimental results in \cite{smith2020fmsketch}, a larger number of repeated FM sketches (hyper-parameter $M$ in our paper) leads to a smaller relative error of cardinality estimation.  
Thus, we set $M = 4096$ and $\delta=1/n$ as default in different experiments, making the communication and privacy cost reasonable in the cross-silo FL setting and achieving stable accuracy.

\mypara{Evaluation Metric.}
We evaluate the quality of the final output with two metrics. 
The first is the normalized central \kmeans loss with all the data points, i.e., $\frac{1}{n}\sum_{x \in \X} (\min_{c\in \centroids}\norm{x - c}^2)$, which measures \emph{how representative} the final centers are.
The second is the V-score~\cite{rosenberg2007v} in the scikit-learn package~\cite{scikit-learn}, which measures \emph{how consistent} the clustering results of VFL algorithms are when compared to the ground truth classes (for the synthetic dataset) or central non-private \kmeans results (for the real word datasets).
The V-score is the harmonic mean of two conditional entropy scores measuring the homogeneity and completeness.
It is a score between 0 and 1, and the closer to 1 the better.
Because of the space limitation, we refer readers to~\cite{rosenberg2007v} for detailed formulas.

\mypara{Compared methods.}
The adapted DPLSF is used as the \localclustering in all private VFL experiments. 
So we name the end-to-end private VFL clustering method based on their \membershipencoding and \weightestimate instantiations.
Our experiments compare the following methods:

\mypara{$\bullet$} our proposed method \underline{\dpfmpsest} and \underline{\dpfmpsimprove}\footnote{\url{https://anonymous.4open.science/r/public_vflclustering-63CD}};

\mypara{$\bullet$} two baseline methods \underline{\indca} and \underline{\ldpimprove} (\ldpimprove is an improved version of \ldpca using the same technique in Section~\ref{subsec:baselines} because directly applying the \ldpca gives us a very large loss when $S>2$);

\mypara{$\bullet$} the central private \kmeans method \underline{\textsf{(central) DPLSF}} from \cite{googledpclustering};

\mypara{$\bullet$} the \underline{\textsf{central non-private}} \kmeans from the scikit-learn package~\cite{scikit-learn};

\mypara{$\bullet$} a \underline{\textsf{VFL non-private}} implementation following~\cite{ding2016k}.

\subsection{End-to-end Comparison}
\label{subsec:e2e}

We first demonstrate the end-to-end performance of the algorithm and show the advantages of our proposed algorithms.

\mypara{\kmeans loss results.}
Figure~\ref{fig:e2e_loss} shows the loss of the final centers produced by different algorithms.
As we can see, the losses of non-private vertical federated \kmeans are higher than the that of the central version in most cases.
Notice that we set $k'=k=5$ for the non-private VFL algorithm, so there are $25$ or $625$ grid nodes when $S=2$ or $S=4$ accordingly.
It means some information is lost when we building the grid, and a more fine-grained grid can improve the utility in the non-private setting if we compare the figures in the first rows with the ones in the second row.
In some cases, the central DPLSF outperforms the non-private VFL baseline because the noise with a large privacy budget has a smaller impact on the final results than the information loss of representing the local data with local centers and memberships.

Comparing our proposed method, \dpfmpsimprove, with the two baseline methods and \dpfmpsest, we can see that our proposed method can produce final centers with consistently lower losses for most settings.
When there are only two parties ($S=2$), \dpfmpsest has the same performance as \dpfmpsimprove because they are the same. 
But \dpfmpsimprove largely improves over \dpfmpsest when $S=4$.
The exception is the outcomes of the \indca on the Letter dataset (Figure~\ref{fig:e2e_letter_2} and \ref{fig:e2e_letter_4}), where \indca approach has slightly better performance when $\epsilon$ is small. 
That is because the attributes of the Letter dataset are relatively independent and do not follow the underlying assumption of k-means algorithm. 
As for the LDP baseline, \ldpimprove,  we can see that its performance is close to our proposed method only with large privacy budget but is inferior when $\epsilon$ is small, which echoes with the theoretical results in Section~\ref{sec:membership}.
When the privacy budget is large enough, the LDP protocol has little randomness to perturb the local membership of a user. 
In contrast, the FM sketch has its inherent randomness, even with large privacy budgets.
However, one may also notice that the empirical losses are closer to the non-private ones than the theorem indicates.
It is because the accuracy of FM sketches is usually better than its theoretical guarantee, especially after normalization (Line~\ref{algstep:consistent} of Algorithm~\ref{algo:dpfms-est}) \cite{smith2020fmsketch, lang2017back}.

Moreover, comparing the private central with the private VFL algorithms, we can see that the private VFL almost always has a higher cost than the central one.
The gap exists because of two different reasons.
1) Only the local centers and the sketches are shared with the central server in VFL, so some information is lost.
2) When we use sketches to encode the cardinalities of the intersections, privacy budgets are split to both differential private local clustering and generating FM sketches.

\mypara{V-scores.}
When generating the synthetic mixed Gaussian dataset, we assign the same label for the data points drawn from the same center.
However, the real-world datasets have no label, or the number of labels is different from our experiment setting, so we apply the central non-private \kmeans algorithm to generate the labels for the dataset.
Thus, the closer the score is to 1, the more similar the clustering result is to the real labels (for Mixed Gaussian) or the central non-private \kmeans clustering (Taxi, Loan and Letter).

Figure~\ref{fig:label_score} presents the result.
The results align with the \kmeans loss results. In most settings, we can observe that \dpfmpsimprove outperforms the other two VFL private baseline methods in Figure~\ref{fig:label_score}.
For the Letter dataset, we can see that all three methods have lower scores than the ones of other datasets when $S=2$.
It is because the data in Letter have very independent attributes and bring advantages to the \indca.
However, \dpfmpsimprove still outperforms other methods on other datasets with different privacy budgets, and it also has performance close to best when $S=2$.

\begin{figure}
    \centering
    \begin{subfigure}[b]{0.25\textwidth}
        \centering
        \includegraphics[width=\textwidth]{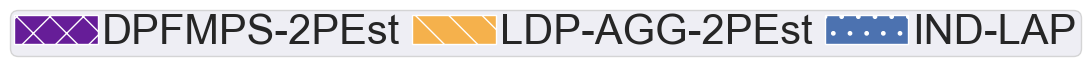}
    \end{subfigure}
    \\
     \begin{subfigure}[b]{0.23\textwidth}
         \centering
         \includegraphics[width=\textwidth]{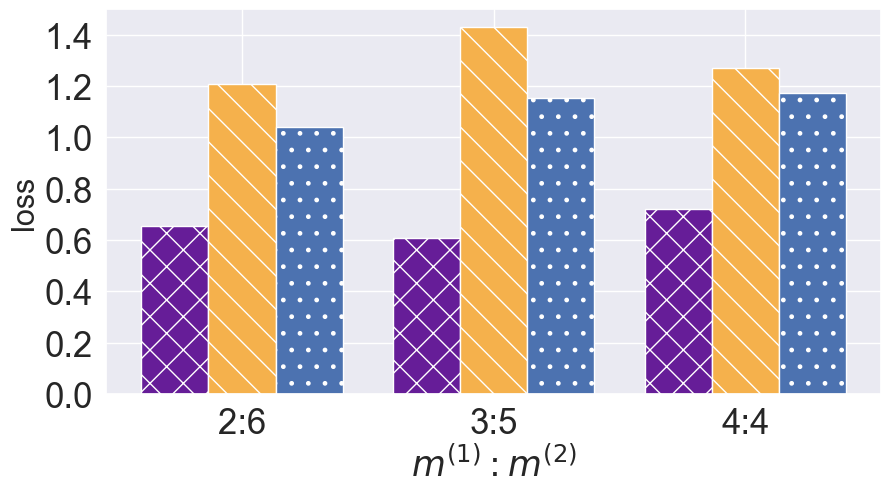}
         \vspace{-0.5cm}
         \caption{Mixed Guassian $\epsilon=1$}
         \label{fig:uneven_mg}
     \end{subfigure}
     \begin{subfigure}[b]{0.23\textwidth}
         \centering
         \includegraphics[width=\textwidth]{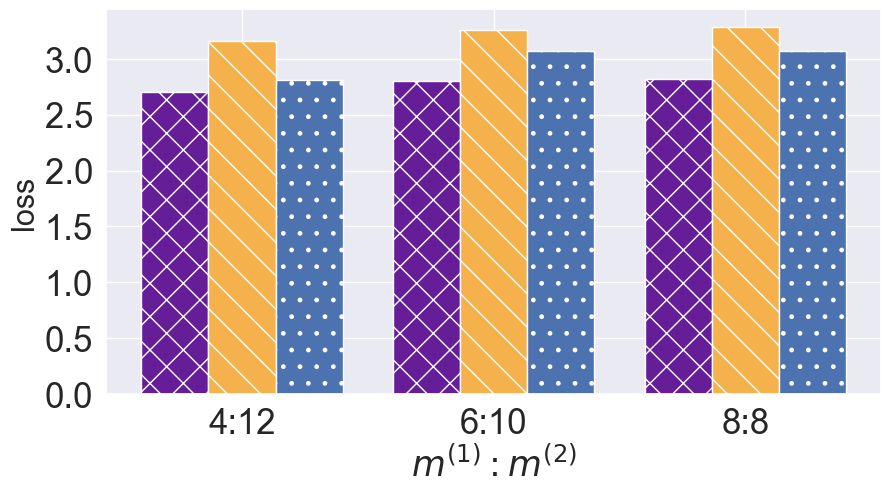}
         \vspace{-0.5cm}
         \caption{Loan $\epsilon=1$}
         \label{fig:uneven_loan}
     \end{subfigure}
     \vspace{-0.2cm}
    \caption{Effect of unevenly split dataset.}
    \vspace{-0.2cm}
    \label{fig:uneven}
\end{figure}

\begin{figure}
    \centering
    \begin{subfigure}[b]{0.25\textwidth}
        \centering
        \includegraphics[width=\textwidth]{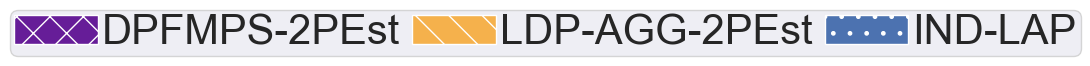}
    \end{subfigure}
    \\
     \begin{subfigure}[b]{0.23\textwidth}
         \centering
         \includegraphics[width=\textwidth]{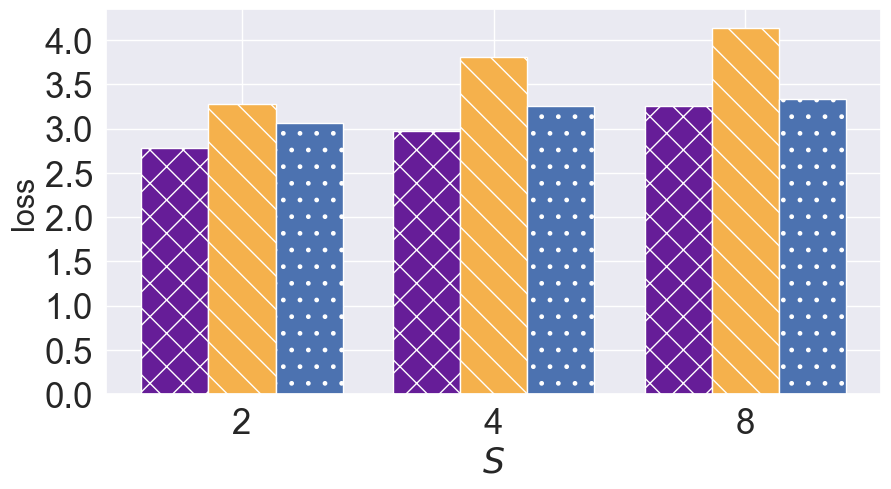}
         \vspace{-0.5cm}
         \caption{Loan $\epsilon=1$}
         \label{fig:s_eps_1}
     \end{subfigure}
     \begin{subfigure}[b]{0.23\textwidth}
         \centering
         \includegraphics[width=\textwidth]{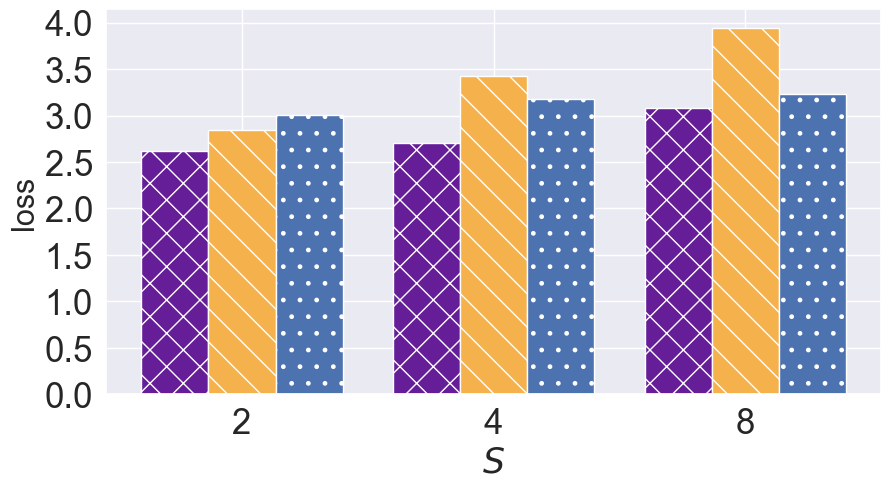}
         \vspace{-0.5cm}
         \caption{Loan $\epsilon=2$}
         \label{fig:s_eps_2}
     \end{subfigure}
     \vspace{-0.2cm}
    \caption{Effect of different $S$ on Loan dataset ($m=16$).}
    \vspace{-0.2cm}
    \label{fig:compare_s}
\end{figure}

\mypara{Effect of uneven number of attributes.}
The previous results show how \dpfmpsimprove outperforms other baselines method when the attributes are evenly split.
We also explore how the methods perform when the each party has different number of attributes.
Denote $m\rel{}{i}$ as the number of attributes in the $i$-th party's dataset. 
We split the mixed Gaussian dataset so that $m\rel{}{1} : m\rel{}{2}$, is $3:5$ or $2:6$, and split the Loan dataset to $6: 10$ and $4: 12$.
The results shown in Figure~\ref{fig:uneven} indicate that our \dpfmpsimprove work consistently well in the sense that the losses of different split ratios do not change much.
Besides, \dpfmpsimprove losses are smaller than the other two baselines in the figure.

\mypara{Effect of number of parties $S$.}
We also compare how the number of parties $S$ can affects the final clustering results in Figure~\ref{fig:compare_s}. The loss generally increases as $S$ increases.
This is mainly because the \weightestimate component introduces a larger error because of the random noise for privacy.
However, \dpfmpsimprove always provides the lowest loss among the three.

\subsection{Ablation Study of Components}
We perform the following ablation studies to demonstrate the impact of enforcing privacy on different components.

\mypara{Intersection cardinality accuracy comparison.}
To break-down the error, the first interesting metric is the relative accuracy of the intersection cardinality estimation.
The relative error is defined as 
$\frac{1}{n}\sum_{(a_1, \ldots, a_S)} \norm{w(\grid_{(a_1, \ldots, a_S)}) - w^*(\grid_{(a_1, \ldots, a_S)})}_1$.

We fix $k=k'=5$ for fair comparison of all methods and the results are shown as Figure~\ref{fig:ca-estimate-error}.
Because the evaluation is based on the intermediate results of the end-to-end private algorithm, only half of the privacy budgets are spend on the intersection estimation.
We can see that our proposed method \dpfmpsimprove can outperform other methods in most experiment settings.
The \dpfmpsest performs similar as \dpfmpsimprove in the $S=2$ setting as expected.
\dpfmpsimprove can significantly improve the accuracy when there are more parties ($S=4$).
The LDP baseline \ldpimprove still has relative error larger than the \dpfmpsimprove, even with the 2-way iterative updating.
Moreover, as we can see from the figure, the error of the 1-way approach \indca is dominated by loss of dependency information between the attributes and barely going down as we increase the privacy budget.

\begin{figure}
     \centering
     \begin{subfigure}[b]{0.4\textwidth}
        \centering
        \includegraphics[width=\textwidth]{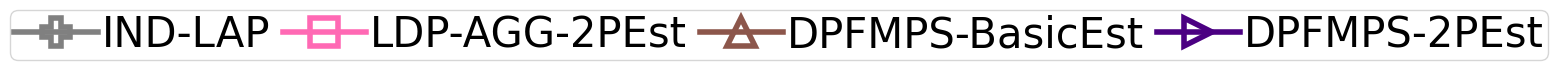}
    \end{subfigure}
    \\
     \begin{subfigure}[b]{0.23\textwidth}
        \centering
        \includegraphics[width=\textwidth]{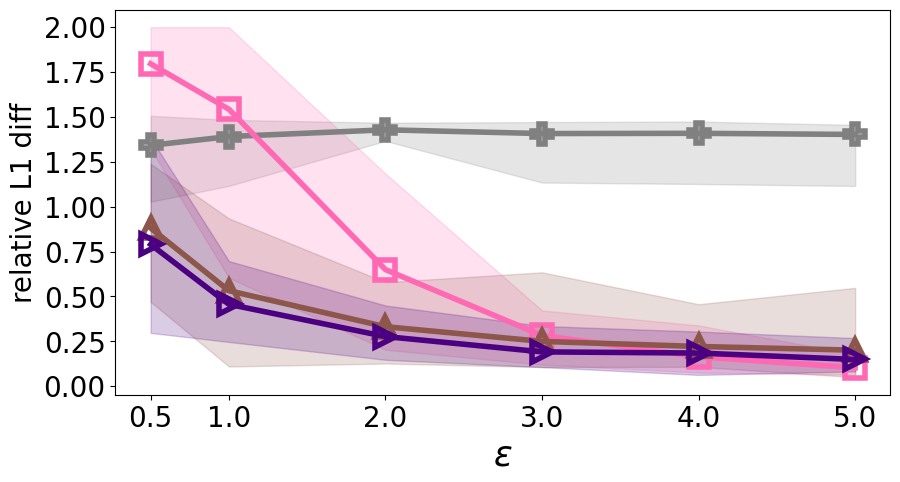}
        \vspace{-0.4cm}
        \caption{Mixed Gaussian $S=2$}
        \label{fig:mg_ca_2way_S=2}
    \end{subfigure}
    \begin{subfigure}[b]{0.23\textwidth}
        \centering
        \includegraphics[width=\textwidth]{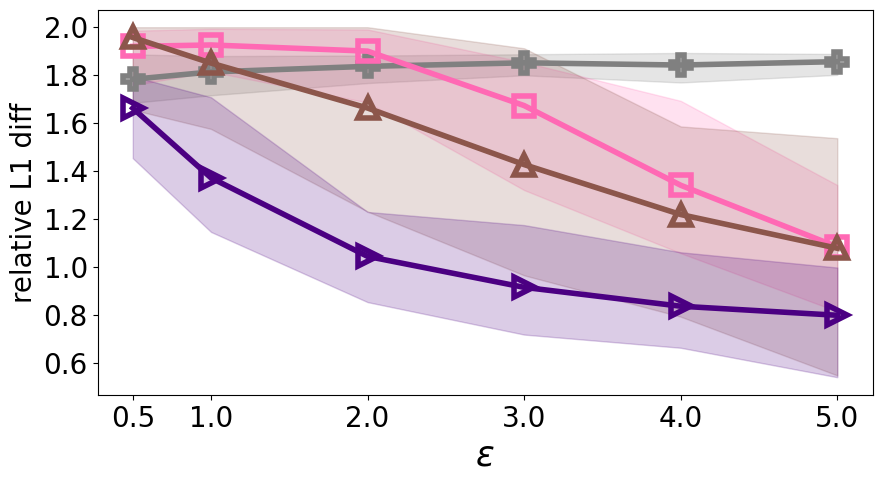}
        \vspace{-0.4cm}
        \caption{Mixed Gaussian $S=4$}
        \label{fig:mg_ca_2way_S=4}
     \end{subfigure}
    \vspace{-0.2cm}
     \caption{Errors of estimate the intersection cardinalities.}
     \vspace{-0.2cm}
     \label{fig:ca-estimate-error}
\end{figure}

\begin{figure}
     \centering
     \begin{subfigure}[b]{0.48\textwidth}
        \centering
        \includegraphics[width=\textwidth]{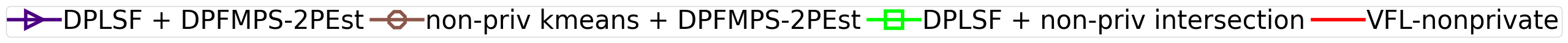}
    \end{subfigure}
    \\
     \begin{subfigure}[b]{0.23\textwidth}
        \centering
        \includegraphics[width=\textwidth]{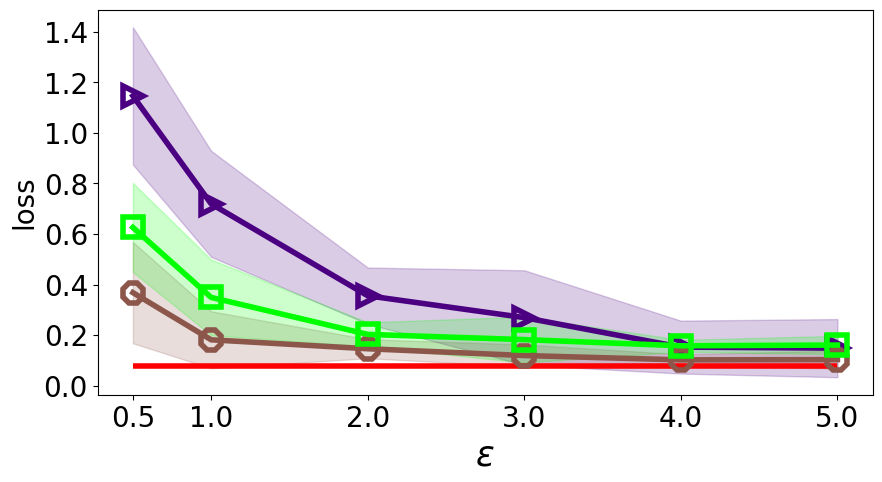}
        \vspace{-0.4cm}
        \caption{Mixed Gaussian $S=2$}
        \label{fig:mg_break_S=2}
    \end{subfigure}
    \begin{subfigure}[b]{0.23\textwidth}
        \centering
        \includegraphics[width=\textwidth]{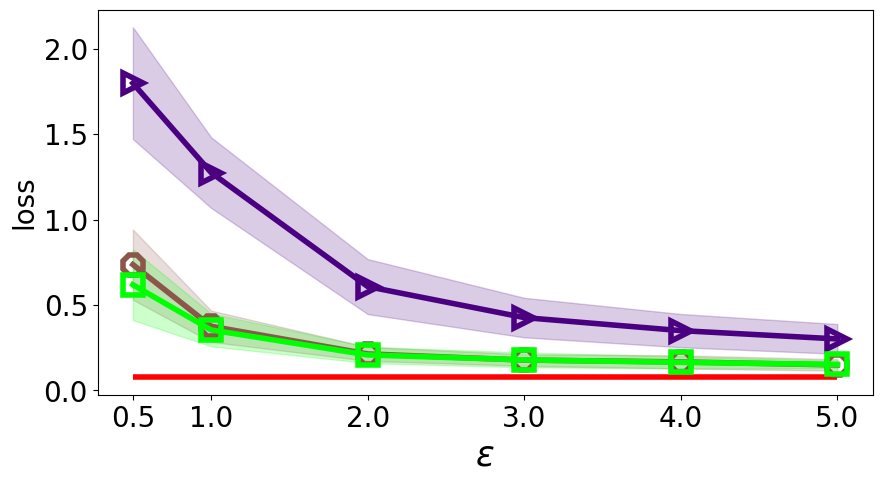}
        \vspace{-0.4cm}
        \caption{Mixed Gaussian $S=4$}
        \label{fig:ca_2way_S=4}
    \end{subfigure}
    \vspace{-0.2cm}
     \caption{Impact of enforcing privacy on components.}
     \label{fig:break-error}
     \vspace{-0.2cm}
\end{figure}

\mypara{Distinguishing the impact of private \localclustering from private intersection cardinality estimation.}
In Figure~\ref{fig:break-error}, we compare the impact of enforcing privacy on either the local clustering component or the intersection estimation component.
For ``\texttt{non-priv kmeans + \dpfmpsimprove}'' experiments, each data party uses non-private central \kmeans to generate local centers and spends $\epsilon_2$ on the intersection estimation algorithm.
For ``\texttt{DPLSF + non-priv intersection}'' experiments, each data party spends $\epsilon_1$ privacy budget on DPLSF to generate local centers, and the intersection cardinalities are computed exactly without enforcing any privacy.

As shown Figure~\ref{fig:break-error}, enforcing the \localclustering with DP but using the non-private intersection cardinality estimation gives the cost closer to the end-to-end private ones when $S=2$.
However, when $S=4$, making either component non-private while enforcing the other private has a similar effect on the final loss.
These results show that when $S$ is small, the private \localclustering (related to the first term of additive error $\lambda$ of Theorem~\ref{thm:final-utility}) is the component that introduces the majority error.
However, they also support our analysis in Section~\ref{subsec:utility} that the intersection component will introduce a larger error when the number of parties $S$ increases.

\begin{figure}
    \centering
    \begin{subfigure}[b]{0.3\textwidth}
        \centering
        \includegraphics[width=\textwidth]{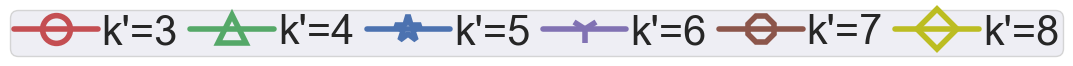}
    \end{subfigure}
    \\
    \begin{subfigure}[b]{0.23\textwidth}
        \centering
        \includegraphics[width=\textwidth]{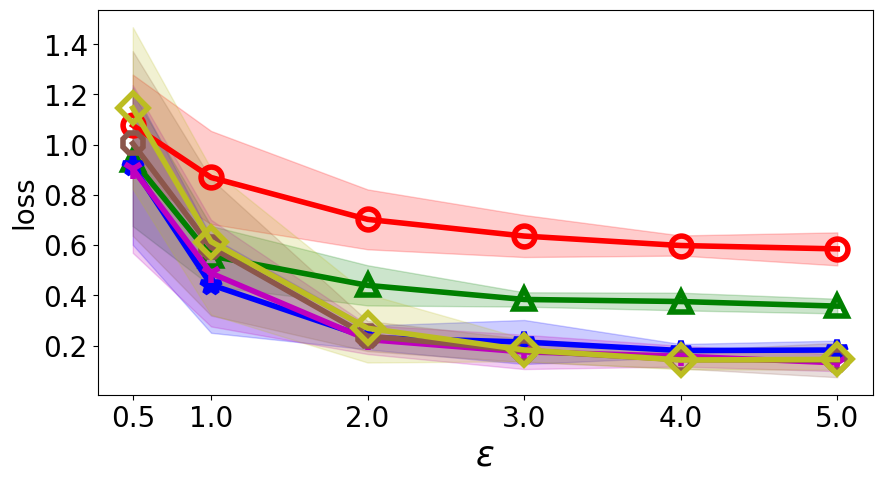}
        \vspace{-0.2cm}
        \caption{Mixed Gaussian, $S=2$}
        \label{fig:mg_localk2}
    \end{subfigure}
    \begin{subfigure}[b]{0.23\textwidth}
        \centering
        \includegraphics[width=\textwidth]{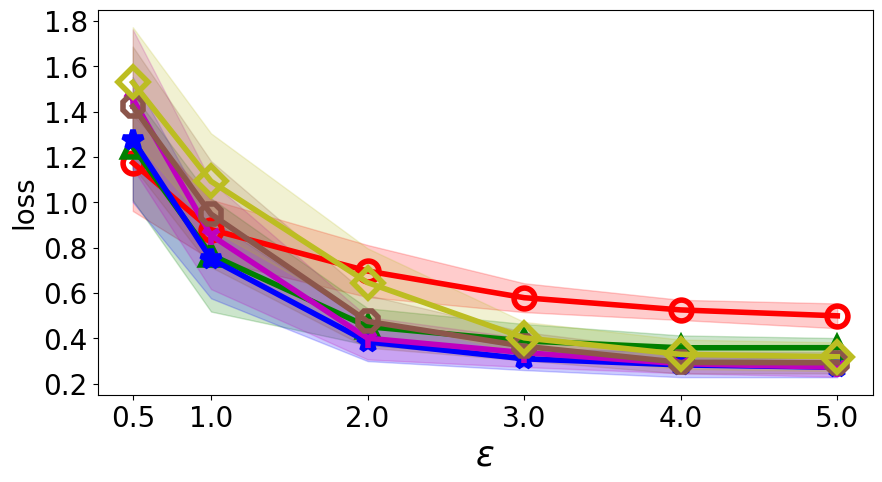}
        \vspace{-0.2cm}
        \caption{Mixed Gaussian, $S=4$}
        \label{fig:mg_localk4}
    \end{subfigure}
    \vspace{-0.2cm}
    \caption{Different local $k'$ with different privacy budget.}
    \label{fig:localk}
    \vspace{-0.2cm}
\end{figure}

\mypara{Impact of different local $k'$.}
\label{subsec:exp_vary_local_k}
As we discussed in Section~\ref{subsec:vary-local-k}, in order to trade-off between the error introduced in the membership intersection cardinality estimation with the information loss of local clustering, we can adjust the number of local clusters, namely $k'$.
In Figure~\ref{fig:localk}, we compare the impact of different numbers of local clusters on the final cost.
The empirical optimal $k'$ for different privacy budgets are close to our heuristic choice.
If $\epsilon$ is large, larger $k'$ can benefit the final result by keeping more local information; if $S$ is large, then it would be better to choose a smaller $k'$ to limit the exponentially grown number of grid nodes.

\mypara{Communication/computation cost.}
The communication cost of each data party in our method is determined by two parameters, the number of local clusters $k'$ and the repetition of FM sketches $M$.
We record the communication cost of some combinations in Table~\ref{table:commcost} with the mixed Gaussian dataset.
As we can see, \indca has the smallest communication cost, at the expense of ignoring all inter-party correlation. 
Our method can have a smaller communication cost than the \ldpimprove and the non-priv if $n > Mk'$.
With larger $k'$ and $S$ (e.g., $k'=8$ and $S=4$), the \dpfmpsimprove and \ldpimprove require more iterations to achieve the convergence of the estimates.
This iterative update dominates the computation time, but the computation can still be done efficiently.

\begin{table}
\caption{Communication and computation cost comparison.}
\resizebox{0.90\columnwidth}{!}{%
\begin{tabular}{c|c|c|c|c}
\toprule
\multirow{2}{*}{$k'$}  &\multirow{2}{*}{method} & \multirow{2}{*}{\begin{tabular}{c} Comm cost\\(per party) \end{tabular}} & \multicolumn{2}{c}{Grid compute time} \\
& & &$S=2$ & $S=4$\\
\midrule
\multirow{5}{*}{$5$}  
& \dpfmpsimprove($M=2048$) & $\approx 82$ kB & 4.75 s & 35.40 s  \\
& \dpfmpsimprove($M=4096$) & $\approx 164$ kB & 7.47 s & 39.08 s  \\
& \indca & $\approx 0.19$ kB & 0.61 s & 1.32 s\\
& \ldpimprove & $\approx 160$ kB & 5.46 s & 38.48 s\\
& non-priv & $\approx 160$ kB & 0.15 s & 1.12 s\\
\midrule
\multirow{5}{*}{$8$}  
& \dpfmpsimprove($M=2048$) & $\approx 131$ kB & 11.68 s & 98.91 s  \\
& \dpfmpsimprove($M=4096$) & $\approx 260$ kB & 11.95 s & 101.14 s \\
& \indca & $\approx 0.28$ kB & 0.64 s & 3.83 s\\
& \ldpimprove & $\approx 160$ kB & 12.13 s & 100.63 s \\
& non-priv & $\approx 160$ kB & 0.19 s & 3.26 s\\
\bottomrule
\end{tabular}
}
\vspace{-0.1cm}
\label{table:commcost}
\end{table}

\section{Related work}
\label{sec:related}
We summarize the related work from the following three axes.

\mypara{Vertical federated learning.} 
To our knowledge, this paper is the first work that targeting at the clustering problem under vertical federated learning with DP guarantee.
VFL has been studied in the recent years, sometimes under the name of vertical distributed learning.
The most relevant paper is \cite{ding2016k}, which proposes a solution for the clustering problem but does not consider the privacy leakage issue.
Existing work about other problems in the VFL setting includes learning tree models with secure multiparty computation techniques~\cite{wu13pivot, liu2020federated-forest} and training composed models~\cite{hu2019fdml}.
Some recent papers \cite{hu2019admm-vertical, xie2022improving} apply the ADMM framework in the VFL setting.
Another paper~\cite{chen2020vafl} discussing asynchronous supervised learning with VFL and DPSGD~\cite{abadi2016deep} assumes the labels are public accessible.

\mypara{Data sketches with DP and private set intersection cardinality estimation.}
It is originally claimed in \cite{desfontaines2019cardinality} that cardinality estimators, including FM sketch as its variants, leak the membership of a user in a set.
However, their result assumes that the adversary knows the hash key, which is not a common DP security setting.
Because DP security is based on the assumption that the adversary does not know the random seed; otherwise the adversary can re-generate the randomness (i.e., Laplace noise) by itself and recover the true value. 
Recently, several papers \cite{smith2020fmsketch, hu2021ca, dickens2022allsketch, choi2020dpsketch} reveal that hash-based, order-invariant sketches satisfy differential privacy as long as the cardinality set is large enough.
An earlier paper \cite{pagh2021linear} uses linear sketch and perturbs the sketch with random response, but it has larger relative error~\cite{smith2020fmsketch} and assumes no duplicate.

While the private set intersection cardinality (PSI-CA) problem is a traditional cryptographic problem and has been studied in series of literature~\cite{cristofaro2012fastPSICA, freedman2004efficientPSI, kissner2005privacy, hazay2008efficient, jarecki2009efficient, hazay2010efficient}, there are few papers about how to solve it under DP.
To solve the PSI-CA problem with DP guarantee, the existing solutions also rely on some kinds of data sketch.
For example, \cite{stanojevic2017distributed} proposes a solution to support set union or intersection  with an untrusted third party based on Bloom filters.
However, their solution introduces large randomness and is unable to scale to the setting with more than 2 parties.
Other similar work includes using cryptographic tools to encrypt the sketches as \cite{kreuter2020privacy}.
Some other cryptography oriented research~\cite{groce2019cheaper, kacsmar2020dppsi} use DP definition as a relaxation of the traditional security definition, and develop secure PSI protocol resisting malicious adversary.

\mypara{Differential private \kmeans.}
In the central DP setting, some earlier papers, including~\cite{stemmer2018differentially, nissim2018clustering, huang2018optimal, blum2005practical, nissim2007smooth, feldman2009private, wang2015differentially, nissim2016locating}, contribute to the theoretical bound for the \kmeans cost.
A practical implementation \cite{balcan2017kmeans} with cost bounded is based on privately selecting a candidate center set and gradually swapping in better centers into the final $k$ centers.
Another open-source implementation of the DP \kmeans is \cite{googledpclustering}, which is based on locality sensitive hashing.
Its adapted version is used in this paper as a building block.

\section{Conclusion}
\label{sec:conclustion}
In this paper, we propose novel differentially private solutions for the vertical federated clustering problem.
We demonstrate that our solution can outperform other baselines while providing desired privacy protection on the local data.
Some future directions include extending and customizing the DP intersection cardinality estimation sketches to other VFL problems, and providing solutions that can support more data parties and larger $k$ at once.

\begin{acks}
This work is supported in part by the United States NSF under Grant No. 2220433, No. 2213700, No. 2217071.
\end{acks}

{
\bibliographystyle{abbrv}
\bibliography{reference}

\begin{thebibliography}{10}

\bibitem{caprivact}
California consumer privacy act.
\newblock
  \url{https://leginfo.legislature.ca.gov/faces/codes_displayText.xhtml?division=3.&part=4.&lawCode=CIV&title=1.81.5}.

\bibitem{gdpr}
Eu general data protection regulation.
\newblock
  \url{https://eur-lex.europa.eu/legal-content/EN/TXT/PDF/?uri=CELEX:32016R0679}.

\bibitem{loan}
Home credit default risk.
\newblock
  \url{https://www.kaggle.com/competitions/home-credit-default-risk/overview}.

\bibitem{taxi}
New york city taxi trip.
\newblock \url{https://www.kaggle.com/c/nyc-taxi-trip-duration}.

\bibitem{abadi2016deep}
M.~Abadi, A.~Chu, I.~Goodfellow, H.~B. McMahan, I.~Mironov, K.~Talwar, and
  L.~Zhang.
\newblock Deep learning with differential privacy.
\newblock In {\em Proceedings of the 2016 ACM SIGSAC Conference on Computer and
  Communications Security}, CCS '16, page 308–318, New York, NY, USA, 2016.
  Association for Computing Machinery.

\bibitem{aloise2009nphard}
D.~Aloise, A.~Deshpande, P.~Hansen, and P.~Popat.
\newblock Np-hardness of euclidean sum-of-squares clustering.
\newblock {\em Machine learning}, 75(2):245--248, 2009.

\bibitem{balcan2017kmeans}
M.-F. Balcan, T.~Dick, Y.~Liang, W.~Mou, and H.~Zhang.
\newblock Differentially private clustering in high-dimensional euclidean
  spaces.
\newblock In {\em International Conference on Machine Learning}, pages
  322--331. PMLR, 2017.

\bibitem{blum2005practical}
A.~Blum, C.~Dwork, F.~McSherry, and K.~Nissim.
\newblock Practical privacy: the {SuLQ} framework.
\newblock In {\em Proceedings of the twenty-fourth ACM SIGMOD-SIGACT-SIGART
  symposium on Principles of database systems}, pages 128--138, 2005.

\bibitem{carlini2019secret}
N.~Carlini, C.~Liu, {\'U}.~Erlingsson, J.~Kos, and D.~Song.
\newblock The secret sharer: Evaluating and testing unintended memorization in
  neural networks.
\newblock In {\em 28th USENIX Security Symposium (USENIX Security 19)}, pages
  267--284, 2019.

\bibitem{chang2021locally}
A.~Chang, B.~Ghazi, R.~Kumar, and P.~Manurangsi.
\newblock Locally private k-means in one round.
\newblock {\em arXiv preprint arXiv:2104.09734}, 2021.

\bibitem{charikar2002simhash}
M.~S. Charikar.
\newblock Similarity estimation techniques from rounding algorithms.
\newblock In {\em Proceedings of the thiry-fourth annual ACM symposium on
  Theory of computing}, pages 380--388, 2002.

\bibitem{chen2020vafl}
T.~Chen, X.~Jin, Y.~Sun, and W.~Yin.
\newblock {VAFL}: a method of vertical asynchronous federated learning, 2020.

\bibitem{choi2020dpsketch}
S.~G. Choi, D.~Dachman-soled, M.~Kulkarni, and A.~Yerukhimovich.
\newblock Differentially-private multi-party sketching for large-scale
  statistics.
\newblock {\em Proceedings on Privacy Enhancing Technologies}, 3:153--174,
  2020.

\bibitem{cristofaro2012fastPSICA}
E.~D. Cristofaro, P.~Gasti, and G.~Tsudik.
\newblock Fast and private computation of cardinality of set intersection and
  union.
\newblock In {\em International Conference on Cryptology and Network Security},
  pages 218--231. Springer, 2012.

\bibitem{desfontaines2019cardinality}
D.~Desfontaines, A.~Lochbihler, and D.~Basin.
\newblock Cardinality estimators do not preserve privacy.
\newblock {\em Proceedings on Privacy Enhancing Technologies}, 2:26--46, 2019.

\bibitem{dickens2022allsketch}
C.~Dickens, J.~Thaler, and D.~Ting.
\newblock (nearly) all cardinality estimators are differentially private, 2022.

\bibitem{diffie1976new}
W.~DIFFIE and M.~E. HELLMAN.
\newblock New directions in cryptography.
\newblock {\em IEEE TRANSACTIONS ON INFORMATION THEORY}, 22(6), 1976.

\bibitem{ding2016k}
H.~Ding, Y.~Liu, L.~Huang, and J.~Li.
\newblock K-means clustering with distributed dimensions.
\newblock In {\em International Conference on Machine Learning}, pages
  1339--1348. PMLR, 2016.

\bibitem{DMNS06}
C.~Dwork, F.~McSherry, K.~Nissim, and A.~Smith.
\newblock Calibrating noise to sensitivity in private data analysis.
\newblock In {\em TCC}, pages 265--284, 2006.

\bibitem{DN04}
C.~Dwork and K.~Nissim.
\newblock Privacy-preserving datamining on vertically partitioned databases.
\newblock In {\em CRYPTO}, pages 528--544, 2004.

\bibitem{dwork2017exposed}
C.~Dwork, A.~Smith, T.~Steinke, and J.~Ullman.
\newblock Exposed! a survey of attacks on private data.
\newblock {\em Annual Review of Statistics and Its Application}, 4:61--84,
  2017.

\bibitem{feldman2009private}
D.~Feldman, A.~Fiat, H.~Kaplan, and K.~Nissim.
\newblock Private coresets.
\newblock In {\em Proceedings of the forty-first annual ACM symposium on Theory
  of computing}, pages 361--370, 2009.

\bibitem{flajolet2007hyperloglog}
P.~Flajolet, {\'E}.~Fusy, O.~Gandouet, and F.~Meunier.
\newblock Hyperloglog: the analysis of a near-optimal cardinality estimation
  algorithm.
\newblock In {\em Discrete Mathematics and Theoretical Computer Science}, pages
  137--156. Discrete Mathematics and Theoretical Computer Science, 2007.

\bibitem{flajolet1985}
P.~Flajolet and G.~N. Martin.
\newblock Probabilistic counting algorithms for data base applications.
\newblock {\em Journal of computer and system sciences}, 31(2):182--209, 1985.

\bibitem{freedman2004efficientPSI}
M.~J. Freedman, K.~Nissim, and B.~Pinkas.
\newblock Efficient private matching and set intersection.
\newblock In {\em International conference on the theory and applications of
  cryptographic techniques}, pages 1--19. Springer, 2004.

\bibitem{ghazi2020dpkmean}
B.~Ghazi, R.~Kumar, and P.~Manurangsi.
\newblock Differentially private clustering: Tight approximation ratios.
\newblock {\em Advances in Neural Information Processing Systems}, 33, 2020.

\bibitem{googledpclustering}
Google.
\newblock Differentially private k-means clustering (experimental).
\newblock
  \url{https://github.com/google/differential-privacy/tree/main/learning/clustering},
  2022.

\bibitem{groce2019cheaper}
A.~Groce, P.~Rindal, and M.~Rosulek.
\newblock Cheaper private set intersection via differentially private leakage.
\newblock {\em Proceedings on Privacy Enhancing Technologies}, 2019(3), 2019.

\bibitem{gu2020federated}
B.~Gu, Z.~Dang, X.~Li, and H.~Huang.
\newblock Federated doubly stochastic kernel learning for vertically
  partitioned data.
\newblock In {\em Proceedings of the 26th ACM SIGKDD International Conference
  on Knowledge Discovery \& Data Mining}, pages 2483--2493, 2020.

\bibitem{gupta2018distributed}
O.~Gupta and R.~Raskar.
\newblock Distributed learning of deep neural network over multiple agents.
\newblock {\em Journal of Network and Computer Applications}, 116:1--8, 2018.

\bibitem{har2004coresets}
S.~Har-Peled and S.~Mazumdar.
\newblock On coresets for k-means and k-median clustering.
\newblock In {\em Proceedings of the thirty-sixth annual ACM symposium on
  Theory of computing}, pages 291--300, 2004.

\bibitem{hartigan1979lloyd}
J.~A. Hartigan and M.~A. Wong.
\newblock Algorithm as 136: A k-means clustering algorithm.
\newblock {\em Journal of the royal statistical society. series c (applied
  statistics)}, 28(1):100--108, 1979.

\bibitem{hazay2008efficient}
C.~Hazay and Y.~Lindell.
\newblock Efficient protocols for set intersection and pattern matching with
  security against malicious and covert adversaries.
\newblock In {\em Theory of Cryptography Conference}, pages 155--175. Springer,
  2008.

\bibitem{hazay2010efficient}
C.~Hazay and K.~Nissim.
\newblock Efficient set operations in the presence of malicious adversaries.
\newblock In {\em International Workshop on Public Key Cryptography}, pages
  312--331. Springer, 2010.

\bibitem{hu2021ca}
C.~Hu, J.~Li, Z.~Liu, X.~Guo, Y.~Wei, X.~Guang, G.~Loukides, and C.~Dong.
\newblock How to make private distributed cardinality estimation practical, and
  get differential privacy for free.
\newblock In {\em 30th USENIX Security Symposium (USENIX Security 21)}, pages
  965--982, 2021.

\bibitem{hu2019admm-vertical}
Y.~Hu, P.~Liu, L.~Kong, and D.~Niu.
\newblock Learning privately over distributed features: An admm sharing
  approach, 2019.

\bibitem{hu2019fdml}
Y.~Hu, D.~Niu, J.~Yang, and S.~Zhou.
\newblock {FDML}: A collaborative machine learning framework for distributed
  features.
\newblock In {\em Proceedings of the 25th ACM SIGKDD International Conference
  on Knowledge Discovery \& Data Mining}, KDD '19, page 2232–2240, New York,
  NY, USA, 2019. Association for Computing Machinery.

\bibitem{huang2018optimal}
Z.~Huang and J.~Liu.
\newblock Optimal differentially private algorithms for k-means clustering.
\newblock In {\em Proceedings of the 37th ACM SIGMOD-SIGACT-SIGAI Symposium on
  Principles of Database Systems}, pages 395--408, 2018.

\bibitem{jarecki2009efficient}
S.~Jarecki and X.~Liu.
\newblock Efficient oblivious pseudorandom function with applications to
  adaptive ot and secure computation of set intersection.
\newblock In {\em Theory of Cryptography Conference}, pages 577--594. Springer,
  2009.

\bibitem{kacsmar2020dppsi}
B.~Kacsmar, B.~Khurram, N.~Lukas, A.~Norton, M.~Shafieinejad, Z.~Shang,
  Y.~Baseri, M.~Sepehri, S.~Oya, and F.~Kerschbaum.
\newblock Differentially private two-party set operations.
\newblock In {\em 2020 IEEE European Symposium on Security and Privacy
  (EuroS\&P)}, pages 390--404. IEEE, 2020.

\bibitem{kairouz2021advances}
P.~Kairouz, H.~B. McMahan, B.~Avent, A.~Bellet, M.~Bennis, A.~N. Bhagoji,
  K.~Bonawitz, Z.~Charles, G.~Cormode, R.~Cummings, et~al.
\newblock Advances and open problems in federated learning.
\newblock {\em Foundations and Trends{\textregistered} in Machine Learning},
  14(1--2):1--210, 2021.

\bibitem{kissner2005privacy}
L.~Kissner and D.~Song.
\newblock Privacy-preserving set operations.
\newblock In {\em Annual International Cryptology Conference}, pages 241--257.
  Springer, 2005.

\bibitem{kreuter2020privacy}
B.~Kreuter, C.~W. Wright, E.~S. Skvortsov, R.~Mirisola, and Y.~Wang.
\newblock Privacy-preserving secure cardinality and frequency estimation.
\newblock 2020.

\bibitem{lang2017back}
K.~J. Lang.
\newblock Back to the future: an even more nearly optimal cardinality
  estimation algorithm.
\newblock {\em arXiv preprint arXiv:1708.06839}, 2017.

\bibitem{li2021membership}
J.~Li, N.~Li, and B.~Ribeiro.
\newblock Membership inference attacks and defenses in classification models.
\newblock In {\em Proceedings of the Eleventh ACM Conference on Data and
  Application Security and Privacy}, pages 5--16, 2021.

\bibitem{li2022vldb}
Z.~Li, B.~Ding, C.~Zhang, N.~Li, and J.~Zhou.
\newblock Federated matrix factorization with privacy guarantee.
\newblock {\em Proc. VLDB Endow.}, 15(4):900–913, dec 2021.

\bibitem{manuscript}
Z.~Li, T.~Wang, and N.~Li.
\newblock Differentially private vertical federated clustering.
\newblock {\em arXiv preprint arXiv:2208.01700}, 2022.

\bibitem{liu2020federated-forest}
Y.~Liu, Y.~Liu, Z.~Liu, Y.~Liang, C.~Meng, J.~Zhang, and Y.~Zheng.
\newblock Federated forest.
\newblock {\em IEEE Transactions on Big Data}, (01):1--1, 2020.

\bibitem{macqueen1967kmeans}
J.~MacQueen et~al.
\newblock Some methods for classification and analysis of multivariate
  observations.
\newblock In {\em Proceedings of the fifth Berkeley symposium on mathematical
  statistics and probability}, volume~1, pages 281--297. Oakland, CA, USA,
  1967.

\bibitem{matouvsek2000approximate}
J.~Matou{\v{s}}ek.
\newblock On approximate geometric k-clustering.
\newblock {\em Discrete \& Computational Geometry}, 24(1):61--84, 2000.

\bibitem{mcmahan2017dpfedavg}
B.~McMahan, E.~Moore, D.~Ramage, S.~Hampson, and B.~A.~y. Arcas.
\newblock {Communication-Efficient Learning of Deep Networks from Decentralized
  Data}.
\newblock In A.~Singh and J.~Zhu, editors, {\em Proceedings of the 20th
  International Conference on Artificial Intelligence and Statistics},
  volume~54 of {\em Proceedings of Machine Learning Research}, pages
  1273--1282, USA, 20--22 Apr 2017. PMLR.

\bibitem{mcmahan2018dp-rnn}
H.~B. McMahan, D.~Ramage, K.~Talwar, and L.~Zhang.
\newblock Learning differentially private recurrent language models.
\newblock In {\em International Conference on Learning Representations}.
  OpenReview.net, 2018.

\bibitem{mironov2017renyi}
I.~Mironov.
\newblock R{\'e}nyi differential privacy.
\newblock In {\em 2017 IEEE 30th computer security foundations symposium
  (CSF)}, pages 263--275. IEEE, 2017.

\bibitem{nissim2007smooth}
K.~Nissim, S.~Raskhodnikova, and A.~Smith.
\newblock Smooth sensitivity and sampling in private data analysis.
\newblock In {\em Proceedings of the thirty-ninth annual ACM symposium on
  Theory of computing}, pages 75--84, 2007.

\bibitem{nissim2018clustering}
K.~Nissim and U.~Stemmer.
\newblock Clustering algorithms for the centralized and local models.
\newblock In {\em Algorithmic Learning Theory}, pages 619--653. PMLR, 2018.

\bibitem{nissim2016locating}
K.~Nissim, U.~Stemmer, and S.~Vadhan.
\newblock Locating a small cluster privately.
\newblock In {\em Proceedings of the 35th ACM SIGMOD-SIGACT-SIGAI Symposium on
  Principles of Database Systems}, pages 413--427, 2016.

\bibitem{pagh2021linear}
R.~Pagh and N.~M. Stausholm.
\newblock Efficient differentially private {F0} linear sketching.
\newblock In {\em 24th International Conference on Database Theory, {ICDT}
  2021, March 23-26, 2021, Nicosia, Cyprus}, volume 186 of {\em LIPIcs}, pages
  18:1--18:19. Schloss Dagstuhl - Leibniz-Zentrum f{\"{u}}r Informatik, 2021.

\bibitem{scikit-learn}
F.~Pedregosa, G.~Varoquaux, A.~Gramfort, V.~Michel, B.~Thirion, O.~Grisel,
  M.~Blondel, P.~Prettenhofer, R.~Weiss, V.~Dubourg, J.~Vanderplas, A.~Passos,
  D.~Cournapeau, M.~Brucher, M.~Perrot, and E.~Duchesnay.
\newblock Scikit-learn: Machine learning in {P}ython.
\newblock {\em Journal of Machine Learning Research}, 12:2825--2830, 2011.

\bibitem{rosenberg2007v}
A.~Rosenberg and J.~Hirschberg.
\newblock V-measure: A conditional entropy-based external cluster evaluation
  measure.
\newblock In {\em Proceedings of the 2007 joint conference on empirical methods
  in natural language processing and computational natural language learning
  (EMNLP-CoNLL)}, pages 410--420, 2007.

\bibitem{rousseeuw1987silhouettes}
P.~J. Rousseeuw.
\newblock Silhouettes: a graphical aid to the interpretation and validation of
  cluster analysis.
\newblock {\em Journal of computational and applied mathematics}, 20:53--65,
  1987.

\bibitem{shokri2017membership}
R.~Shokri, M.~Stronati, C.~Song, and V.~Shmatikov.
\newblock Membership inference attacks against machine learning models.
\newblock In {\em 2017 IEEE symposium on security and privacy (SP)}, pages
  3--18. IEEE, 2017.

\bibitem{letter}
D.~J. Slate.
\newblock Letter recognition data set.
\newblock \url{https://archive.ics.uci.edu/ml/datasets/letter+recognition}.

\bibitem{smith2020fmsketch}
A.~Smith, S.~Song, and A.~Thakurta.
\newblock The flajolet-martin sketch itself preserves differential privacy:
  Private counting with minimal space.
\newblock {\em Advances in Neural Information Processing Systems 33
  pre-proceedings (NeurIPS 2020)}, 2020.

\bibitem{stanojevic2017distributed}
R.~Stanojevic, M.~Nabeel, and T.~Yu.
\newblock Distributed cardinality estimation of set operations with
  differential privacy.
\newblock In {\em 2017 IEEE Symposium on Privacy-Aware Computing (PAC)}, pages
  37--48. IEEE, 2017.

\bibitem{stemmer2018differentially}
U.~Stemmer and H.~Kaplan.
\newblock Differentially private k-means with constant multiplicative error.
\newblock In {\em NeurIPS}, 2018.

\bibitem{su2016kmeans}
D.~Su, J.~Cao, N.~Li, E.~Bertino, and H.~Jin.
\newblock Differentially private k-means clustering.
\newblock In {\em Proceedings of the sixth ACM conference on data and
  application security and privacy}, pages 26--37, 2016.

\bibitem{vaidya2003privacy}
J.~Vaidya and C.~Clifton.
\newblock Privacy-preserving k-means clustering over vertically partitioned
  data.
\newblock In {\em Proceedings of the ninth ACM SIGKDD international conference
  on Knowledge discovery and data mining}, pages 206--215, 2003.

\bibitem{vaidya2005privacy}
J.~Vaidya and C.~Clifton.
\newblock Privacy-preserving decision trees over vertically partitioned data.
\newblock In {\em IFIP Annual Conference on Data and Applications Security and
  Privacy}, pages 139--152. Springer, 2005.

\bibitem{wang2020hybrid}
C.~Wang, J.~Liang, M.~Huang, B.~Bai, K.~Bai, and H.~Li.
\newblock Hybrid differentially private federated learning on vertically
  partitioned data.
\newblock {\em arXiv preprint arXiv:2009.02763}, 2020.

\bibitem{WangBLJ17}
T.~Wang, J.~Blocki, N.~Li, and S.~Jha.
\newblock Locally differentially private protocols for frequency estimation.
\newblock In {\em 26th {USENIX} Security Symposium, {USENIX} Security 2017,
  Vancouver, BC, Canada, August 16-18, 2017.}, pages 729--745, 2017.

\bibitem{wang2019answering}
T.~Wang, B.~Ding, J.~Zhou, C.~Hong, Z.~Huang, N.~Li, and S.~Jha.
\newblock Answering multi-dimensional analytical queries under local
  differential privacy.
\newblock In {\em Proceedings of the 2019 International Conference on
  Management of Data}, pages 159--176, 2019.

\bibitem{wang2015differentially}
Y.~Wang, Y.-X. Wang, and A.~Singh.
\newblock Differentially private subspace clustering.
\newblock {\em Advances in Neural Information Processing Systems}, 28, 2015.

\bibitem{webank}
WeBank.
\newblock Webank use case.
\newblock
  \url{https://www.fedai.org/cases/a-case-of-traffic-violations-insurance-using-federated-learning/},
  2022.

\bibitem{wei2020federated}
K.~Wei, J.~Li, M.~Ding, C.~Ma, H.~H. Yang, F.~Farokhi, S.~Jin, T.~Q. Quek, and
  H.~V. Poor.
\newblock Federated learning with differential privacy: Algorithms and
  performance analysis.
\newblock {\em IEEE Transactions on Information Forensics and Security},
  15:3454--3469, 2020.

\bibitem{wu2020value}
N.~Wu, F.~Farokhi, D.~Smith, and M.~A. Kaafar.
\newblock The value of collaboration in convex machine learning with
  differential privacy.
\newblock In {\em 2020 IEEE Symposium on Security and Privacy (SP)}, pages
  304--317, New York, NY, USA, 2020. IEEE.

\bibitem{wu13pivot}
Y.~Wu, S.~Cai, X.~Xiao, G.~Chen, and B.~C. Ooi.
\newblock Privacy preserving vertical federated learning for tree-based models.
\newblock {\em Proceedings of the VLDB Endowment}, 13(11):2090--2103, 2020.

\bibitem{xie2022improving}
C.~Xie, P.-Y. Chen, C.~Zhang, and B.~Li.
\newblock Improving privacy-preserving vertical federated learning by efficient
  communication with admm.
\newblock {\em arXiv preprint arXiv:2207.10226}, 2022.

\bibitem{yunhong2009privacy}
H.~Yunhong, F.~Liang, and H.~Guoping.
\newblock Privacy-preserving svm classification on vertically partitioned data
  without secure multi-party computation.
\newblock In {\em 2009 fifth international conference on natural computation},
  volume~1, pages 543--546. IEEE, 2009.

\bibitem{zhang2020secret}
Y.~Zhang, R.~Jia, H.~Pei, W.~Wang, B.~Li, and D.~Song.
\newblock The secret revealer: Generative model-inversion attacks against deep
  neural networks.
\newblock In {\em Proceedings of the IEEE/CVF Conference on Computer Vision and
  Pattern Recognition}, pages 253--261, 2020.

\end{thebibliography}
}
\appendix
\clearpage
\section{Additional Proofs}
\label{app:proof}

\subsection{Privacy proofs}
\begin{proof}[Proof of Lemma~\ref{lemma:single}]
To prove the privacy guarantee of Algorithm~\ref{algo:dpfmps}, we denote a set of identities as $\members=\left\{\id_1, \ldots, \id_n \right\}$, where $\id$ is not necessarily a integer, but any input that is hash-able.
For clustering, $\members$ is divided into $\{\members\rel{1}{\ell}, \ldots, \members\rel{k'}{\ell}\}$ according to the differentially private $\centroids\rel{}{\ell}$ from the previous phase:
For any $\id$, the partition index is $\argmin_{j\in [k']}\norm{x\rel{\id}{\ell} - c\rel{j}{\ell}}^2_2$.
Now given a pair of neighboring datasets $\members$ and $\members'$, where there is an $\id$ such that $\id' \not \in \members$ but $\id_{i'} \in \members'$, their clustering memberships, $\left\{\members_{1}, \ldots, \members_{k'} \right\}$ and $\left\{\members'_{1}, \ldots, \members'_{k'} \right\}$, differing at exactly one partition $ j \in [k']$, such that $\id \in \members'_{j}, \id \not \in \members_{j}$; 
for all other subsets $\forall j'\neq j, \members_{j'} = \members'_{j'}$.
Thus, the following proof can be considered as following the parallel composition property.
\end{proof}

\subsection{Utility proofs}
To simplify the description, we denote $\phi_{\centroids\rel{}{\ell}}(\cdot)$
as a partition function mapping a user record in the local dataset of party $\ell$ to the closest center's index in $\centroids\rel{}{\ell}=\{c\rel{1}{\ell}, \ldots, c\rel{k'}{\ell}\}$ generated in the previous phase, i.e., $\phi_{\centroids\rel{}{\ell}}\left(x\rel{\id}{\ell}\right) = \argmin_{a\in [k']}\norm{x\rel{\id}{\ell} - c\rel{a}{\ell}}^2_2$.
With the partition function, each data party partitions the users into disjoint sets  $\{\members\rel{1}{\ell}, \ldots, \members\rel{k'}{\ell}\}$ such that user $\id \in \members\rel{a}{\ell}$ if and only if $a = \phi_{\centroids\rel{}{\ell}}\left(x\rel{\id}{\ell}\right)$.

\begin{proof}[Proof of Lemma~\ref{lemma:expectation+variance}]
Consider the output of the geometric-valued hash function $H$, $Z = \max\{Y_1, \ldots, Y_{\tilde{n}}\}$, where $Y_i\sim \geometric(\frac{\gamma}{1+\gamma})$.
The cumulative mass function of $Z$ is
\begin{align*}
    \Pr{Z \leq z} = \left(1 - \left(1 - \frac{\gamma}{1 + \gamma} \right)^z \right)^{\tilde{n}} = \left(1 - \frac{1}{(1 + \gamma)^z} \right)^{\tilde{n}} .
\end{align*}
The probability $\Pr{Z\leq \alpha_{\min}} = \frac{1}{e^{\epsilon'\tilde{n}}}$.
Based on this result, the ratio of expectation can be written as 
\begin{align*}
    \frac{\EV{\alpha}}{\EV{\hat{\alpha}}} =& 1 - \frac{\sum_{z\leq \alpha_{\min}}\Pr{Z = z} (\alpha_{\min} - z)}{\EV{\hat{\alpha}}} \\
    \geq& 1 - \frac{\sum_{z\leq \alpha_{\min}}\Pr{Z = z} \alpha_{\min} }{\EV{\hat{\alpha}}} \geq 1 - \frac{\log_{1+\gamma} (1 / (1 - e^{-\epsilon'}))}{e^{\epsilon'\tilde{n}}\EV{\hat{\alpha}}}
\end{align*}
Let $\tilde{n} = |\members| + n_p$.  When $|\members|$ is large enough, the denominator dominants and the value will go to 0 very quickly.
Similar analysis can also be applied to the variance.
\end{proof}

\begin{proof}[Proof of Theorem~\ref{thm:final-utility}]
For simplicity, we denote the virtual global dataset as $\X = [\X\rel{}{1}|\ldots|\X\rel{}{S}]$, the optimal global centers as $\centroids^*=\{c^*_1, \ldots, c^*_k\}$, and the membership of grid $g$ as $\members_g$
We also denote $\phi^*(\cdot)$ as the optimal partition function mapping a data point to a closest center index.
We first can bound the $\cost{\X}$ as the following:
    \begin{align*}
        \cost{\X}(\centroids) &= \sum_{i\in [n]}\norm{x_i - c_{\phi*(x_i)}}^2_2 \\
        &\leq \sum_{g\in \grid}\sum_{i \in \members_{g}} \norm{x_i - g + g - c_{\phi*(g)}}^2_2 \\
        & \leq 2\sum_{g\in \grid}\sum_{i \in \members_{g}} \norm{x_i - g}_2^2 + \norm{g - c_{\phi*(g)}}_2^2
    \end{align*}
    We first analysis the first term. 
    Notice that every local data party runs a $(\beta_{priv}, \lambda_{priv})$-approximate algorithm, and $g$ consists of $S$ local centers $[c\rel{a_1}{1}\mid \ldots \mid c\rel{a_S}{S}]$
    \begin{align*}
        \sum_{g\in \grid}\sum_{i \in \members_{g}} \norm{x_i - g}_2^2 &= \sum_{\ell\in [S]} \sum_{i\in[n]} (x\rel{i}{\ell} - c\rel{a_\ell}{\ell})^2  \\
        &\leq \sum_{\ell\in [S]} (\beta_{priv} \opt\rel{k, \X\rel{}{\ell}}{\ell} + \lambda_{priv}) \\
        &= \beta_{priv}\sum_{\ell\in [S]}\opt\rel{k, \X\rel{}{\ell}}{\ell} + 
        S \lambda_{priv} \\
        &\leq \beta_{priv} \opt_{k, \X} + 
        S \lambda_{priv}
    \end{align*}
    
    Notice that $w(g)$ is only an estimation of $|\members_g|$ in our algorithm.
    So we denote $\kappa = \sum_{g \in \grid} \left|w(g)- |\members_g| \right|$.
    The second term can be relaxed as the following:
    \begin{align*}
        \sum_{g\in \grid}\sum_{i \in \members_{g}} \norm{g - c_{\phi*(g)}}_2^2
        &= \sum_{g \in \grid} |\members_g| \norm{g - c_{\phi*(g)}}_2^2 \\
        &\leq \sum_{g \in \grid} |w(g)| \norm{g - c_{\phi*(g)}}_2^2 + 4m^2 \kappa
    \end{align*}
    
    As the central server finally runs an $(\beta_0, 0)$-approximate \kmeans algorithm, we also denote $\hat{\centroids}=\{\hat{c}_1, \ldots, \hat{c}_k\}$ as the set of optimal $k$ centers for \kmeans given $(\grid, w(\grid)$ as the dataset.
    Thus, it follows that
    \begin{align*}
        &\quad \sum_{g \in \grid} |w(g)| \norm{g - c_{\phi*(g)}}_2^2  \\
        &\leq \beta_0 \sum_{g \in \grid} |w(g)| \norm{g - \hat{c}_{\phi*(g)}}_2^2\\
        &\leq \beta_0 \sum_{g \in \grid} |\members_g| \norm{g - \hat{c}_{\phi*(g)}}_2^2 + 4\beta_0 m^2 \kappa \\
        &\leq \beta_0 \sum_{g \in \grid} \sum_{i \in \members_{g}} \norm{g - x_i + x_i - c^*_{\phi*(x_i)}}_2^2 + 4\beta_0 m^2 \kappa \\
        & \leq 2\beta_0 \sum_{g \in \grid} \sum_{i \in \members_{g}} \norm{g - x_i}_2^2 + 2\beta_0 \sum_{g \in \grid} \sum_{i \in \members_{g}} \norm{ x_i - c^*_{\phi*(x_i)}}_2^2 + 4\beta_0 m^2 \kappa \\
    \end{align*}
    We have bounded the first term above and can reuse the result here.
    The second term is the exactly $2\beta_0 \opt_{k, \X}$ by definition.
    So when we put every thing together, we have
    \begin{align*}
        \cost{\X}(\centroids) &\leq  \left( 2\beta_{priv} + 4\beta_0 + 4 \beta_0 \beta_{priv} \right)\opt_{k, \X} \\
        &+ 2(\beta_0 + 1)S\lambda_{priv} \\
        &+ 8(\beta_0 + 1)m^2 \kappa
    \end{align*}
    
To bound the $t$, we can use Chebyshev's inequality with the union bound of all grid nodes and the results of Lemma~\ref{thm:sketch-std}, the standard  $\sigma = O\left(\frac{n}{\sqrt{M}} +  \frac{S(k-1)\sqrt{\log (1/\delta)}}{\epsilon_2}  \right)$.
Because there are totally $k^S$ intersection cardinality estimates in the grid, the probability of the odd scenario for each intersection cardinality estimate we want to bound is $\frac{\omega}{k^S}$ because of the union bound. 
According to the Chebyshev’s inequality, for one intersection estimation, $\Pr{|w(g) - |\mathcal{M}_g| | \geq \sigma \sqrt{\frac{k^S}{\omega}}} \leq \frac{\omega}{k^S}$.  Plug in the standard deviation result and consider the there are totally $k^S$ intersection, we can have the result shown in our theorem with $k^{1.5S}$.
Thus, with probability at least $1 - \omega$,
\begin{align*}
    \kappa &\leq \frac{k^{1.5S}}{\sqrt{\omega}} \sigma = O\left(\frac{n k^{1.5S}}{\sqrt{\omega M}} +  \frac{S(k-1)k^{1.5S}\sqrt{\log (1/\delta)}}{\epsilon_2 \sqrt{\omega}}  \right)
\end{align*}

Plugin the result so that
\begin{align*}
    \cost{\X}(\centroids) &\leq \left( 2\beta_{priv} + 4\beta_0 + 4 \beta_0 \beta_{priv} \right)\opt_{k, \X} \\
    &+ 2(\beta_0 + 1)S\lambda_{priv} \\
    &+ O\left( (\beta_0 + 1)m^2 \left(\frac{n k^{1.5S}}{\sqrt{\omega M}} +  \frac{S(k-1)k^{1.5S}\sqrt{\log (1/\delta)}}{\epsilon_2 \sqrt{\omega}}  \right) \right)
\end{align*}

\end{proof}
\section{Addtional Experiment Information}
\subsection{DPLSF Details}

To approximate the effect of efficiently decodable net, the authors in \cite{chang2021locally} proposed a heuristic solution based on \emph{locality sensitive hashing forest}.
In the central DP implementation~\cite{googledpclustering}, they use the \emph{SmiHash}~\cite{charikar2002simhash} as the locality sensitive hash (LSH) function.
All the records of a data party are hashed by $L$ LSH functions and transferred to $L$-bit hash strings.
By the property of the LSH function, the records similar to each other have higher chance to have the same hash output. 

A trie (prefix tree) is built based on the hashed points level by level from the root, which is also called LSH tree.
The root node contains all data points.
The counts of the data points are calculated by the Laplace mechanism with the sensitivity 1 and stored within the node.
A node at level $l$ will be branched into two new nodes at level $l+1$ if it has noisy count larger than $3\theta$.
The data points in the node will be partitioned to either of the children node based on their hash outputs' $l+1$-th bit.
Because the hashes for each data point has length $L$, the height of the trie is at most $L$.

After building the LSH trie, the means of the data points assigned to the leaves are calculated by 1) computing the noisy sum of data points in a leave by Gaussian mechanism; 2) dividing the noisy sum by the noisy count stored in the leaves.
After such operation, the leaves with the privatized mean and privatized count form a differentially private coreset of the dataset.
Finally, any favorable \kmeans algorithm can be used on the private coreset to generate the final $k$ centers.
In the implementation, $L=20$ and $\theta=\min \left\{ 10 \sigma \sqrt{m}, \lfloor\frac{n}{2k}\rfloor \right\}$, where the $\sigma$ is the variance of Gaussian noise when calculating the means of the data points.
The intuition for $\theta$ is that with such parameter setting, the error of the average is expect to around $0.1$.

\subsection{\ldpca Details}
\label{app:ldp}
When $k' \leq 3e^{\epsilon_2} + 2$, data parties use \grr$: [k'] \rightarrow [k']$.
Denote the randomized output of the \grr as $v\rel{\id}{\ell}$.
The output of \grr can be characterize as
\begin{align*}
    \Pr{v\rel{i}{\ell} = a} = \begin{cases}
    p=\frac{e^{\epsilon_2}}{e^{\epsilon_2} + k' - 1}, \text{ if } a = \phi_{\centroids\rel{}{\ell}}(x\rel{i}{\ell}) \\
    q=\frac{1}{e^{\epsilon_2} + k' - 1}, \text{ if } a \neq \phi_{\centroids\rel{}{\ell}}(x\rel{i}{\ell}) \\
    \end{cases}
\end{align*}
The membership encoding information sent to the central server is the perturbed values $\mathbf{I}\rel{}{\ell}=\left[ v\rel{1}{\ell}, \ldots, v\rel{n}{\ell}\right]$.

When $k' > 3e^{\epsilon_2} + 2$, data parties use \olh.
Each user is coupled with a hash function $H\rel{\zeta}{\ell}: [k'] \rightarrow [\lfloor e^{\epsilon_2} + 1 \rfloor]$.
The hash outputs $H\rel{\zeta_\id}{\ell}\left(\phi_{\centroids\rel{}{\ell}}(x\rel{\id}{\ell}) \right)$ are randomized with GRR and sent as tuples $\mathbf{I}\rel{}{\ell}=\left[ ( H\rel{\zeta_1}{\ell}, v\rel{1}{\ell}), \ldots, ( H\rel{\zeta_n}{\ell}, v\rel{n}{\ell}) \right]$ to the central server.
With OLH, the randomized result can support the true label with probability $p=\frac{1}{2}$, but will support a wrong partition with $q=\frac{1}{e^{\epsilon_2}+1}$.

After receiving the membership encoding information from the data parties, the server can instantiate the \weightestimate as the following.
It has a tuple for each user $(v\rel{i}{1}, \ldots, v\rel{i}{S})$ if using \grr, or  $(H\rel{i}{1}, \ldots,  H\rel{i}{S}, v\rel{i}{1}, \ldots, v\rel{i}{S})$ if using \olh.
For a grid node $\grid_{(a_1, \ldots, a_S)}$, the support of this node is defined as 
\begin{align*}
    \support\left(\grid_{(a_1, \ldots, a_S)}\right) = \begin{cases}
    \{i \mid \forall \ell \in [S], v\rel{i}{\ell} = a_\ell\}, \text{ for \grr} \\
    \{i \mid \forall \ell \in [S], v\rel{i}{\ell} = H\rel{\zeta_i}{\ell}(a_\ell) \}, \text{ for \olh}
    \end{cases}
\end{align*}
If $\Delta_{HD}$ be the hamming distance between the local centers lists assembling two grid nodes $g = \grid_{(a_1, \ldots, a_S)}$ and $g'=\grid_{(a'_1, \ldots, a'_S)}$, the probability of a data point should be assigned to $g$ but perturbed to support $g'$ is
\begin{align*}
    \Pr{i \in \support(g') \mid \forall \ell \in [S], \phi_{\centroids\rel{}{\ell}}(x\rel{i}{\ell}) =a_{\ell} } = p^{S-\Delta_{HD}}q^{\Delta_{HD}}
\end{align*}
With the probabilities between all $g$ and $g'$, the central server can build a probability transition matrix $\mathbf{P}$, where each column represents the real assignment and each row represents a possible perturbed assignment.
Let $\hat{w}$ as the count vector for the supports observed.
An unbiased estimates for the intersection cardinality vector $w(\grid)$ can be calculated by solving a linear equation $ w(\grid)= \mathbf{P}^{-1} \hat{w}$.

Based on the known LDP protocol error analysis (Proposition 10 in \cite{wang2019answering}), the variance of a estimated weight is in the order $O(\frac{n}{\epsilon_2^{2S}})$.
This \ldpca introduces large randomness when $\epsilon_2$ is small, or the number of data parties is large.

\subsection{More Ablation Study Results}

\mypara{Intersection cardinality accuracy comparison.} 
The Figure~\ref{fig:ca-estimate-error-extra} shows the intersection cardinality accuracy comparison on the Taxi and Letter dataset.
As we can see, the \dpfmpsimprove produces lower cardinality relative error on the Taxi dataset for most $\epsilon$ values.
However, the \indca methods has smaller error when $\epsilon < 2$ for the Letter dataset as shown in Figure~\ref{fig:letter_ca_2way_S=2} and \ref{fig:letter_ca_2way_S=4}.
The first reason is that the attributes of the Letter is relatively independent to each other in the sense that the attributions' Pearson correlation is more closer to 0.
The second reason is that the number of records of Letter dataset is smaller than Taxi or Loan, so the relative errors of the sketch approach has larger impact over the true cardinalities.
\begin{figure*}
     \centering
     \begin{subfigure}[b]{0.5\textwidth}
        \centering
        \includegraphics[width=\textwidth]{figure/intersection/intersection-legend-line.png}
    \end{subfigure}
    \\
    \begin{subfigure}[b]{0.24\textwidth}
    \centering
        \includegraphics[width=\textwidth]{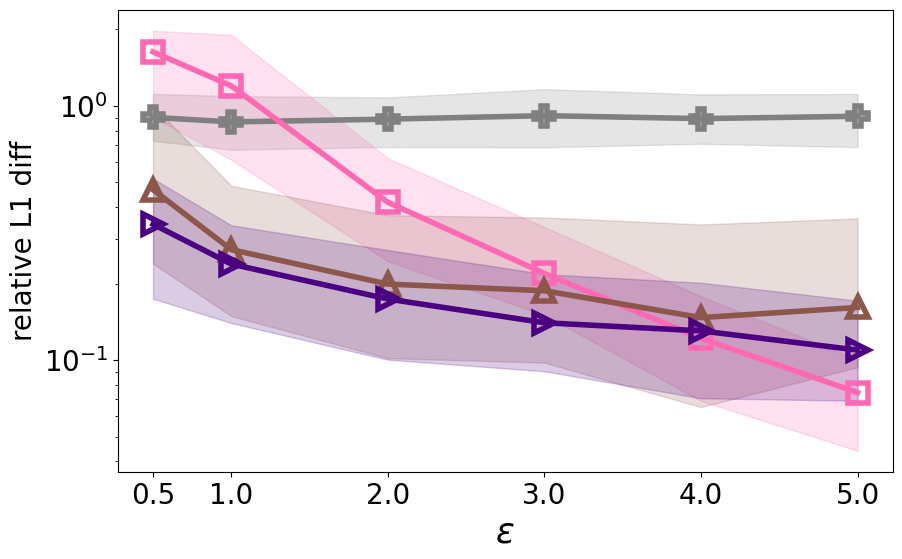}
        \caption{Taxi $S=2$}
        \label{fig:taxi_ca_2way_S=2}
    \end{subfigure}
    \begin{subfigure}[b]{0.24\textwidth}
        \centering
        \includegraphics[width=\textwidth]{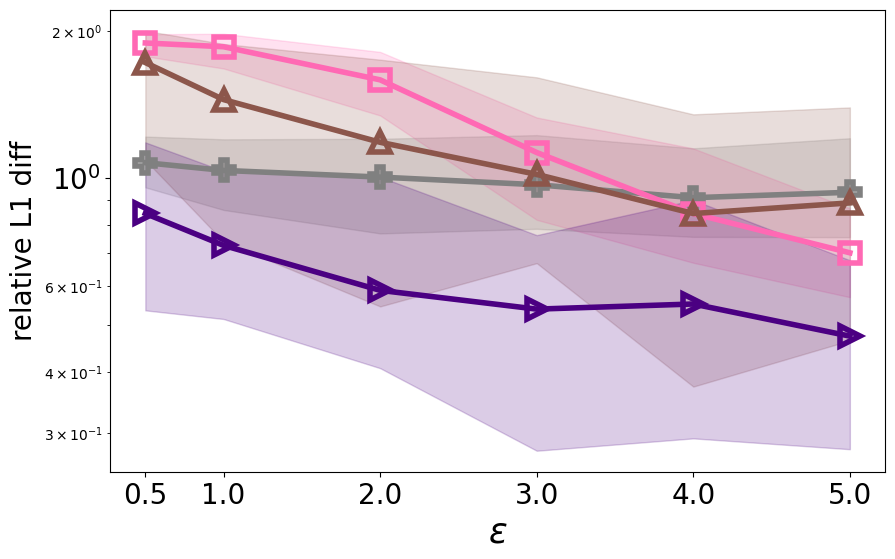}
        \caption{Taxi $S=4$}
        \label{fig:taxi_ca_2way_S=4}
     \end{subfigure}
    \begin{subfigure}[b]{0.24\textwidth}
        \centering
        \includegraphics[width=\textwidth]{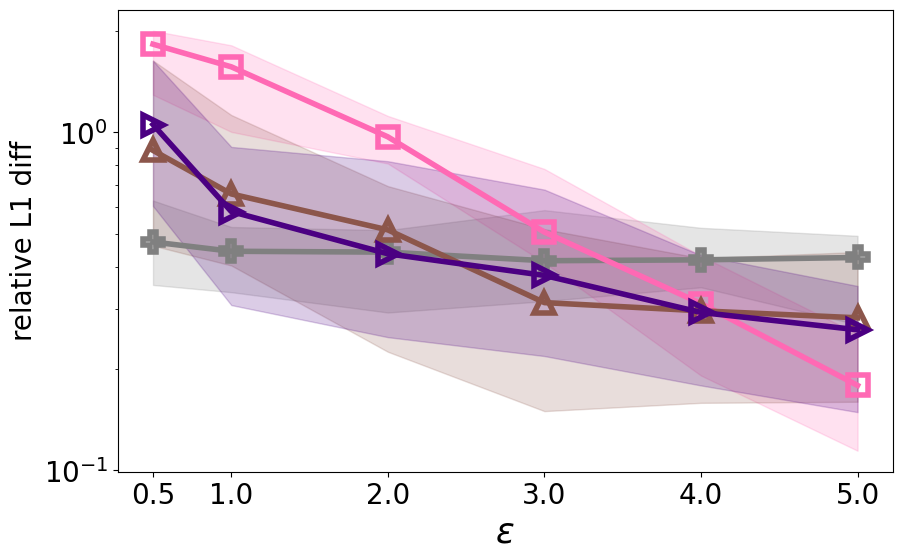}
        \caption{Letter $S=2$}
        \label{fig:letter_ca_2way_S=2}
    \end{subfigure}
    \begin{subfigure}[b]{0.24\textwidth}
        \centering
        \includegraphics[width=\textwidth]{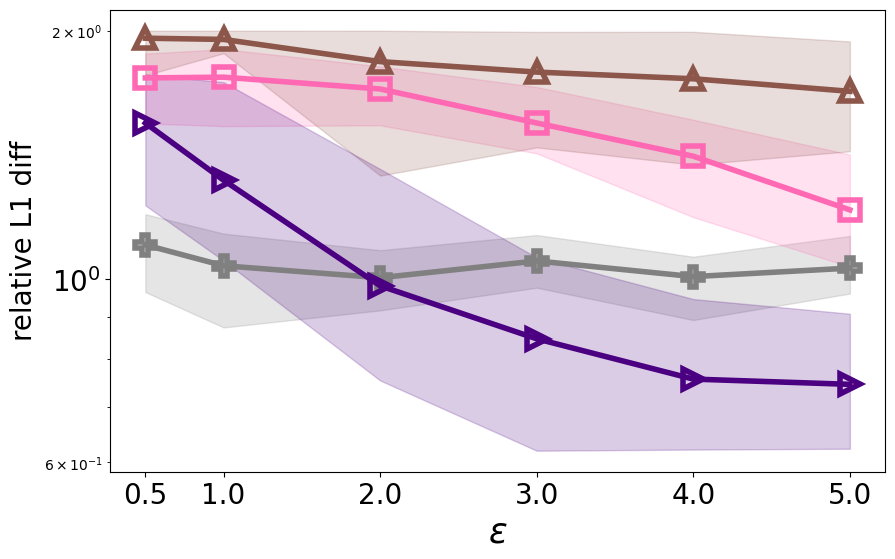}
        \caption{Letter $S=4$}
        \label{fig:letter_ca_2way_S=4}
    \end{subfigure}
    \vspace{-0.4cm}
     \caption{relative error of estimate the intersection cardinalities}
     \label{fig:ca-estimate-error-extra}
\end{figure*}

\mypara{Impact of the private local clustering or intersection cardinality to the final loss.}
The additional results on the Taxi and Letter datasets are shown in Figure~\ref{fig:break-error-extra}. 
Similar to the results in Figure~\ref{fig:break-error}, enforcing the \localclustering private but using the non-private intersection cardinality estimation gives the cost closer to the end-to-end private ones when $S=2$;
but the errors introduced by adapting a privacy-preserving solution for either component have a similar effect on the final loss when $S=4$.

\begin{figure*}
     \centering
     \begin{subfigure}[b]{0.7\textwidth}
        \centering
        \includegraphics[width=\textwidth]{figure/break/break-loss-legend-line.png}
    \end{subfigure}
    \\
    \begin{subfigure}[b]{0.24\textwidth}
    \centering
        \includegraphics[width=\textwidth]{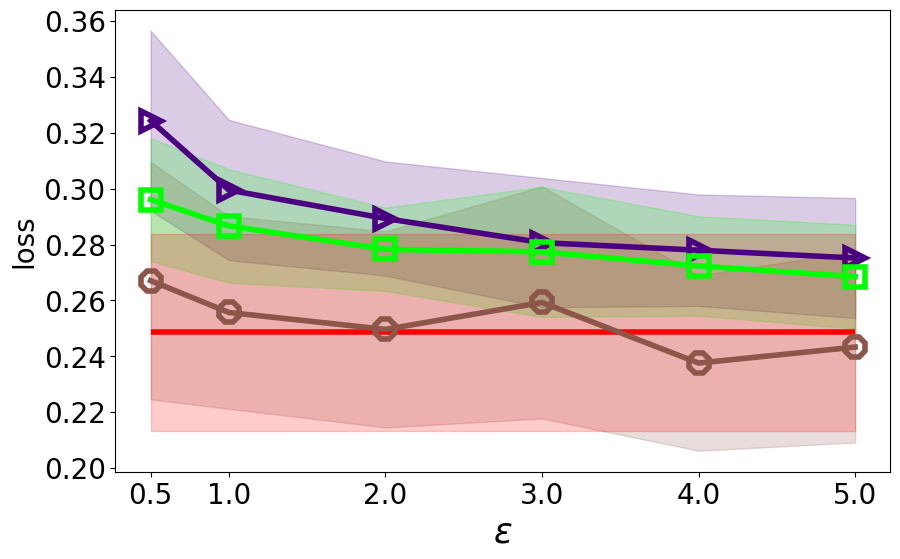}
        \caption{Taxi $S=2$}
        \label{fig:taxi_break_S=2}
    \end{subfigure}
    \begin{subfigure}[b]{0.24\textwidth}
        \centering
        \includegraphics[width=\textwidth]{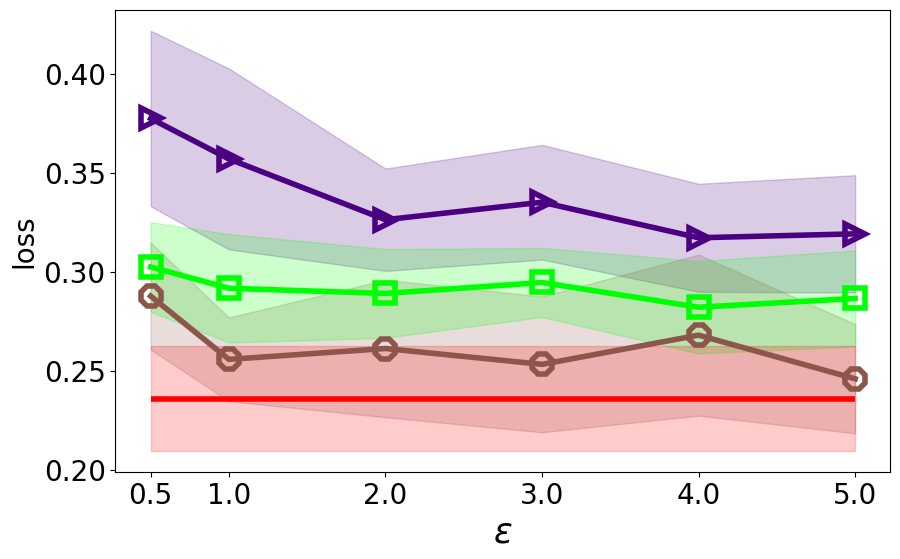}
        \caption{Taxi $S=4$}
        \label{fig:taxi_break_S=4}
    \end{subfigure}
    \begin{subfigure}[b]{0.24\textwidth}
        \centering
        \includegraphics[width=\textwidth]{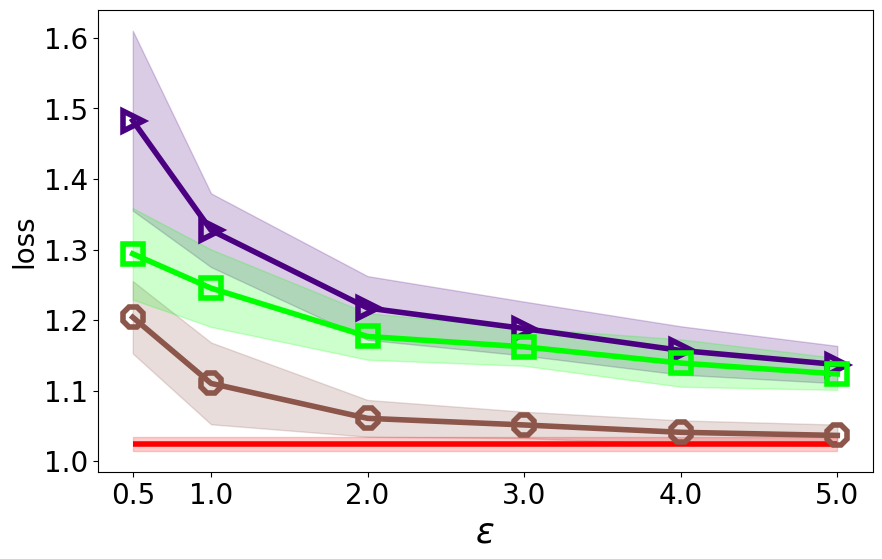}
        \caption{Letter $S=2$}
        \label{fig:letter_break_S=2}
    \end{subfigure}
    \begin{subfigure}[b]{0.24\textwidth}
        \centering
        \includegraphics[width=\textwidth]{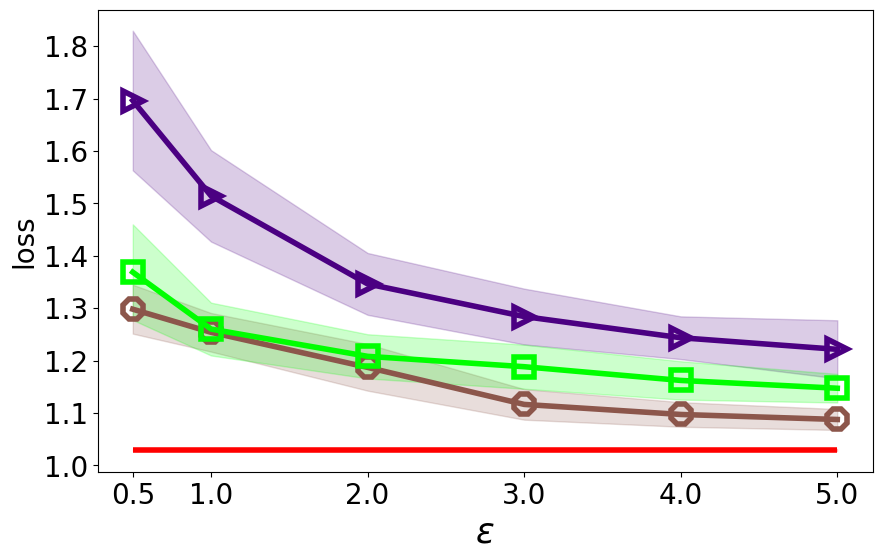}
        \caption{Letter $S=4$}
        \label{fig:letter_break_S=4}
    \end{subfigure}
    \vspace{-0.4cm}
     \caption{Comparing the impact of (enforcing privacy on) different components }
     \label{fig:break-error-extra}
\end{figure*}

\mypara{Extra local $k'$ experiments}
Figure~\ref{fig:localk-extra} shows additional experiments on Taxi, Loan and letter datasets of varying $k'$.
The trends are similar to Figure~\ref{fig:localk} on the synthetic dataset.
\begin{figure*}
    \centering
    \begin{subfigure}[b]{0.3\textwidth}
        \centering
        \includegraphics[width=\textwidth]{figure/localk/localk-loss-legend-line.png}
    \end{subfigure}
    \\
     \begin{subfigure}[b]{0.24\textwidth}
         \centering
         \includegraphics[width=\textwidth]{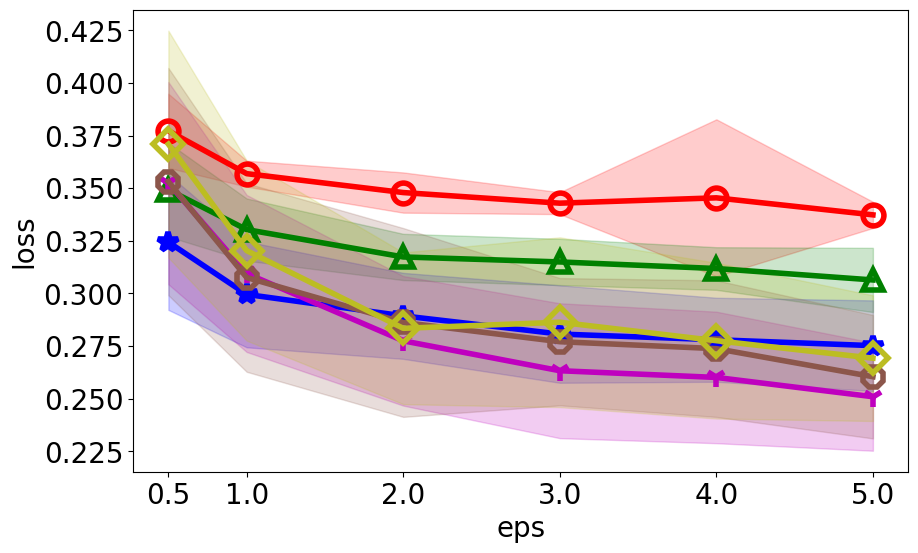}
         \vspace{-0.5cm}
         \caption{Taxi, $S=2$}
         \label{fig:taxi_localk2}
     \end{subfigure}
     \begin{subfigure}[b]{0.24\textwidth}
         \centering
         \includegraphics[width=\textwidth]{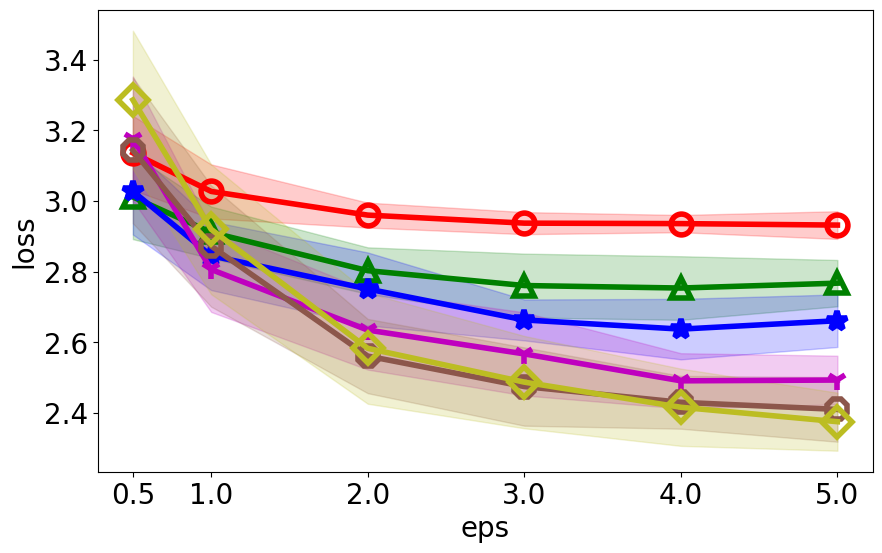}
         \vspace{-0.5cm}
         \caption{Loan, $S=2$}
         \label{fig:loan_localk2}
     \end{subfigure}
     \begin{subfigure}[b]{0.24\textwidth}
         \centering
         \includegraphics[width=\textwidth]{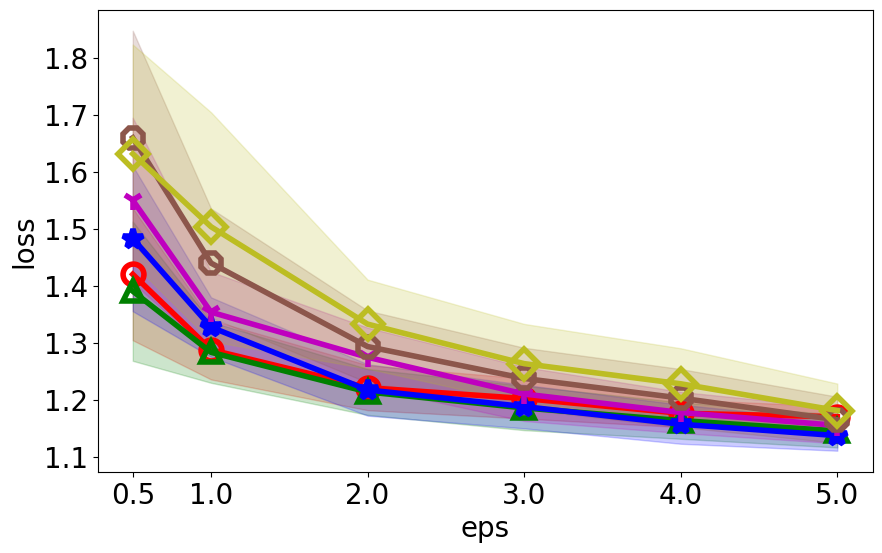}
         \vspace{-0.5cm}
         \caption{Letter, $S=2$}
         \label{fig:letter_localk2}
     \end{subfigure}
     \\
     \begin{subfigure}[b]{0.24\textwidth}
         \centering
         \includegraphics[width=\textwidth]{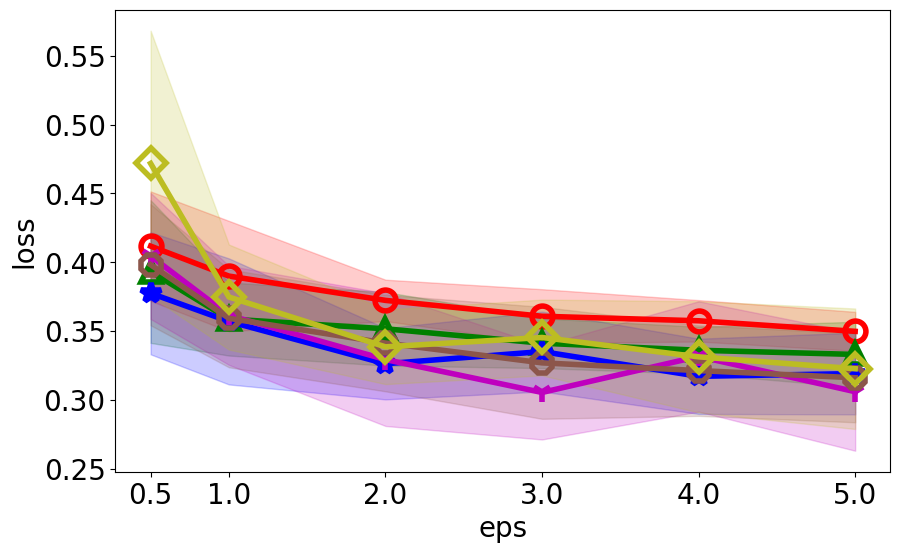}
         \vspace{-0.5cm}
         \caption{Taxi, $S=4$}
         \label{fig:taxi_localk4}
     \end{subfigure}
     \begin{subfigure}[b]{0.24\textwidth}
         \centering
         \includegraphics[width=\textwidth]{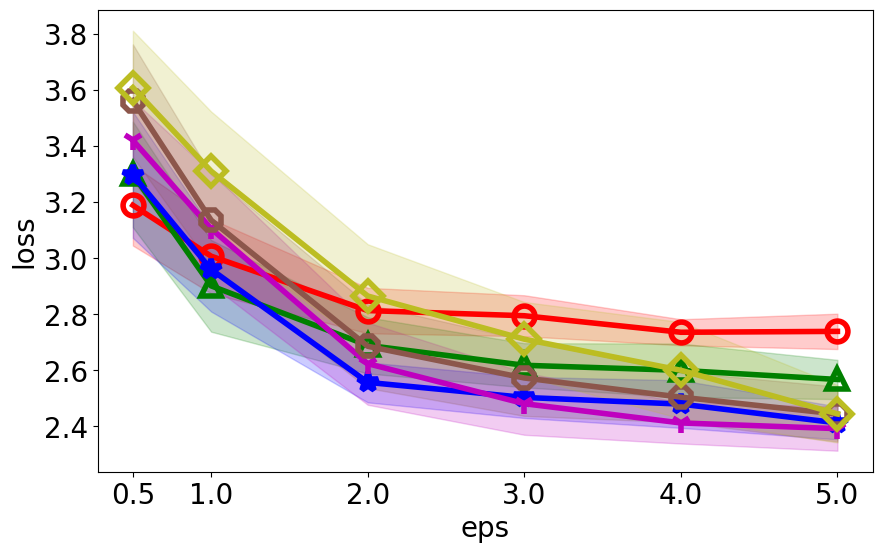}
         \vspace{-0.5cm}
         \caption{Loan, $S=4$}
         \label{fig:loan_localk4}
     \end{subfigure}
     \begin{subfigure}[b]{0.24\textwidth}
         \centering
         \includegraphics[width=\textwidth]{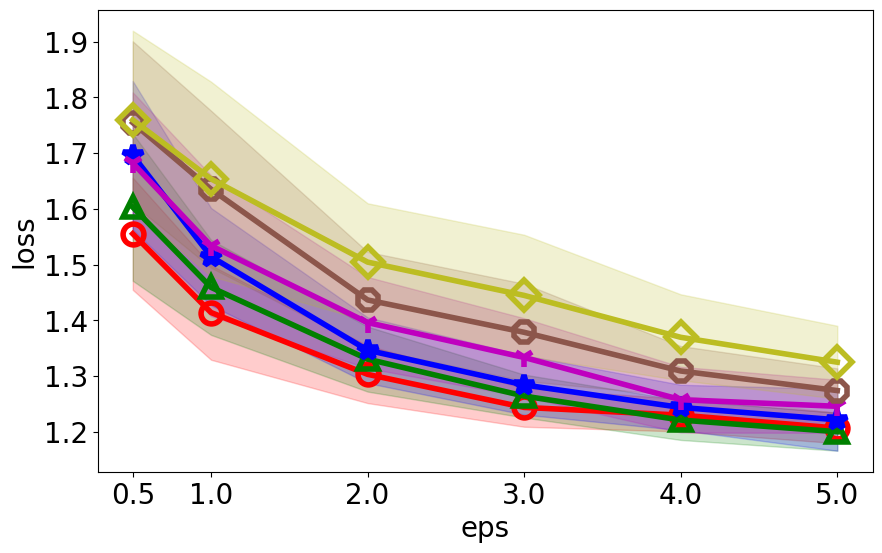}
         \vspace{-0.5cm}
         \caption{Letter, $S=4$}
         \label{fig:letter_localk4}
     \end{subfigure}
    \vspace{-0.3cm}
    \caption{Different local $k'$ with different privacy budget on real world datasets}
    \label{fig:localk-extra}
\end{figure*}

\begin{figure*}
    \centering
    \begin{subfigure}[b]{0.35\textwidth}
        \centering
        \includegraphics[width=\textwidth]{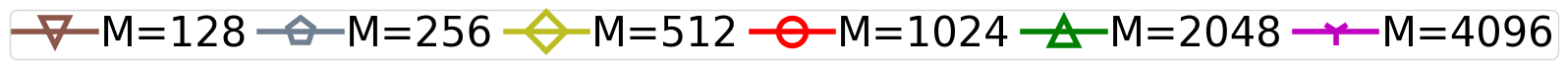}
    \end{subfigure}\\
    \begin{subfigure}[b]{0.23\textwidth}
        \centering
        \includegraphics[width=\textwidth]{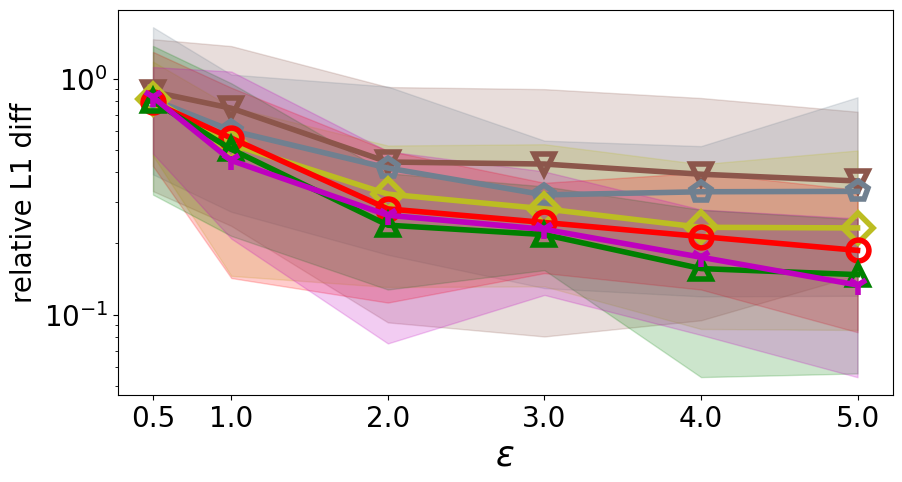}
        \vspace{-0.6cm}
        \caption{Mixed Gaussian $S=2$}
        \label{fig:vary_M_S=2}
    \end{subfigure}
    \begin{subfigure}[b]{0.23\textwidth}
        \centering
        \includegraphics[width=\textwidth]{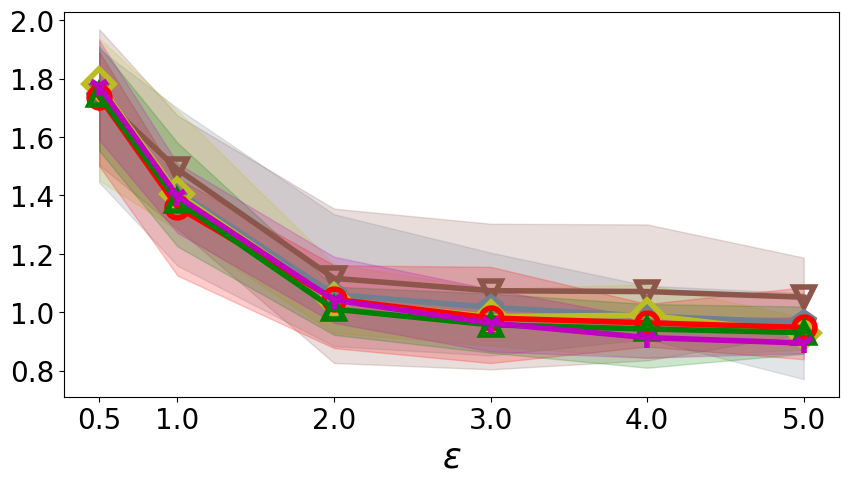}
        \vspace{-0.6cm}
        \caption{Mixed Gaussian $S=4$}
        \label{fig:vary_M_S=4}
    \end{subfigure}
    \vspace{-0.4cm}
    \caption{Intersection error with different FM repetition $M$}
    \label{fig:vary_M}
    \vspace{-0.4cm}
\end{figure*}
\mypara{Impact of FM sketch repetition $M$.}
On the other hand, we show in Figure~\ref{fig:vary_M_S=2} that the larger $M$ can provide more accurate weight estimates (intersection cardinalities) when $M \leq 2048$.
However, the benefit will vanish when the $M$ is large enough, i.e., comparing $M = 2048$ and $4096$.

\begin{table}
\resizebox{0.95\columnwidth}{!}{%
\begin{tabular}{c|c|c|c|c|c}
\toprule
\multirow{2}{*}{$\epsilon$}  &\multirow{2}{*}{method} & \multicolumn{2}{c|}{ \kmeans loss} & \multicolumn{2}{c}{V-measure score} \\
& &$S=2$ & $S=4$ &$S=2$ & $S=4$\\
\midrule
\multirow{2}{*}{$\infty$} 
& VFL non-priv & 0.0763 & 0.0772 & 0.9781 & 0.9989\\
& central non-priv & \multicolumn{2}{c|}{0.0763} &  \multicolumn{2}{c}{1.0} \\
\midrule
\multirow{4}{*}{$1$}  
& \dpfmpsimprove &  0.7193 & 1.1502 & 0.9441 &  0.8771  \\
& \indca & 1.1724 & 1.5676 & 0.8752 & 0.8210\\
& \ldpimprove & 1.2716 & 1.9898 & 0.8638 & 0.6716\\
& DPLSF & \multicolumn{2}{c|}{0.1595} & \multicolumn{2}{c}{0.9945}\\
\midrule
\multirow{4}{*}{$4$}  
& \dpfmpsimprove &  0.1525 & 0.4016 & 0.9850 & 0.9868\\
& \indca & 0.9640 & 1.3083 & 0.9106 & 0.9206\\
& \ldpimprove & 0.3069 & 0.9945 & 0.9841 &  0.9440\\
& DPLSF & \multicolumn{2}{c|}{0.1240} &  \multicolumn{2}{c}{0.9999}\\
\bottomrule
\end{tabular}
}
\caption{Detailed experimental results on Mixed Gaussian dataset with different metrics.}
\vspace{-0.5cm}
\label{table:cross metric}
\end{table}


\subsection{Dataset Details}
The details of the real-world datasets are listed in Table~\ref{table:attributes}.
The Silhouette of different datasets are shown in Figure~\ref{fig:silhouette}.
Because both Taxi and Loan have peak around $k=5$, we pick $k=5$ for the experiments in this paper.

\paragraph{Partition of Taxi dataset.}
\begin{sloppypar}
When we consider $S=2$, we partition the Taxi data attributes with \path{[pickup_datetime, pickup_longitude, pickup_latitude, passenger_count, ]} and \path{[dropoff_datetime, dropoff_longitude, dropoff_latitude, trip_duration]};
when $S=4$, we partition the it with \path{[ dropoff_longitude, pickup_longitude], [passenger_count, dropoff_datetime],
[pickup_latitude, dropoff_latitude]}, and \path{ [pickup_datetime, trip_duration]}.
\end{sloppypar}

\paragraph{Partition of Letter dataset.}
When we consider $S=2$, we partition the Letter dataset according to attribute indices: [1, 3, 6, 7, 9, 12, 14, 15] and [2, 4, 5, 8, 10, 11, 13, 16]; 
When $S=4$, we partition as the following: [1, 6, 9, 14],
[2, 5, 10, 16],
[3, 7, 12, 15],
and [4, 8, 11, 13].

\begin{table}
\begin{tabular}{p{2cm}|p{5.5cm}}
 \toprule
Dataset & Attributes \\
\hline
New York Taxi & \path{pickup_datetime, 
    dropoff_datetime, 
    passenger_count, 
    trip_duration,
    pickup_longitude, 
    pickup_latitude,  
    dropoff_longitude,
     dropoff_latitude} \\
\hline
Loan & \path{AMT_CREDIT,
    CNT_FAM_MEMBERS,
    ENTRANCES_AVG,
    DEF_30_CNT_SOCIAL_CIRCLE,
    APARTMENTS_AVG,
    OBS_30_CNT_SOCIAL_CIRCLE,
    AMT_GOODS_PRICE,
    FLOORSMAX_AVG,
    DEF_60_CNT_SOCIAL_CIRCLE,
    LIVINGAREA_AVG,
    OBS_60_CNT_SOCIAL_CIRCLE,
    CNT_CHILDREN,
    LIVINGAPARTMENTS_AVG,
    AMT_ANNUITY,
    ELEVATORS_AVG,
    COMMONAREA_AVG} \\
\hline
Letter & all attributes except the label \\
\bottomrule
\end{tabular}
\caption{Details of the datasets.}
\label{table:attributes}
\end{table}
\begin{figure*}
    \centering
    \centering
     \begin{subfigure}[b]{0.3\textwidth}
         \centering
         \includegraphics[width=\textwidth]{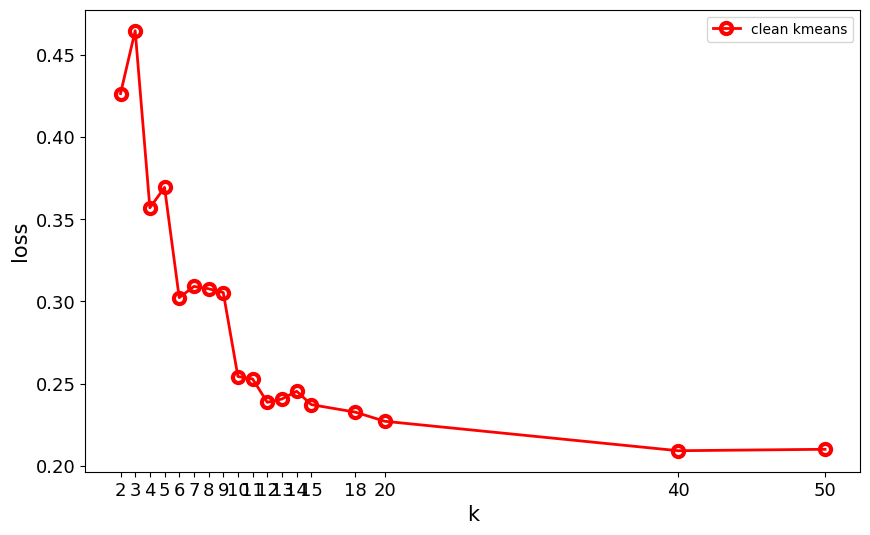}
         \caption{Taxi}
         \label{fig:silhouette_taxi}
     \end{subfigure}
     \begin{subfigure}[b]{0.3\textwidth}
         \centering
         \includegraphics[width=\textwidth]{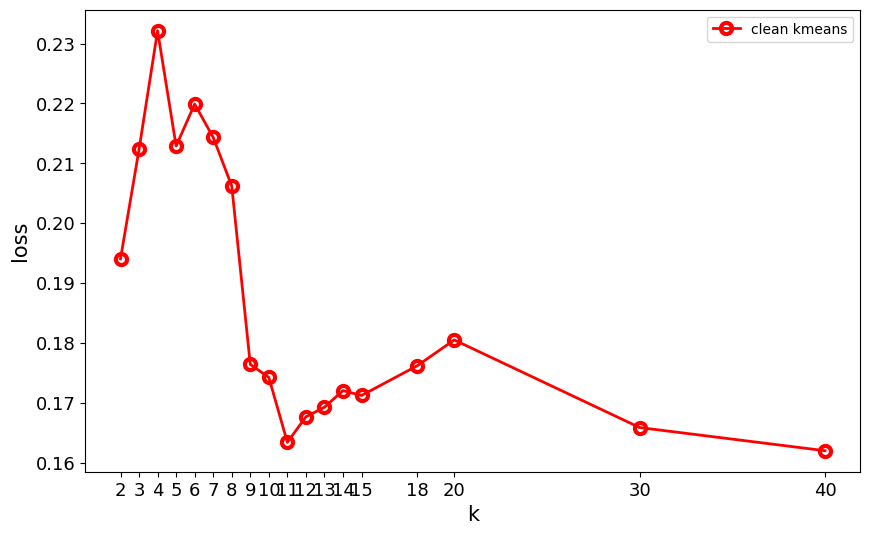}
         \caption{Loan }
         \label{fig:silhouette_loan}
     \end{subfigure}
     \begin{subfigure}[b]{0.3\textwidth}
         \centering
         \includegraphics[width=\textwidth]{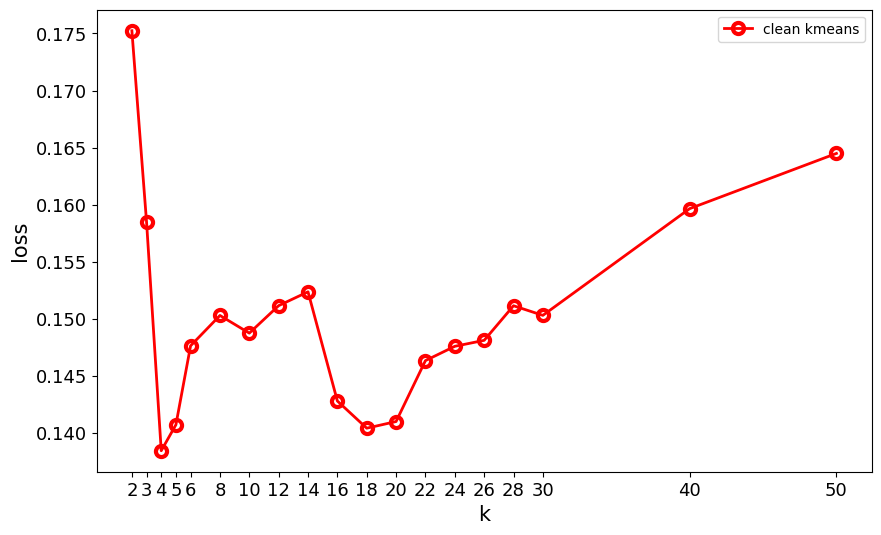}
         \caption{Letter}
         \label{fig:silhouette_letter}
     \end{subfigure}
    \caption{Silhouette score for reasoning $k$ for experiments.
    }
    \label{fig:silhouette}
\end{figure*}

\end{document}